\newcommand{\remove}[1]{}
\newtheorem{myclaim}{Claim}
\newtheorem{proposition}{Proposition}
\newtheorem{corollary}{Corollary}
\newtheorem{definition}{Definition}
\newtheorem{lemma}{Lemma}
\newtheorem{problem}{Problem}
\newtheorem{theorem}{Theorem}
\newtheorem{remark}{Remark}
\newenvironment{proof}{\noindent{\bf Proof\@:}}{\hfill $\Diamond$\\}
\newenvironment{theoremproof}[1]{\noindent{\bf Proof of Theorem #1\@:}}{\hfill $\Diamond$\\}
\newenvironment{propositionproof}[1]{\noindent{\bf Proof of Proposition #1\@:}}{\hfill $\Diamond$\\}
\newcommand\G{\mathbold{G}}
\newcommand\SG{\mathbold{H}}
\date{\today}
\title{A simple algorithm for sampling colourings of\\ $G(n,d/n)$ up to Gibbs Uniqueness Threshold
\footnote{Parts of this  work appeared  in SODA 2012 \cite{mySampling} and 
ESA 2014 \cite{mySamplingUp2Uniqueness}.}}
\author{
Charilaos Efthymiou\footnote{This work is  supported by  Deutsche Forschungsgemeinschaft (DFG)
grant  EF 103/11}\\ 
Goethe University\\
Frankfurt am Main, 60325, Germany \\
{\tt E-mail}: efthymiou@gmail.com
}
\begin{document}
\maketitle
\pagestyle{plain}

\begin{abstract}

\noindent
Approximate random $k$-colouring of a graph $G$ is a  well studied problem in computer science and 
statistical physics.  It amounts to constructing a $k$-colouring of $G$ which is distributed  close to 
{\em Gibbs distribution} in polynomial time.  Here, we deal with the problem when the underlying graph 
is an instance of Erd\H{o}s-R\'enyi random graph $G(n,d/n)$, where $d$ is a sufficiently large constant.

We propose a novel efficient algorithm for approximate random $k$-colouring
$G(n,d/n)$ for any $k\geq (1+\epsilon)d$.  To be more specific, with probability at least $1-n^{-\Omega(1)}$ over the
input instances $G(n,d/n)$ and for $k\geq (1+\epsilon)d$,  the algorithm returns a 
$k$-colouring which is distributed within total variation distance $n^{-\Omega(1)}$ from
the Gibbs distribution of the input graph instance.

The algorithm we propose is neither a MCMC one nor  inspired by the message passing algorithms proposed
by statistical physicists.   Roughly the idea is as follows: Initially we remove sufficiently many
edges of the input graph. This results in a ``simple graph" which can be $k$-coloured  randomly efficiently. 
The algorithm colours randomly this simple graph.
Then it puts  back the removed edges one by one.   Every time  a new  edge is put back the
algorithm  updates the  colouring of the graph  so that the colouring remains  random.

The performance of the algorithm depends heavily on certain  spatial correlation decay 
properties of the Gibbs distribution.

\medskip
\medskip
\noindent
{\bf Key words:} Random colouring, sparse random graph, efficient algorithm.\\ \vspace{-.3cm}

\noindent
{\bf AMS subject classifications:} Primary 68R99, 68W25,68W20 Secondary: 82B44
\end{abstract}

\newpage

\section{Introduction}\label{sec:intro}	

\noindent{\em
Let $\G=G(n,d/n)$ denote the random graph on the vertex set $V(\G)=\{1,\ldots,n\}$ where each edge
appears independently with probability $d/n$, for a sufficiently large fixed number $d>0$.
}
\\

Approximate random $k$-colouring of a graph $G$ is a   well studied problem. 
It amounts to constructing a $k$-colouring of $G$ which is distributed close to 
{\em Gibbs distribution}, i.e. the uniform distribution over all the $k$-colourings of $G$, in polynomial
time. Here, we consider the problem when the underlying graph is an instance of Erd\H{o}s-R\'enyi
random graph $\G=G(n,d/n)$. This  problem is a rather  natural  one  and it has gathered focus in 
computer science but also in   statistical physics.

From a technical perspective,  the main challenge is to  deal with the so called  {\em effect of high degree}
vertices.
That is, there is a relative large fluctuation on the degrees in  $\G$.  E.g. it is elementary to verify that the 
typical  instances of $\G$ have maximum degree $\Theta\left(\frac{\log n}{\log \log n}\right)$, while in these
instances more  than  $1-e^{-O(d)}$  fraction of  the vertices  have  degree in the interval $(1\pm \epsilon)d$.
Usually the bounds for  sampling $k$-colourings w.r.t. $k$ are expressed it terms of the {\em maximum  degree} 
e.g. \cite{RCGeneralBound,COLPaperA,RCLocalySparse,RCLattice,MartinelliTrees}. However, for $\G$ it is
natural to have bounds for $k$ expressed in terms of the {\em expected degree} $d$, rather than the maximum
degree.

The related work on this problem can be divided into  two strands. The first one 
is based on {\em Markov Chain Monte Carlo} (MCMC) approach.
There,  the goal is to prove that some appropriately defined Markov Chain\footnote{e.g.  Glauber dynamics} over the 
$k$-colourings of the input graph  is rapidly mixing.   The MCMC approach
to the problem is well studied   \cite{myMCMC,old-GnpSampling,mossel-colouring-gnp}.
The  most recent of these works,  i.e. \cite{myMCMC}, shows that  the well known Markov chain 
{\em Glauber block dynamics}  has polynomial  mixing time for  typical instances of $\G$ as long as the number  of 
colours $k\geq \frac{11}{2}d$. This is the lowest bound for $k$ as far as MCMC sampling is concerned.

The second strand has been based on message passing algorithms such as \emph{Belief propagation} 
\cite{SP-heuristic},  which are closely related to the (non-rigorous) statistical mechanics  techniques for the analysis 
of the random graph colouring problem.   These message passing algorithms aim to approximate (conditional) \emph{marginals}
of the Gibbs distribution at each vertex .  Given the marginals, a colouring  can be sampled by choosing 
a vertex $v$, assigning it a random colour  $i$ according to the marginal distribution, and repeating the procedure
with the colour of $v$ fixed to $i$. Of course, the challenge is to prove that the algorithm does indeed yield sufficiently 
good estimates of the marginals. 
In a similar spirit, and subsequently to this work,
 the authors of  \cite{YinFPAUS} propose  an approximate random colouring algorithm for 
$\G$ which uses the so-called Weitz's computational tree approach, from \cite{Weitz}, to compute Gibbs marginals 
for colorings. 
This algorithms requires at least $3d$ many colours for the running time to be polynomial, i.e. $O(n^s)$ for some  $s=s(d)>0$.

In this work we obtain a considerable improvement over the best previous results by presenting a novel algorithm that 
only requires $k=(1+\epsilon)d$ colours. The new algorithm does not fall into any of the  categories discussed 
above.  Instead, it rests on the following approach:
Given the input graph, first remove  sufficiently many vertices such that the resulting graph has a ``very simple" 
structure and it can be randomly $k$-coloured {\em efficiently}.  Once we have a random colouring of this, simple, 
graph we  start adding one by one all the edges we have  removed in the first place. 
Each time we put back in the graph an edge we  {\em update}  the colouring so that 
the new graph remains (asymptotically)  randomly coloured. Once the algorithm has  rebuilt the initial graph 
it returns its  colouring.

Perhaps the most challenging part of the algorithm is   to  {\em update} the colouring once we 
have added an extra edge. The problem can be formulated as follows.
Consider two fixed graphs $G$ and $G'$  such that $V(G)=V(G')$ and $E(G')=E(G)\cup \{v,u\}$
for some $v,u\in V(G)$. Given $X$, a random $k$-colouring of $G$, we want to create {\em efficiently} 
a  random $k$-colouring of the slightly more complex graph $G'$.
It is easy to show that if the vertices  $v,u$  have different colour  assignments under 
$X$, then   $X$ is a random $k$-colouring of $G'$.  
The interesting case is when $X(v)=X(u)$.  Then the algorithm should alter the colour assignment of
at least one of the two vertices such that the resulting colouring  is random conditional that
the assignments of $v$ and $u$ are different. Here, we use  an operation which we call  ``switching" 
so as to alter the colouring of only one of the two vertices. 
Roughly speaking, the switching chooses an appropriately large part of $G$, which 
contains only $v$. Then, it repermutes appropriately the colour classes in  this part of 
$G$ so as to get  the updated colouring.

For presenting our results we    use the notion of {\em total variation distance}, which is  a measure
of distance between distributions.
\begin{definition}
For the distributions $\nu_{a}, \nu_{b}$ on $[k]^V$, let 
$|| \nu_{a}-\nu_{b}||$ denote their {\em total variation distance},
i.e.
\begin{displaymath}
|| \nu_{a}-\nu_{b}||=\max_{\Omega' \subseteq [k]^V} | \nu_{a}(\Omega')-\nu_{b}(\Omega')|.
\end{displaymath}
For $\Lambda \subseteq V$ let $||\nu_{a}-\nu_{b}||_{\Lambda}$
be  the total variation distance between the projections of 
$\nu_a$ and $\nu_{b}$ on $[k]^{\Lambda}$.
\end{definition}
\begin{theorem}\label{thrm:GnpAccuracy}
Let $\epsilon>0$ be a fixed number, let $d$ be sufficiently large number and 
fixed $k\geq  (1+\epsilon)d$.
Consider    $\G=G(n,d/n)$ and let $\mu$ the uniform distribution
over the $k$-colouring of $\G$. Let $\hat{\mu}$ be the distribution of the colouring that 
is returned  by our algorithm on input $\G$.

Let $c=\frac{\epsilon}{80(1+\epsilon/4)\log d}$,   with probability at least $1-n^{-c}$  over the input  instances $\G$ it holds that 
\begin{equation}\label{eq:thrm:GnpAccuracy}
||\mu-\hat{\mu}|| = O\left( n^{-c}\right).
\end{equation}
\end{theorem}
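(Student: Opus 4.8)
\medskip
\noindent\emph{Proof strategy.}
The plan is to reinsert the removed edges one at a time and track the law of the colouring the algorithm currently holds, showing that at every intermediate graph it stays within total variation $O(n^{-c})$ of the corresponding uniform distribution; the theorem is the case where all edges have been reinserted. First I would fix a high-probability structural event $\mathcal{A}=\mathcal{A}(\G)$: on $\mathcal{A}$ the graph $\G$ is locally tree-like, its vertices of degree $\gg d$ are well separated, short cycles are sparse, and --- most importantly --- every vertex has, within distance $L=\Theta(\log n/\log d)$, a region whose ``core'' is a small tree-like subgraph on which exact conditional sampling is possible. First- and second-moment estimates for $G(n,d/n)$ give $\Pr[\neg\mathcal{A}]\le n^{-c}$, and it is the scale $L$ that ultimately produces the exponent $c$. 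On $\mathcal{A}$, let $R=\{e_1,\dots,e_m\}$ be the set of edges the algorithm removes, put $G_0=\G\setminus R$ and $G_i=G_{i-1}\cup\{e_i\}$ so that $G_m=\G$, write $\mu_i$ for the uniform distribution on proper $k$-colourings of $G_i$, and write $\hat\mu_i$ for the law of the colouring held by the algorithm after it processes $e_i$ (so $\hat\mu=\hat\mu_m$). The argument then has three parts: (i) $\|\hat\mu_0-\mu_0\|$ is negligible; (ii) a per-step bound $\|\hat\mu_i-\mu_i\|\le\|\hat\mu_{i-1}-\mu_{i-1}\|+\eta_i$; and (iii) $\sum_{i=1}^m\eta_i=O(n^{-c})$.

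Part (i) is the base case: on $\mathcal{A}$ the graph $G_0$ is ``simple'' --- each block is a small tree-like subgraph and each component has bounded treewidth --- so the block/component dynamic programming used by the algorithm samples $\mu_0$ exactly, or within total variation $n^{-\omega(1)}$, in polynomial time. For part (ii), the key object is the update, so I would write $e_i=\{u,v\}$ and use that $\mu_i=\mu_{i-1}(\,\cdot\mid X(u)\ne X(v)\,)$; it then suffices, by a standard conditioning argument, to analyse a single update run from $\mu_{i-1}$. If $X(u)\ne X(v)$ the algorithm keeps the colouring, and this sub-population is exactly $\mu_{i-1}$ restricted to proper colourings of $G_i$; if $X(u)=X(v)$ it runs the ``switching'', which grows a local region $\Lambda_i\ni u$ with $v\notin\Lambda_i$ inside the radius-$L$ tree-like neighbourhood promised by $\mathcal{A}$ and applies a permutation of colour classes inside $\Lambda_i$ --- one chosen to be consistent with the colours on $\partial\Lambda_i$ --- so that $u$ receives a colour $\ne X(v)$. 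I would bound $\eta_i$ by coupling this update with the idealised one that resamples the colouring of that neighbourhood from $\mu_{i-1}$ conditioned on its boundary colours and on $X(u)\ne X(v)$; the coupling error is controlled by (a) the probability that the region-growing rule is forced to leave the radius-$L$ neighbourhood --- a local event whose probability is pushed down to $n^{-1-c}$ by the choice of $L$, using that $k\ge(1+\epsilon)d$ keeps the relevant branching parameter below one --- plus (b) the discrepancy between the Gibbs marginal of the colouring model near $u$ under the true boundary and under the boundary truncated at distance $L$, which is a quantitative \emph{spatial mixing} estimate for $k$-colourings on the Galton--Watson-type neighbourhoods of $\G$ and is exactly where the Gibbs uniqueness hypothesis $k\ge(1+\epsilon)d$ enters. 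Part (iii) is then bookkeeping: $m=O(n)$ and each $\eta_i$ is dominated by its $n^{-1-c}$ term, so $\sum_{i=1}^m\eta_i=O(n^{-c})$; combining (i)--(iii), on $\mathcal{A}$ we get $\|\hat\mu-\mu\|=O(n^{-c})$, and since $\Pr[\mathcal{A}]\ge 1-n^{-c}$ this is \eqref{eq:thrm:GnpAccuracy}. (The bounds $|\Lambda_i|=O(\log n)$ and $m=O(n)$ also give the polynomial running time.)

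\medskip
\noindent\emph{Where the difficulty lies.}
The crux is the switching analysis. The most naive choice of $\Lambda_i$ --- the bichromatic Kempe component of $u$ --- is only logarithmic in size when $k\gtrsim 2d$, which gives the weaker bound of the earlier versions; reaching $k=(1+\epsilon)d$ forces a more economical, ``tailored'' local recolouring, and one must prove simultaneously that (a) the growth rule terminates inside an $O(\log n)$-size tree-like region for a typical instance and colouring, (b) the recolouring it performs is within tiny total variation of resampling that region from the true conditional Gibbs measure, and (c) the resulting per-step error is small enough to survive summation over the $\Theta(n)$ reinsertions. All three rest on sharp correlation-decay estimates for colourings on the sparse, degree-irregular neighbourhoods of $G(n,d/n)$ right down to the uniqueness threshold, where the decay is only polynomially fast --- which is also why the constant $c$ is small and carries the factor $1/\log d$.
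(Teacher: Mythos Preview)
Your three-part skeleton (base case, per-step increment, summation) matches the paper's Theorem~\ref{thrm:Accuracy}, and you are right that spatial mixing is the engine. But two points diverge from the paper in ways that matter.

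First, the switching \emph{is} the plain Kempe chain: $\Lambda_i$ is the full bichromatic component of $v_i$ in $G_i$, not a tailored or truncated region. Your remark that ``the most naive choice \dots\ is only logarithmic in size when $k\gtrsim 2d$'' locates the bottleneck in the wrong place. What fails for $k<2d$ in the earlier analysis is not that the Kempe chain is too large but that the \emph{bound} on the probability it reaches $u_i$ was too crude: one placed a worst-case boundary immediately adjacent to each candidate path, giving a per-vertex factor of at best $1/(k-\Delta)$, which beats the branching factor $d$ only when $k\gtrsim 2d$. The paper keeps the same Kempe chain but pushes the worst-case boundary out to distance $h$ from the path; Jonasson's tree uniqueness (extended in Proposition~\ref{prop:spatial-monotone}) then shows that, on the tree-like neighbourhoods that most path vertices have, each path vertex's marginal is within $o(1/k)$ of uniform regardless of that boundary, yielding a per-vertex factor $\approx 1/k$ (Theorem~\ref{thrm:PrIPGnp}). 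So the improvement to $(1+\epsilon)d$ is purely analytical, not algorithmic; your proposed coupling with an ``idealised resampling of the neighbourhood'' is a different mechanism and is not what the paper does.

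Second, the paper does not prove a deterministic bound $\eta_i\le n^{-1-c}$ on a structural event. Its event $\mathcal{A}=\{\G\in\mathfrak{X}_{n,d}\}$ is very weak (only: $G_0$ is a disjoint union of short cycles and isolated vertices, and $|E(\G)|\le(1+o(1))dn/2$) and has probability $\ge 1-2n^{-2/3}$, not $1-n^{-c}$. What is actually bounded is $\mathbb{E}[\alpha_i]$, the expectation \emph{over the random graph} of the fraction of pathological colourings (Theorem~\ref{thrm:eIsomGnp}); summing these over $i\le r\le(1+o(1))dn/2$ gives $\mathbb{E}\big[\sum_i\alpha_i\,\big|\,\mathcal{A}\big]=O(n^{-c'})$, and Markov's inequality applied to the random variable $\sum_i\alpha_i$ is what simultaneously produces the $O(n^{-c})$ accuracy and the $1-n^{-c}$ probability in the statement. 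Your plan --- fix a rich structural event and argue deterministically on it --- would need, for every $i$ and every path $P$ from $v_i$ to $u_i$, a conditional bound on $\Pr[\mathbf{I}_P=1\mid G_i]$ good enough to survive the union over $\sim n^{\ell-1}$ paths of each length $\ell$; a single path threading through a non-tree-like patch spoils the conditional estimate, and there are far too many paths to exclude all such on a $1-n^{-c}$ event. The averaging-then-Markov route sidesteps this entirely.
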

The proof of Theorem \ref{thrm:GnpAccuracy} appears in Section \ref{sec:AppGnpNew}.

The following theorem  is for the time complexity of the algorithm,   its proof appears 
in Section \ref{sec:AppGnpNew}.
\begin{theorem}\label{thrm:GnpTimeCmplxt}
With probability at least $1-2n^{-2/3}$ over the input  instances $\G$, 
the time complexity of the random colouring algorithm is $O(n^{2})$.
\end{theorem}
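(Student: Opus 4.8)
\noindent\emph{Proof plan.} The plan is to split the running time into the three stages of the algorithm --- (a) extracting the ``simple'' subgraph $G^{*}$ of $\G$ by deleting a suitable set of edges, (b) producing an exact random $k$-colouring of $G^{*}$, and (c) re-inserting the deleted edges one at a time, performing a \emph{switching} after each insertion --- and to show that each of (a) and (b) costs $O(n)$ while (c) costs $O(n^{2})$, on an event of probability at least $1-2n^{-2/3}$.

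For stage (a), one first notes that by a standard Chernoff estimate $|E(\G)|\le dn$ with probability $1-\exp(-\Omega(n))$; conditioned on this, identifying the edges to be removed is a single traversal of $\G$ and hence costs $O(|V(\G)|+|E(\G)|)=O(n)$. For stage (b), by construction $G^{*}$ is a disjoint union of components of a ``very simple'' type (forest-like components of bounded tree-width), each of which admits an \emph{exact} random proper $k$-colouring computed by a linear-time dynamic program: colour a spanning tree root-first (root uniformly, each further vertex uniformly among the colours avoiding its parent) and correct on the $O(1)$ surplus edges of each component. Thus stage (b) is $O(n)$ as well.

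The bulk of the running time is stage (c). The number of re-inserted edges is $|E(\G)|-|E(G^{*})|=O(n)$ on the event above. Processing a re-inserted edge $\{u,v\}$ costs $O(1)$ when the two endpoints receive different colours; when they receive the same colour the algorithm invokes a switching, and the cost of a switching is $O(1)$ per vertex and per edge of the region $S\subseteq V(\G)$ that it explores around $v$, hence $O(|S|)\le O(|V(\G)|+|E(\G)|)=O(n)$ in the worst case. Summing over the $O(n)$ re-insertions gives the claimed $O(n^{2})$ bound. The failure probability is then assembled from two pieces: a structural property of $\G$ guaranteeing that the preprocessing terminates correctly and that every invoked switching is well defined with an explored region behaving as claimed --- which, by the same geometric and correlation-decay analysis that underlies Theorem~\ref{thrm:GnpAccuracy}, fails with probability at most $n^{-2/3}$ --- together with a Markov-type bound on the cumulative switching cost over the internal randomness of the algorithm, which fails with probability at most $n^{-2/3}$; a union bound then yields $1-2n^{-2/3}$.

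The main obstacle is precisely the analysis of the switching in stage (c): showing that it terminates, that the part of $\G$ it explores around $v$ forms a single well-defined block, and that its size (hence its cost) is $O(n)$ on the good event. This is where the local structure of $\G$ (sparsity, and local tree-likeness away from a small set of atypical high-degree or short-cycle vertices) and the spatial mixing of the Gibbs distribution enter, and it is essentially the same analysis required for Theorem~\ref{thrm:GnpAccuracy}; the remaining steps are routine bookkeeping.
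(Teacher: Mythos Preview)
Your decomposition into three stages is right, and the overall $O(n^{2})$ arithmetic works, but you have misidentified both the source of the failure probability $2n^{-2/3}$ and the ``main obstacle''.

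The cost of a single switching is \emph{deterministically} $O(|E(G)|)$: one simply runs a BFS/DFS from $v$ restricted to the two colour classes in question (this is Lemma~\ref{lemma:StepAc} in the paper). There is no need for spatial mixing, no dependence on the internal randomness of the algorithm, and no Markov-type tail bound. Consequently stage~(c) is $O(n^{2})$ on \emph{any} graph with $O(n)$ edges, full stop. The correlation-decay analysis behind Theorem~\ref{thrm:GnpAccuracy} plays no role whatsoever in the running-time bound.

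The $2n^{-2/3}$ is purely structural and comes from Corollary~\ref{cor:G(np)inXnd}: it is the probability that $\G\notin\mathfrak{X}_{n,d}$, i.e.\ that either (i) the short cycles of $\G$ are not pairwise vertex-disjoint (probability $\le n^{-2/3}$, by a first-moment calculation, Lemma~\ref{lemma:seqsubprop}), or (ii) $|E(\G)|>(1+n^{-1/3})dn/2$ (probability $\le\exp(-n^{1/4})$ by Chernoff). Neither piece involves the Gibbs distribution. Two further points: your stage~(a) claim that identifying the edges to remove is a single $O(n)$ traversal is unjustified --- the paper detects, for each edge, whether it lies on a cycle of length $<(\log_{d}n)/9$ by a local BFS of that depth, which is $O(n)$ per edge and $O(n^{2})$ in total (still fine for the theorem); and your stage~(b) sketch (``colour a spanning tree root-first, then correct on the surplus edges'') does \emph{not} produce an exact sample on a cycle --- one has to compute the conditional marginals by dynamic programming, as the paper does.
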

Whether the running time of the algorithm is polynomial or not, depends on certain 
structural properties of the input graph $\G$. 
Mainly, these properties require that the ``short cycles" of $\G$ are disjoint.
 It will be  trivial to  distinguish the instances that can be coloured randomly efficiently 
by our algorithm from those that cannot, see in Section \ref{sec:AppGnpNew} for further details.

\begin{remark}
The region of $k$ for which our algorithm  operates,  coincides with what is 
conjectured  to be the so-called  ``{\em Uniqueness phase}" of the $k$-colourings of $\G$,
e.g. see \cite{ZdKrzk}. 
\end{remark}

\paragraph{Remarks on the accuracy}
Typically, the approximation guarantees we get from algorithms  as those in \cite{myMCMC,YinFPAUS} express 
the running  time of the algorithm as a polynomial of the error in the output. 
The running time and the error of the algorithm here are independent,
in the sense  that  the approximation guarantees do not improve by allowing the algorithm 
run more steps.

\paragraph{Notation} 
Given some graph $G$, we let $V(G)$ and $E(G)$ denote the vertex sets and the edge set, respectively.
Also, we let $\Omega_{G,k}$  be the set of  proper $k$-colourings of $G$. We denote with small letters 
of the greek alphabet  the colourings in $\Omega_{G,k}$, e.g. $\sigma, \eta, \tau$. We use capital letters for the random
variables which take values over the colourings e.g. $X,Y, Z$.  We denote with $\sigma_v, X(v)$ the colour assignment
of the vertex $v$ under the colouring $\sigma$ and $X$, respectively.  Given some  $\sigma\in \Omega_{G,k}$, for every
$i\in [k]$ we let $\sigma^{-1}(i)\subseteq V(G)$ be the colour class of colour $i$ under the colouring $\sigma$.
Finally, for some  integer $h>0$, we let $[h]=\{1,\ldots,h\}$.

\section{Basic Description}\label{sec:basics}

So as to give a basic description of our algorithm, we need to introduce few notions.
Consider  a fixed graph $G$ and let $v$ be a vertex in $V(G)$. 
Let $c,q\in [k]$ be different with each other and   let $\sigma$ be a $k$-colouring  of $G$
such that $\sigma(v)=c$.
We call {\em disagreement graph} $\mathbold{Q}=\mathbold{Q}(G,v,\sigma,q)$,   the maximal,
connected, induced subgraph of $G$ such that $v\in V(\mathbold{Q})$,  while 
$V(\mathbold{Q}) \subseteq \sigma^{-1}(c)\cup \sigma^{-1}(q)$.
\begin{remark}
The concept of  disagreement graph,  in the graph theory literature is  also known as 
Kempe Chain. 
\end{remark}
In Figure \ref{fig:G0},  the disagreement graph  $\mathbold{Q}(G,v,\sigma, ``{\tt green}")$ is the one 
with the fat lines. Note that $\sigma$ specifies a two colouring for the vertices of 
$\mathbold{Q}(G,v,\sigma, ``{\tt green}")$.
\begin{definition}
Consider  $G$, $v$,  $\sigma$ and $q$ as specified above,
as well as the disagreement graph $\mathbold{Q}=\mathbold{Q}(G,v,\sigma,q)$.
The ``$q$-switching of $\sigma$" corresponds to the  colouring of $G$
which is derived by exchanging the assignments in the two colour classes 
in $\mathbold{Q}$.
\end{definition}

\noindent
Figure \ref{fig:G1} illustrates a switching of the colouring in Figure \ref{fig:G0}.
That is, the colouring in Figure \ref{fig:G1} differs from the one in Figure \ref{fig:G0}
in  that we have exchanged the two colour classes of the subgraph with the fat lines.
The $q$-switching of any proper colouring of $G$ is always 
a proper colouring, too.

\begin{figure}
\begin{minipage}{0.5\textwidth}
	\centering
		\includegraphics[height=2.5cm]{./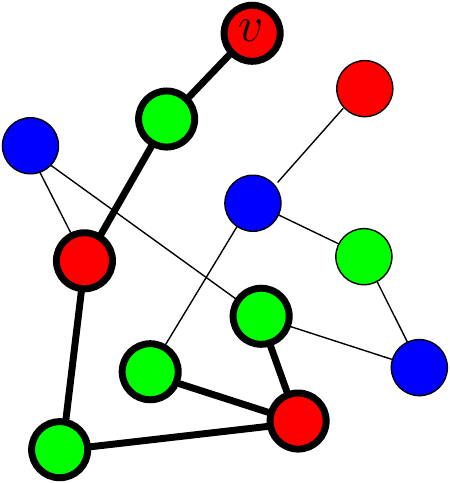}
		\caption{``Disagreement graph''.}
	\label{fig:G0}
\end{minipage}
\begin{minipage}{0.5\textwidth}
	\centering
		\includegraphics[height=2.5cm]{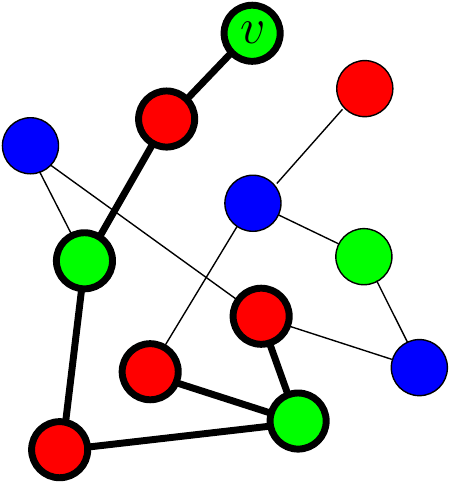}
		\caption{``switching''.}
	\label{fig:G1}
\end{minipage}	
\end{figure}

We proceed with a high level description of the algorithm. The input  is  $\G=G(n,d/n)$  and some integer 
$k\geq (1+\epsilon)d$.  The algorithm is as follows:\\ \vspace{-.2cm}

\noindent
{\bf Set up:}
We construct a sequence  of graphs $G_0,\ldots,G_r$ such that 
$G_{r}$ is  identical to $\G$ and  $G_i$ is a subgraph of  $G_{i+1}$.  
Each $G_i$ is derived by deleting from $G_{i+1}$ the edge $\{v_{i}, u_i\}$.
This edge is chosen at random among those which do not belong to a {\em short cycle} of $G_{i+1}$. 
We call short,  any cycle of length less  than $(\log_d n)/9$. $G_0$ is the graph  we get  when there
are  no other  edges to delete.
\\ \vspace{-.3cm}

\noindent
With probability $1-n^{-\Omega(1)}$, over the instances of $\G$,    the above process generates $G_0$ 
which is simple\footnote{In our case,   $G_0$ is considered simple if it is component structure is as  follows:
Each component is either an isolated vertex, or  a simple isolated cycle.  In Section \ref{sec:AppGnpNew}
we describe how someone can get efficiently a random colouring of such a graph.}  enough  that can be
$k$-coloured  randomly  in  polynomial time. If $G_0$ is not simple, the algorithm cannot proceed and abandons.
Assuming that   $G_0$ is simple, the algorithm  proceeds  as follows: 
\\ \vspace{-.2cm}

\noindent
{\bf Update:}
Take a random colouring of $G_0$.  Let $Y_0$ be that colouring. We get $Y_1,Y_2,\ldots, Y_r$, the colourings
of $G_1,G_2,\ldots,G_r$, respectively, according to the following inductive rule: Given that $G_i$ is coloured
$Y_{i}$,  so as to get $Y_{i+1}$ we  distinguish two cases
\begin{description} 
\item[Case (a):]  $Y_i$ (the colouring of $G_i$) assigns  $v_i$ and $u_i$  different colours, i.e. 
$Y_i(v_i)\neq Y_i(u_i)$
\item[Case (b):]  $Y_i$ assigns $v_i$ and $u_i$  the same
colour, i.e. $Y_i(v_i) = Y_i(u_i)$.
\end{description}
In the first case, we  set $Y_{i+1}=Y_i$, i.e. $G_{i+1}$ gets the same colouring as $G_i$. In the second case,  
we choose $q$ uniformly at random from $[k]\backslash\{Y_i(v_i)\}$, i.e.  among all the colours  but $Y_{i}(v_i)$. 
Then, we set $Y_{i+1}$ equal to the $q$-switching of $Y_{i}$.  The $q$-switching is w.r.t. the graph $G_i$, the 
vertex $v_i$ and the colouring $Y_i$. The algorithm repeats these steps for $i=0,\ldots, r-1$. Then it outputs $Y_r$.
\\ \vspace{-.2cm}

\begin{figure}
\begin{minipage}{0.5\textwidth}
	\centering
		\includegraphics[height=3cm]{./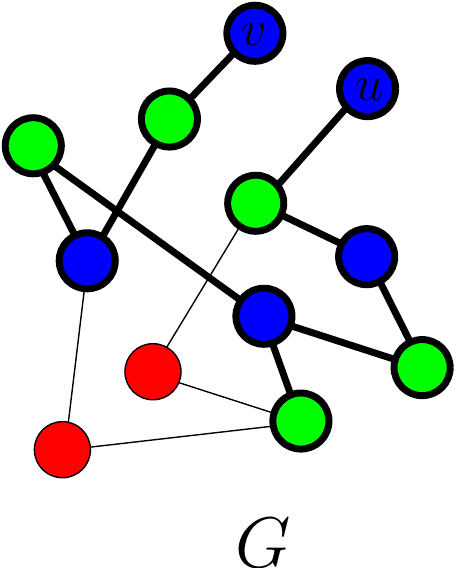}
		\caption{``Disagreement graph''.}
	\label{fig:G0Path}
\end{minipage}
\begin{minipage}{0.5\textwidth}
	\centering
		\includegraphics[height=3cm]{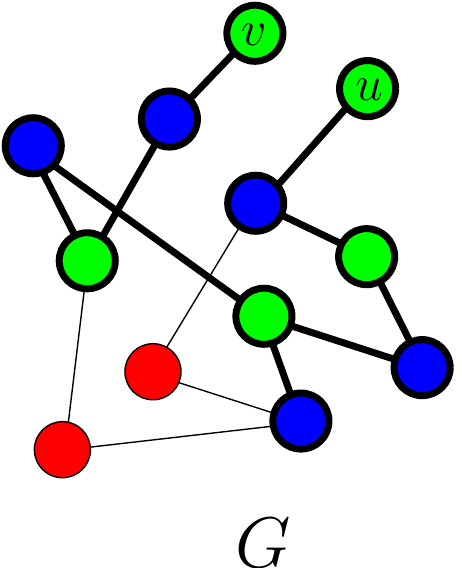}
		\caption{``switching''.}
	\label{fig:G1Path}
\end{minipage}	
\end{figure}

\noindent
One could remark that the switching does not necessarily provide a $k$-colouring where the  assignments 
of $v_i$ and $u_i$ are  different. That is,  it may be that both vertices $v_i, u_i$ belong  to the disagreement
graph in $Y_i$,  e.g. Figure \ref{fig:G0Path}.  Then,  after the $q$-switching  the colour assignments of $v_i$ 
and $u_i$  remain the same,  e.g. Figure \ref{fig:G1Path}. It turns out that  this situation is rare as long as
$k= (1+\epsilon)d$. More specifically,  with probability $1-o(n^{-1})$, the $q$-switching of $Y_i$ specifies 
different colour assignments for $v_i, u_i$.

The approximate nature of the algorithm amounts exactly to the fact that  on some, rare, occasions the 
switching somehow fails. The error at the output of the algorithm (see  Theorem \ref{thrm:GnpAccuracy})
is closely related  to the probability  of the event that our algorithm  encounters such  failure when the input is a typical 
instance of $\G$.
\begin{remark}
The lower bound we have for $k$ depends exactly how well we can control these failures of 
switching. That is,  for $k\leq d$  our analysis cannot guarantee that  the switching fails 
only on rare occasions.
\end{remark}

\section{The setting for the analysis of the algorithm.}\label{sec:InfSetting}

Consider a fixed graph $G$ and let $v,u$ be two distinguished, 
non-adjacent, vertices. 
\begin{definition}[Good \& Bad colourings] 
 Let $\sigma$ be a proper $k$-colouring of $G$, for some $k>0$.  We call $\sigma$  {\em bad  colouring}  
 w.r.t. the vertices $v,u$ of $G$,  if $\sigma_{v}=\sigma_{u}$. Otherwise, we call $\sigma$ {\em good}. 
\end{definition}
The idea that underlies the sampling algorithm, reduces the sampling problem  to dealing with the 
following one.
\begin{problem}\label{prblm:StepProblem}
Given a {\em bad}   random colouring of $G$, w.r.t. $\{v,u\}$, turn it to  a {\em good}  random colouring,
in  polynomial time. 
\end{problem}
Consider two different  $c,q\in [k]$ and let $\Omega_{c,c}$ and $\Omega_{q,c}$ be the set of 
colourings of $G$ which assign the pair of vertices $(v,u)$ colours $(c,c)$ and $(q,c)$, respectively. Our
approach to  Problem \ref{prblm:StepProblem} relies on getting  a mapping $H_{c,q}:\Omega_{c,c}\to \Omega_{q,c}$ 
such that the following holds:
\begin{description}
\item[A.] If $Z$ is uniformly  random  in $\Omega_{c,c}$,  then $H_{c,q}(Z)$ is  uniformly random in $\Omega_{q,c}$
\item [B.] The computation of $H_{c,q}(Z)$ can be accomplished in polynomial time.
\end{description}
It is straightforward  that having such a mapping for every two $c,q\in[k]$, it is sufficient  to solve Problem \ref{prblm:StepProblem}.
In the following discussion our focus is on (the more challenging) ${\bf A.}$ rather than ${\bf B.}$

An ideal (and to a great extent untrue)  situation  would have  been if $\Omega_{c,c}$ and  
$\Omega_{q,c}$ admitted a bijection. Then for ${\bf A.}$ it would suffice to use  for   $H_{c,q}$ 
 a bijection  between the two sets.  
Since this  is not expected to hold in general, our approach is  based on introducing an 
{\em approximate bijection} between  the sets  $\Omega_{c,c}$  and  $\Omega_{q,c}$.  That is,
we consider a mapping which is a bijection  between two sufficiently large subsets  of $\Omega_{c,c}$ 
and  $\Omega_{q,c}$, respectively. This would mean that if $Z$ is uniformly
random in $\Omega_{c,c}$ and $H_{c,q}(\cdot)$ an approximate bijection
between   $\Omega_{c,c}$  and  $\Omega_{q,c}$,  then  $H_{c,q}(Z)$ is {\em approximately} uniformly random in $\Omega_{q,c}$.

To be more specific, we let  $H_{c,q}$ represent the operation of $q$-switching  over the colourings
in $\Omega_{c,c}$, as we describe in Section \ref{sec:basics}.  For such mapping, we can find appropriate
 $\Omega'_{c,c}\subseteq \Omega_{c,c}$ and  $\Omega'_{q,c}\subseteq \Omega'_{q,c}$ such that
$H_{c,q}$ is a bijection between the sets $\Omega_{c,c}\backslash \Omega'_{c,c}$ and  
$\Omega_{q,c}\backslash \Omega'_{q,c}$.  
We call {\em pathological} each colou\-ring $\sigma\in \Omega'_{c,c}\cup   \Omega'_{q,c}$.  
For the pathological colouring $\sigma\in \Omega'_{c,c}$ it holds that $H_{c,q}(\sigma)\notin \Omega_{q,c}$,
while  for $\sigma\in \Omega'_{q,c}$ it holds that $H^{-1}_{c,q}(\sigma)\notin \Omega_{c,c}$. 
\begin{remark}
There is
a natural characterization for  the pathological colourings $\sigma\in \Omega_{c,c}$. That is, $\sigma$ is pathological if the 
disagreement graph $\mathbold{Q}=\mathbold{Q}(G,v,\sigma,q)$ contains both $v,u$.
\end{remark}
It turns out that,  for $Z$ being  uniformly random in $\Omega_{c,c}$,  $H_{c,q}(Z)$ is distributed within 
total variation distance  $\max\left\{\frac{\Omega'_{c,c}}{\Omega_{c,c}}, \frac{\Omega'_{q,c}}{\Omega_{q,c}}\right\}$
from  the uniform distribution over $\Omega_{q,c}$. That is, the error we introduce with the approximate
bijection $H_{c,q}$ depends on the {\em relative  number} of the pathological colorings in $\Omega_{c,c}$ 
and $\Omega_{q,c}$, respectively. 
A key ingredient of our analysis is  to provide  appropriate upper bounds for the two ratios  
$\Omega'_{c,c}/\Omega_{c,c},  \Omega'_{q,c}/\Omega_{q,c}$.

\subsection{Bounding the Error - Spatial Mixing}\label{sec:PrelBoundErr}

As in the previous section, let  $G$ be fixed. 
For bounding the ratios $\Omega'_{c,c}/\Omega_{c,c}$ and $\Omega'_{q,c}/\Omega_{q,c}$,
we treat both cases in the same way, so let us focus on bounding $\Omega'_{c,c}/\Omega_{c,c}$.

It is direct that   $\Omega'_{c,c}/\Omega_{c,c}$ expresses the probability of getting a pathological 
colouring if we choose uniformly at random from  $\Omega_{c,c}$. 
For this, consider the  situation  where we choose 
u.a.r. from $\Omega_{c,c}$.  For every path  $P$ that connects $v,u$ in the graph $G$, we let 
$\mathbf{I}_{\{P\}}$  be an indicator variable which is one if the vertices in the path $P$ are coloured
only with colours $c,q$ in the random colouring   and zero otherwise. 
Equivalently, $\mathbf{I}_{\{P\}}=1$ if and only if $P$ belongs to the  graph of disagreement 
that is induced by the random colouring and the colour $q$.
It holds that 
\begin{equation}\label{eq:OmegaRatioSumOfIndicators}
\frac{\Omega'_{c,c}}{\Omega_{c,c}}=\Pr\left[ \sum_{P}\mathbf{I}_{\{P\}}\geq 1\right] 
\leq \sum_{P}\Pr\left[\mathbf{I}_{\{P\}}=1\right].
\end{equation}
The first equality follows from the fact that if both $v,u$ belong to the disagreement graph, then
there should be  at least one path $P$  such that $\mathbf{I}_{\{P\}}=1$. The last inequality follows from 
the union bound.

\begin{figure}
	\centering
		\includegraphics[height=2.3cm]{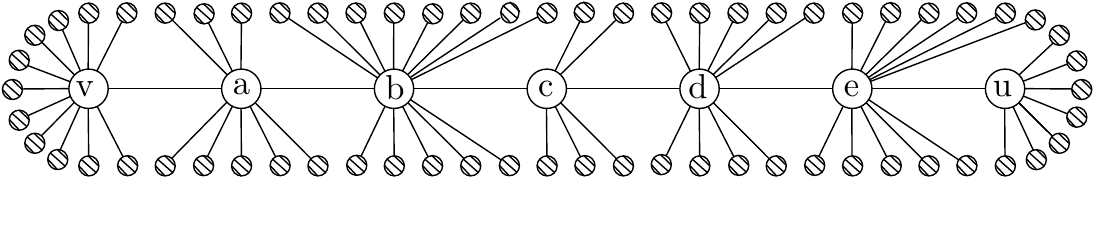}
		\caption{Boundary at distance 1 from the path.}
	\label{fig:OldAppr}
\end{figure}

\begin{remark}
The above inequality bounds the relative number of pathological colourings in  ${\Omega_{c,c}}$ 
(resp. in $\Omega_{q,c}$ ) with the  expected number of paths from $v$ to $u$  which are coloured 
with $c,q$ under a colouring  which is chosen at random from $\Omega_{c,c}$ (resp. $\Omega_{q,c}$).
\end{remark}

\noindent
In general,  computing $\Pr[\mathbf{I}_{\{P\}}=1]$ exactly is a formidable task to accomplish 
due to the complex structure we typically  have in the underlying graph.   For this reason we reside on computing  
upper bounds  of this probability term.  

In  \cite{mySampling} we used the idea of the so-called  
``Disagreement percolation" from \cite{disperc}. The  setting of this approach is illustrated in Figure 
\ref{fig:OldAppr},
for the path $P=(v, a,b,c,d,e, u)$. The lined vertices  are exactly 
these  which are adjacent to the path.  So as to bound the probability that the path $P$ is coloured
with $c,q$, we assume a worst case  boundary colouring for the lined vertices.
Given the fixed colourings  at the boundary, we take a random
colouring of the uncoloured vertices in $P$, conditional $v,u$ are assigned $c$,
and estimate the probability that $P$ is coloured exclusively with $c,q$.   
\begin{remark}
The choice of the boundary  above, is worst case in the sense that it maximizes the probability that 
 $\mathbf{I}_{\{P\}}=1$.
\end{remark}

\noindent
It turns out that considering the worst case boundary condition next to the path $P$ is a too pessimistic 
assumption.  There is an improvement  once we  adopt  a less restrictive
approach.  The  new approach is illustrated in Figure \ref{fig:NewAppr}. Roughly speaking, 
we  consider a worst case boundary condition at  the vertices around $P$ which are at graph  distance
$r$, for $r \gg 1$.  The boundary condition gives rise to Gibbs distribution 
over the $k$-colourings of the subgraph confined by the boundary vertices. 
In particular, we argue about
the  {\em spatial mixing} properties of the Gibbs distributions in the confined graph. 
We show  that the colouring\footnote{any colouring} of the distant  vertices does not bias the distribution of the colour assignment of the vertices in $P$ by too much.

The above approach is well motivated when we consider $G(n,d/n)$.  For such graph, typically,   around 
 most of the vertices in $P$ we  have a tree-like neighbourhood
of maximum degree  very close to the expected degree $d$.  This  gives 
rise to study correlation decay for random colourings of a tree with maximum degree $\Delta$, 
for   $\Delta\approx d$. Our spatial mixing results build on the  work of Jonasson 
\cite{TreeUniq}.
\begin{figure}
	\centering
		\includegraphics[height=2.5cm]{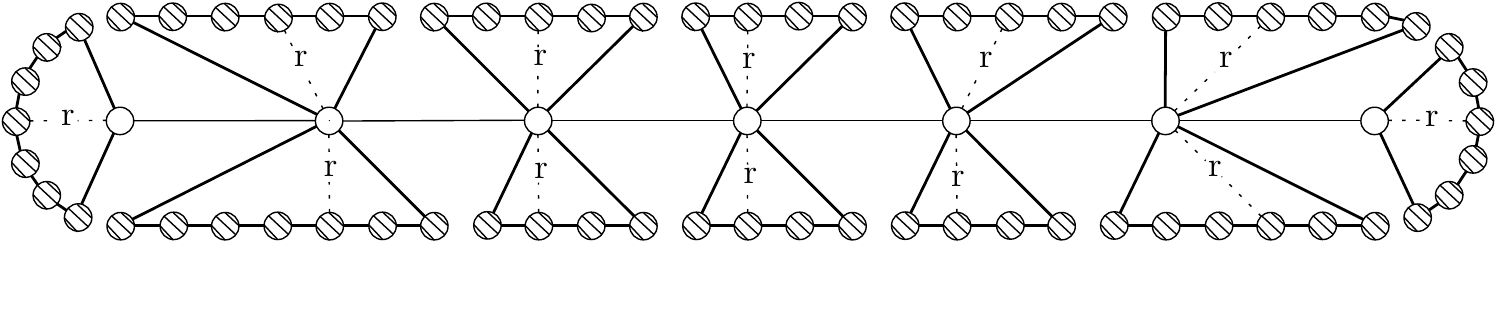}
		\caption{Boundary at distance $r$ from the path}
	\label{fig:NewAppr}
\end{figure}
\paragraph{From fixed graph to random graph.}
When the underlying graph $G$ is fixed, 
we bound  $\Omega'_{c,c}/\Omega_{c,c}$  (resp. $\Omega'_{q,c}/\Omega_{q,c}$)  by using  the  
expected number of paths between $v$ and $u$ that are coloured $c,q$ in a colouring chosen 
uniformly at random from $\Omega_{c,c}$ (resp. $\Omega_{q,c}$). That is, we need to argue 
on the randomness of the $k$-colourings of $G$.

In our analysis, we  deal with cases where the underlying graph is random. Then,  we have
an extra  level of randomness to deal with, that of the graph instance. That is, we  take an instance of the graph
and then, given the graph, we consider a random colouring of this graph instance. Even in this setting, 
we  compute the expected number number of paths between $v$ and $u$ that are coloured $c,q$,  
however,  the expectation is w.r.t. to the randomness of both the  graph and its colouring.  A result which 
is central in our analysis is the following one.
\begin{theorem}\label{thrm:PrIPGnp}
Let $\epsilon>0$,  let $d>0$ be sufficiently large and let fixed  $k\geq (1+\epsilon)d$. 
Consider  $\mathbold{G}=G(n,d/n)$.  Let the graph $\mathbold{H}$ be such that 
$V(\mathbold{H})=V(\mathbold{G})$ and $E(\mathbold{H})\subseteq E(\mathbold{G})$.
For any two $c,q\in [k]$, different with each other,  any non-negative integer $\ell\leq \log^2 n$ and a 
permutation $P=(w_0,\ldots, w_{\ell})$ of vertices in $V(\mathbold{H})$ the following is true:

Let $X$ be a random $k$-colouring of $\mathbold{H}$ conditional than $X(w_0)=c$. 
Let  $\mathbf{I}_{\{P\}}=1$, if  $P$ is a path in $\mathbold{H}$ and $X(w_i)\in \{c,q\}$, for every 
$j=1,\ldots, \ell$.  Otherwise $\mathbf{I}_{\{P\}}=0$. It holds that
\begin{equation}\label{eq:thrm:PrIPGnp}
\Pr[\mathbf{I}_{\{P\}}=1]\leq 2[(1+\epsilon/4)n]^{-\ell}.
\end{equation}
\end{theorem}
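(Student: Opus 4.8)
The plan is to bound the joint probability --- over the draw of $\G=G(n,d/n)$ (which also fixes $\SG$ and hence the conditional law of the random colouring $X$) and over $X$ itself --- that $P$ is a path of $\SG$ all of whose vertices receive only the colours $c$ and $q$. The first step is a reduction. Since consecutive vertices of a path are adjacent and $X(w_0)=c$ is imposed, the event $\mathbf{I}_{\{P\}}=1$ forces the colouring to \emph{alternate} along $P$, i.e.\ $X(w_i)=\pi_i$ with $\pi_i=q$ for odd $i$ and $\pi_i=c$ for even $i$, and it also forces all $\ell$ edges $e_i=\{w_{i-1},w_i\}$ to lie in $\SG$, hence in $\G$. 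Therefore
\[
\Pr[\mathbf{I}_{\{P\}}=1]=\mathbb{E}_{\G}\Big[\mathbf{1}\{e_1,\dots,e_\ell\in E(\SG)\}\cdot\Pr_X\big[\,X(w_i)=\pi_i,\ 1\le i\le\ell\ \big|\ X(w_0)=c,\ \SG\,\big]\Big].
\]
Because the edge slots of $G(n,d/n)$ are mutually independent, $\Pr_{\G}[e_1,\dots,e_\ell\in E(\G)]=(d/n)^{\ell}$ and conditioning on the presence of these $\ell$ edges leaves every other slot present independently with probability $d/n$; this accounts for the $n^{-\ell}$ in the target bound and reduces the theorem to showing that the remaining chromatic contribution is at most $2[(1+\epsilon/4)d]^{-\ell}$.

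For the chromatic part I would expose the colouring along $P$ and use the chain rule,
\[
\Pr_X\big[\,X(w_i)=\pi_i\ \forall i\ \big|\ X(w_0)=c,\ \SG\,\big]=\prod_{i=1}^{\ell}\Pr_X\big[\,X(w_i)=\pi_i\ \big|\ X(w_0)=c,\dots,X(w_{i-1})=\pi_{i-1},\ \SG\,\big],
\]
and try to bound the $i$-th factor by essentially $\tfrac1k(1+o(1))\le\tfrac{1}{(1+\epsilon/4)d}$. Here $w_i$ is adjacent to the already-coloured $w_{i-1}$, and the assertion that its conditional Gibbs marginal of any single remaining colour exceeds $1/k$ only by a lower-order factor is precisely a \emph{spatial mixing} statement for random $k$-colourings. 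This is where the structure of $\G=G(n,d/n)$ and Jonasson's tree-uniqueness bound \cite{TreeUniq} enter: after conditioning on a worst-case colouring of the vertices at graph distance $r$ from $P$ (cf.\ Figure~\ref{fig:NewAppr}) one is left with the Gibbs measure of the radius-$r$ ball around $P$, which for a typical instance is a ``tube'' of pairwise disjoint tree-like neighbourhoods of degree close to $d$; correlation decay on such trees, valid because $k\ge(1+\epsilon)d$, keeps every relevant marginal within a $(1+o(1))$ factor of $1/k$, and the gap between $(1+\epsilon)$ and $(1+\epsilon/4)$ --- a genuine constant factor per step --- comfortably absorbs this correction, the constant $2$, and the $\mathrm{poly}(\ell)$ losses, uniformly for $\ell\le\log^2 n$.

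The step I expect to be the main obstacle is exactly this per-vertex estimate, for two linked reasons. First, it is false in the worst case over graphs: a vertex of large $\SG$-degree whose neighbours carry many distinct colours has far fewer than $(1+\epsilon/4)d$ available colours and its marginal on an available colour can be much larger than $1/k$, so one cannot simply condition on $\SG$ and then invoke a uniform per-vertex bound. Second, one cannot afford a crude union bound over the event ``$w_i$ has an atypical (short-cycle-rich or high-degree) radius-$r$ ball'', because for small $\ell$ the target $[(1+\epsilon/4)n]^{-\ell}$ is only $n^{-O(\ell)}$ whereas that atypicality has probability merely $n^{-\Omega(1)}$. The resolution I would pursue is to interleave the exposure of $\G$ with the chromatic estimate and organise the bound as a telescoping product whose $i$-th factor already bundles the $d/n$ coming from the edge $e_i$ with the conditional marginal at $w_i$, bounding each such factor in expectation by $\frac{d}{n}\cdot\frac{1}{(1+\epsilon/4)d}=\frac{1}{(1+\epsilon/4)n}$; in this accounting the small probability of an atypical neighbourhood at $w_i$ is charged only against the $d/n$ of that single step rather than globally. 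Verifying that the correlation-decay estimate on tree-like neighbourhoods is robust to conditioning on the nearby, already-fixed colours $X(w_1),\dots,X(w_{i-1})$ (not merely the far boundary), handling the occasional short cycle or high-degree vertex along $P$, and checking that all the accumulated error terms multiply out to at most the leftover factor $2$, is the technical heart of the proof.
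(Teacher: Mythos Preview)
Your high-level plan coincides with the paper's: factor out $(d/n)^\ell$ for the edges of $P$ to lie in $\G$ (hence possibly in $\SG$), then control the chromatic event via spatial mixing on tree-like neighbourhoods of the path, invoking Jonasson's uniqueness result. You also correctly identify the main obstacle, namely that vertices $w_i$ with atypical (high-degree or non-tree) local structure can have conditional marginals far exceeding $1/k$, and that this cannot be written off by a global union bound.

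Where your proposal diverges from the paper is in the mechanism for absorbing these atypical vertices. You suggest interleaving the exposure of $\G$ with the chromatic chain-rule, bundling the $d/n$ of edge $e_i$ with the marginal at $w_i$ so that an atypical $w_i$ is charged only against its own edge. The paper does \emph{not} interleave and does \emph{not} attempt to bound the chain-rule factors directly. Instead it conditions on the entire path first, then for each $w_i$ reveals a bounded-depth neighbourhood $N_{i,h}$ (truncating the branching at $(1+\epsilon/3)d$), and declares $N_{i,h}$ \texttt{Fail} if it is not a low-degree tree disjoint from the other $N_{j,h}$. The crucial trick is a domination: writing $D_i$ for ``$w_i$ carries the forced colour'', $A_i$ for ``$N_{i,h}$ is \texttt{Fail}'' and $B_i$ for ``$N_{i,h}$ is not \texttt{Fail} and $D_i$ holds'', one has $\Pr[\cap_i D_i]\le \Pr[\cap_i C_i]$ with $C_i=A_i\cup B_i$. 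On the event $A_i$ the factor $C_i$ costs nothing chromatically (it holds trivially), while on $\bar A_i$ the spatial-mixing proposition gives $\Pr[B_i\mid\cdot]\le \rho\approx (k-2)^{-1}$. Summing over the number $s$ of \texttt{Fail}s, with $\Pr[F=s]\lesssim\binom{\ell}{s}e^{-\epsilon^2 ds/35}$, yields $\Pr[\cap_i C_i]\le 2\big(\rho+e^{-\epsilon^2 d/35}\big)^\ell\le 2((1+\epsilon/4)d)^{-\ell}$.

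The advantage of the paper's route is that it never needs a per-vertex marginal bound at a bad $w_i$: such vertices simply contribute $1$ to the product, and their \emph{count} is controlled binomially. Your interleaving scheme, as stated, still requires that the expected $i$-th bundled factor be at most $1/((1+\epsilon/4)n)$, but the chromatic marginal at $w_i$ depends on the whole graph $\SG$, not just on what has been revealed, so it is not obvious how to make ``in expectation'' precise without a device like the paper's \texttt{Fail} dichotomy. Your proposal is therefore on the right track but leaves unspecified exactly the step the paper's domination argument is designed to supply.
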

The proof of the theorem appears in Section \ref{sec:thrm:PrIPGnp}.
\begin{remark}
 In \eqref{eq:thrm:PrIPGnp} the probability term is w.r.t. both the randomness of $\mathbold{H}$
and the colouring $X$.
\end{remark}
The above theorem implies that for $k\geq (1+\epsilon)d$,  in a random $k$-colouring of $\mathbold{G}$,
typically,  there are not long paths coloured with only two colours. Furthermore, this property
is  monotone in the graph structure. That is, it holds even though if we remove an arbitrary 
number of edges from $\mathbold{G}$ (and get $\mathbold{H}$). The monotonicity property follows 
from the fact that we can extend  in a natural way the Gibbs uniqueness  condition in \cite{TreeUniq}
from $\Delta$ regular trees  to trees of maximum degree $\Delta$.

\section{Updating  Colourings}\label{sec:Problem1}
In this section, we describe  the process that  the random colouring
algorithm uses to update the colourings, we call it ${\tt Update}$. For the sake of clarity in this section we assume a 
fixed graph  $G$ and we distinguish two vertices $v,u\in V(G)$. We take $k$ sufficiently large so that $G$ is $k$-colourable. 
\begin{definition}[Disagreement graph]\label{def:DiagreementGraph}
For any  $\sigma\in \Omega_{G,k}$ and  $q\in [k]\backslash\{\sigma_v\}$ we let the disagreement
graph $\mathbold{Q}=\mathbold{Q}(G,v,\sigma,q)$ be the maximal induced subgraph of $G$ such that
\begin{displaymath}
V(\mathbold{Q})=\left \{x \in V(G) \left | 
\begin{array}{l}
\exists \textrm{ path  } w_1,\ldots,w_{\ell}, \textrm{ in $G$ such that:  } \\
w_1=v, w_{\ell}=x, \sigma(w_j)\in \{\sigma_v, q\}, \forall  j\in [\ell]
\end{array}
\right .
\right \}.
\end{displaymath}
\end{definition}

\noindent
Next, we provide the pseudo-code of the operation ${\tt Switching}$, presented in Section \ref{sec:basics}.
\\ \vspace{-.2cm}

\noindent
{\bf Switching} \\ \vspace{-.75cm} \\
\rule{\textwidth}{1.5pt}\\
{\bf Input:} $G, v$,  $\sigma$ and $q\in [k]\backslash\{\sigma_v\}$ \\
\hspace*{.6cm}{\tt set} $c=\sigma_v$\\ 
\hspace*{.6cm}{\tt set} $\mathbold{Q}=\mathbold{Q}(G,v, \sigma,q)$\\ 
\hspace*{.6cm}{\tt set} $\tau(V(G)\backslash V(\mathbold{Q}))=\sigma(V(G)\backslash V(\mathbold{Q}))$
\quad /* Everything outside $\mathbold{Q}$ keeps its initial colouring*/ \\
\hspace*{.6cm}{\tt  for} $w\in V(\mathbold{Q})\cap \sigma^{-1}(c)$ {\tt do}\\
\hspace*{1.2cm}{\tt set} $\tau(w)=q$\\
\hspace*{.6cm}{\tt  for} $w\in V(\mathbold{Q})\cap \sigma^{-1}(q)$ {\tt do}\\
\hspace*{1.2cm}{\tt set} $\tau(w)=c$\\
{\bf Output:} $\tau$\\ \vspace{-.75cm} \\
\rule{\textwidth}{1.5pt} \\  \vspace{-.2cm} 

\noindent
 ${\tt Switching}$ has the following property, whose proof is easy to derive.
\begin{lemma}\label{lemma:SwitchProperColour}
If  $\tau={\tt Switching}(G,v,\sigma,q)$, where  $\sigma\in \Omega_{G,k}$ and  $q\neq \sigma(v)$,
then  $\tau\in \Omega_{G,k}$.
\end{lemma}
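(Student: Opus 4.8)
The plan is to verify directly from the definitions that the colouring $\tau$ produced by {\tt Switching} is proper, i.e.\ that no edge of $G$ is monochromatic under $\tau$. First I would record the key structural observations about $\mathbold{Q}=\mathbold{Q}(G,v,\sigma,q)$: by Definition~\ref{def:DiagreementGraph}, $V(\mathbold{Q})\subseteq \sigma^{-1}(c)\cup\sigma^{-1}(q)$ where $c=\sigma_v$, so $\tau$ is well defined and indeed only swaps the two colour classes $c,q$ inside $\mathbold{Q}$ while leaving everything else untouched. Also note that $\tau$ uses exactly the same set of colours on $V(\mathbold{Q})$ as $\sigma$ does (namely a subset of $\{c,q\}$), and agrees with $\sigma$ on $V(G)\setminus V(\mathbold{Q})$.

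Next I would split the edges $\{x,y\}\in E(G)$ into three cases and check each. Case 1: both $x,y\notin V(\mathbold{Q})$. Then $\tau(x)=\sigma(x)\neq\sigma(y)=\tau(y)$ since $\sigma$ is proper. Case 2: both $x,y\in V(\mathbold{Q})$. Since $\mathbold{Q}$ is an \emph{induced} subgraph, $\{x,y\}\in E(\mathbold{Q})$, and $\sigma$ restricted to $\mathbold{Q}$ is a proper $2$-colouring with colours in $\{c,q\}$; swapping $c\leftrightarrow q$ on a proper $2$-colouring of $\mathbold{Q}$ yields another proper $2$-colouring, so $\tau(x)\neq\tau(y)$. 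Case 3: exactly one endpoint, say $x\in V(\mathbold{Q})$ and $y\notin V(\mathbold{Q})$. Here $\tau(x)\in\{c,q\}$ and $\tau(y)=\sigma(y)$. The point is that $\sigma(y)\notin\{c,q\}$: if $\sigma(y)\in\{c,q\}$, then $y$ would be reachable from $v$ by a path through $\mathbold{Q}$ ending with the edge $\{x,y\}$ all of whose vertices are coloured in $\{c,q\}$, so by maximality $y\in V(\mathbold{Q})$, a contradiction. Hence $\tau(y)=\sigma(y)\notin\{c,q\}\ni\tau(x)$, so again $\tau(x)\neq\tau(y)$.

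Having covered all three cases, every edge of $G$ is bichromatic under $\tau$, so $\tau\in\Omega_{G,k}$, which is the claim. The only mild subtlety — the ``main obstacle'' if there is one — is Case 3, where one must invoke the \emph{maximality} of $\mathbold{Q}$ in Definition~\ref{def:DiagreementGraph} to rule out a neighbour of $\mathbold{Q}$ carrying colour $c$ or $q$; this is precisely what forces $\tau$ to remain proper across the boundary of $\mathbold{Q}$. Everything else is a routine unwinding of the definition of the {\tt Switching} operation, which is why the lemma is stated with ``proof is easy to derive''. I would write the argument in the three-case format above, stating the boundary fact (Case 3) as the one place where the disagreement-graph definition is genuinely used.
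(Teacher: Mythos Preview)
Your proposal is correct and follows essentially the same approach as the paper's proof: a case analysis on edges according to whether their endpoints lie inside or outside $\mathbold{Q}$, using that the swap $c\leftrightarrow q$ preserves properness inside $\mathbold{Q}$ and invoking maximality of $\mathbold{Q}$ to handle boundary edges. The only cosmetic difference is that the paper phrases it as a proof by contradiction and folds your Case~1 into the observation that vertices outside $\mathbold{Q}$ are untouched, leaving two cases rather than three.
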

The proof of Lemma  \ref{lemma:SwitchProperColour}, is quite straightforward and appears
in Section \ref{sec:lemma:SwitchProperColour}.

As far the time complexity of ${\tt Switching}$ is regarded we have the following lemma, whose
proof appears in Section \ref{sec:lemma:StepAc}.

\begin{lemma}\label{lemma:StepAc}
For every $v\in V(G)$, any   $\sigma \in \Omega_{G,k}$, $q\in [k]\backslash\{\sigma_v\}$
the time  complexity  of computing ${\tt Switching}(G,v, \sigma, q)$ is $O(|E(G)|)$.
\end{lemma}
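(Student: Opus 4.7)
The plan is to bound the cost of each step in the pseudocode of ${\tt Switching}$, the dominant contribution being the construction of the disagreement graph $\mathbold{Q}=\mathbold{Q}(G,v,\sigma,q)$. By Definition \ref{def:DiagreementGraph}, $V(\mathbold{Q})$ is exactly the set of vertices reachable from $v$ via paths in $G$ whose vertices all carry colour $\sigma_v=c$ or $q$, and $\mathbold{Q}$ is the induced subgraph on these vertices. Hence $V(\mathbold{Q})$ can be constructed by a standard graph traversal, say BFS, starting at $v$: push $v$ on a queue, and when a vertex $x$ is dequeued, scan its neighbours in $G$ and enqueue each unvisited neighbour $y$ with $\sigma(y)\in\{c,q\}$.

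The correctness of this traversal follows directly from the definition: a vertex $x$ is enqueued precisely when a witness path from $v$ to $x$ using colours in $\{c,q\}$ has been constructed, and the maximality of $\mathbold{Q}$ guarantees that every such vertex is eventually reached. Each vertex in $V(\mathbold{Q})$ is enqueued and dequeued once, and for each such $x$ the algorithm examines the at most $\deg_G(x)$ incident edges in $G$ a constant number of times. Summing, the traversal runs in time
\begin{equation*}
O\!\left(|V(\mathbold{Q})|+\sum_{x\in V(\mathbold{Q})}\deg_G(x)\right) \;=\; O(|V(G)|+|E(G)|).
\end{equation*}

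The remaining lines of the pseudocode perform: (i) a copy of $\sigma$ on $V(G)\setminus V(\mathbold{Q})$, and (ii) two loops that flip the colours $c$ and $q$ on $V(\mathbold{Q})\cap\sigma^{-1}(c)$ and $V(\mathbold{Q})\cap\sigma^{-1}(q)$, respectively. Each of these takes time $O(|V(G)|)$, subsumed by the bound above. Combining the contributions gives a total running time of $O(|V(G)|+|E(G)|)$, which under the standing convention that isolated vertices can be discarded (they play no role in ${\tt Switching}$) is $O(|E(G)|)$.

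There is no genuine obstacle here: the only thing to verify is that the BFS really enumerates $V(\mathbold{Q})$ as specified by Definition \ref{def:DiagreementGraph}, and this is immediate since the BFS frontier is defined by precisely the same path-existence condition that defines $\mathbold{Q}$. All other operations are bookkeeping that the edge-count bound already absorbs.
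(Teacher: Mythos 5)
Your proof is correct and takes essentially the same approach as the paper: the paper also reveals the disagreement graph $\mathbold{Q}$ by a step-by-step traversal in which each edge of $G$ is examined at most once, giving the $O(|E(G)|)$ bound, with the remaining recolouring steps treated as dominated bookkeeping. Your BFS phrasing is a cleaner way of stating the same edge-revelation argument, and your remark about the $O(|V(G)|)$ term for isolated vertices addresses an imprecision the paper glosses over as well.
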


\noindent
In what follows, we have the pseudo-code for  ${\tt Update}$. 
\\ \vspace{-.1cm}

\noindent
{\bf Update} \\ \vspace{-.75cm}\\
\rule{\textwidth}{1pt}
{\bf Input:} $G, v, u,$ $\sigma \in \Omega_{G,k}$ \\
\hspace*{.6cm} {\tt if }$\sigma$ is a  {\em good} colouring w.r.t. $v,u$, {\tt then}\\
\hspace*{1.2cm}{\tt set}  $\tau=\sigma$\\
\hspace*{.6cm} {\tt else do} \\
\hspace*{1.2cm} {\tt choose} $q$ u.a.r. from $[k]\backslash\{\sigma_v\}$\\
\hspace*{1.2cm} {\tt set} $\tau={\tt Switching}(G,v, \sigma,q)$ \\ 
{\bf Output:} $\tau$ \\ \vspace{-.75cm}\\
\rule{\textwidth}{1pt} \\ \vspace{-.7cm}\\

\noindent
To this end, we need argue about the time complexity and the accuracy of ${\tt Update}$. As far as the 
time complexity is regarded we have the following theorem.
\begin{theorem}\label{thrm:UpdateCmplxt}
For any $v,u\in V$,  $\sigma \in \Omega_{G,k}$ and $q\in [k]\setminus\{\sigma_v\}$, 
the time  complexity  of  ${\tt Update}(G,v, u, \sigma, k)$ is $O(|E(G)|)$.
\end{theorem}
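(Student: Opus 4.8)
The plan is to trace through the pseudo-code of \texttt{Update} line by line and charge each operation against $O(|E(G)|)$. First I would observe that deciding whether $\sigma$ is a \emph{good} colouring w.r.t. $v,u$ amounts to comparing the two colour assignments $\sigma_v$ and $\sigma_u$, which is $O(1)$ once we have random access to the colouring array, and is in any case dominated by $O(|E(G)|)$. If $\sigma$ is good we set $\tau=\sigma$; copying the colouring over the vertex set costs $O(|V(G)|)=O(|E(G)|)$ (assuming, as is implicit throughout, that $G$ has no isolated vertices in the relevant components, or else counting this as a separate $O(|V(G)|)$ term that the statement tacitly absorbs). If $\sigma$ is bad, the algorithm draws $q$ uniformly at random from $[k]\setminus\{\sigma_v\}$, which is a single sampling step costing $O(1)$ (or $O(k)$ if one writes out the set explicitly, still $O(|E(G)|)$ for $k$ bounded as in our regime), and then invokes $\texttt{Switching}(G,v,\sigma,q)$.

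The remaining and only substantive point is to invoke the cost bound for the subroutine: by Lemma \ref{lemma:StepAc}, computing $\texttt{Switching}(G,v,\sigma,q)$ for any $v$, any $\sigma\in\Omega_{G,k}$ and any $q\in[k]\setminus\{\sigma_v\}$ takes time $O(|E(G)|)$. Since \texttt{Update} performs at most one call to \texttt{Switching}, plus a constant number of $O(|E(G)|)$-bounded bookkeeping steps, the total is $O(|E(G)|)$. I would then note that the hypothesis $q\in[k]\setminus\{\sigma_v\}$ in the theorem statement is only needed to make the call to \texttt{Switching} well-defined in Case (b); in Case (a) the parameter $q$ is unused.

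I do not expect any real obstacle here: the theorem is essentially a wrapper around Lemma \ref{lemma:StepAc}, and the proof is a short accounting argument. The one place to be slightly careful is the implicit cost model — namely that the colouring is stored so that reading and writing the colour of a vertex is $O(1)$, and that extracting $V(\mathbold{Q})$ inside \texttt{Switching} (a connected-component / Kempe-chain computation via BFS or DFS restricted to the two colour classes) is what drives the $O(|E(G)|)$ bound in Lemma \ref{lemma:StepAc}. Once that model is granted, the conclusion $O(|E(G)|)$ for \texttt{Update} is immediate.
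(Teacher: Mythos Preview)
Your proposal is correct and matches the paper's own proof, which simply states that the theorem follows as a corollary of Lemma \ref{lemma:StepAc} once one notes that the running time of \texttt{Update} is dominated by the single call to \texttt{Switching}. Your line-by-line accounting is more detailed than the paper's one-sentence justification, but the argument is the same.
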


\noindent
Theorem \ref{thrm:UpdateCmplxt} follows as a corollary of Lemma \ref{lemma:StepAc}, 
once we note that  the execution time of ${\tt Update}$ is dominated by  the calls of ${\tt Switching}$.

So as to study the accuracy of ${\tt Update}$  we  introduce the following
concepts. For any two different colours $c,q$ we let $S_{q}(c,c)\subseteq \Omega(c,c)$
and $S_{c}(q,c) \subseteq \Omega(q,c)$ be defined as follows: The set $S_{q}(c,c)$ (resp.  
$S_{c}(q,c)$) contains every $\sigma\in \Omega(c,c)$  (resp. $\sigma\in \Omega(q,c)$) 
such that  there is no path between $v$ and $u$ which is coloured  only with 
the colours $c,q$, by $\sigma$.
\begin{definition}\label{assm:isomorphism}
Let $\alpha=\alpha_{G,k}\in[0,1]$ be the minimum number such that the following holds:
For every pair of different colours $c,q\in [k]$  the sets  $S_{q}(c,c)$ and $S_{c}(q,c)$  contain 
all but an   $\alpha$-fraction  of colourings of  $\Omega(c,c)$ and $\Omega(q,c)$, respectively.
\end{definition}

\noindent
In general the value of $\alpha$ depends on the underlying graph $G$ and $k$. The quantity $\alpha$
is an upper bound on the relative size of pathological  colourings in each set $\Omega(c,c')$.  
\begin{theorem}\label{theorem:STEPAccuracy}
Let $\nu$ be the uniform distribution over the $k$-colourings of $G$ which are  {\em good}, 
w.r.t. $v,u$.  Let, also, $\nu'$ be the di\-stri\-bu\-tion of the output of ${\tt Update}$ when the
input colouring is distributed uniformly at random over the  $k$-colourings of $G$.  Letting
$\alpha$ be as in  Definition \ref{assm:isomorphism},  it holds that 
\begin{displaymath}
||\nu-\nu'||\leq \alpha.
\end{displaymath}
\end{theorem}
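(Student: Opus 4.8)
The plan is to analyze the map ${\tt Update}$ by decomposing the colourings of $G$ according to the colour pair assigned to $(v,u)$. Let $X$ be uniformly random over $\Omega_{G,k}$. If $X$ is already good, ${\tt Update}$ outputs $X$ unchanged; if $X$ is bad, say $X\in\Omega(c,c)$, then ${\tt Update}$ picks $q$ uniformly from $[k]\setminus\{c\}$ and returns ${\tt Switching}(G,v,X,q)$, which by Lemma~\ref{lemma:SwitchProperColour} is a proper colouring. First I would observe that, restricted to the ``non-pathological'' part $S_q(c,c)$, the $q$-switching $H_{c,q}$ is a genuine bijection onto $S_c(q,c)$ with inverse the $c$-switching; the point is that for $\sigma\in S_q(c,c)$ the vertex $u$ is \emph{not} in the disagreement graph $\mathbold{Q}(G,v,\sigma,q)$, so $H_{c,q}(\sigma)$ assigns $(v,u)$ the colours $(q,c)$, and applying the $c$-switching at $v$ recovers $\sigma$ (the two disagreement graphs coincide as vertex sets). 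Thus $H_{c,q}$ matches uniform mass on $S_q(c,c)$ to uniform mass on $S_c(q,c)$.

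Next I would set up the output distribution $\nu'$ explicitly. Writing $N=|\Omega_{G,k}|$ and $N_g$ for the number of good colourings, a good target colouring $\tau\in\Omega(q,c)$ (with $q\neq c$) receives mass from two sources: the ``identity'' contribution $\frac1N$ when $X=\tau$ was already good, and the ``switching'' contribution $\frac{1}{N}\cdot\frac{1}{k-1}$ for each bad $\sigma\in\Omega(c,c)$ with ${\tt Switching}(G,v,\sigma,q)=\tau$; by the bijection this last source contributes $\frac{1}{N(k-1)}$ exactly when $\tau\in S_c(q,c)$, and possibly differently (more or fewer preimages) when $\tau$ is pathological. Summing $c$ over $[k]\setminus\{q\}$ and doing the same symmetric bookkeeping, one finds that $\nu'(\tau)$ equals the ``ideal'' value $\nu_{\mathrm{ideal}}(\tau):=\frac{1}{N}\big(1+\frac{k-1}{k-1}\big)=\frac{2}{N}$ — wait, more carefully: each good $\tau$ gets $\frac1N$ from identity plus $\frac{1}{N(k-1)}$ from \emph{each} of the $k-1$ colours $c\neq q$ via the pathology-free bijection, i.e. $\frac1N+\frac{k-1}{N(k-1)}=\frac2N$, independent of $\tau$. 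Hence on the pathology-free part $\nu'$ agrees with the uniform distribution $\nu$ over good colourings up to the common normalization, and all the discrepancy between $\nu'$ and $\nu$ is concentrated on, and caused by, pathological colourings.

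To finish, I would bound $\|\nu-\nu'\|$ by $\frac12\sum_\tau|\nu(\tau)-\nu'(\tau)|$ and control the total deficit/excess. The total mass $\nu'$ places on good colourings is $1$ minus the mass lost to ``switching failures'' — i.e. the cases where $X\in\Omega(c,c)$ is pathological and $H_{c,q}(X)$ lands outside $\Omega(q,c)$ (the event $u\in\mathbold{Q}$, after which $v,u$ still share a colour). By Definition~\ref{assm:isomorphism} the pathological colourings number at most an $\alpha$-fraction of each $\Omega(c,c')$, so the probability that $X$ is pathological (in the relevant sense) is at most $\alpha$; every such event moves at most $\frac1N$-worth of mass to a wrong (bad or differently-distributed) place, and conversely each good $\tau$ can be over- or under-counted only through pathological preimages, whose total relative measure is again $\le\alpha$. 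Assembling these estimates gives $\|\nu-\nu'\|\le\alpha$.

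The main obstacle I anticipate is the careful accounting in the middle step — verifying that the $q$-switching restricted to $S_q(c,c)$ really is a bijection onto $S_c(q,c)$ with the $c$-switching as inverse (this uses that when $u\notin\mathbold{Q}(G,v,\sigma,q)$ the disagreement graphs $\mathbold{Q}(G,v,\sigma,q)$ and $\mathbold{Q}(G,v,H_{c,q}(\sigma),c)$ have the same vertex set), and then tracking exactly how the $\frac{1}{k-1}$ weights from all source colour pairs $c$ combine so that the non-pathological mass lands uniformly. Once that identity is in place, the total-variation bound by $\alpha$ is a short union-bound-style argument over the pathological set.
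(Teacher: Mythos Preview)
Your approach---computing $\nu'$ directly and bounding $\|\nu-\nu'\|$ by summing pointwise discrepancies---is different from the paper's, which builds an explicit coupling of $Y={\tt Update}(X)$ with $Z\sim\nu$ and bounds $\Pr[Y\neq Z]$. Both rest on the same key lemma (your bijection $H_{c,q}:S_q(c,c)\to S_c(q,c)$ is exactly the paper's Lemma~\ref{lemma:isomorphism}), but the coupling sidesteps the bookkeeping that trips you up.

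There are two concrete problems in your write-up. First, the ``$2/N$'' computation is wrong. For a fixed good $\tau\in\Omega(q,c)$ the colour $c=\tau(u)$ is determined, so there is \emph{only one} bad source $\Omega(c,c)$ that can switch into $\tau$ (if $X\in\Omega(c',c')$ with $c'\neq c$ and you $q$-switch, the vertex $u$ either stays at $c'\neq c$ or, in the pathological case, moves to $q$---never to $c$). Hence on the non-pathological good set the correct value is $\nu'(\tau)=\tfrac{1}{N}+\tfrac{1}{N(k-1)}=\tfrac{k}{N(k-1)}$, not $2/N$; you cannot ``sum $c$ over $[k]\setminus\{q\}$'' for a fixed $\tau$.

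Second, and more seriously, $\tfrac{k}{N(k-1)}$ is not equal to $\nu(\tau)=1/N_g$, so your claim that ``$\nu'$ agrees with $\nu$ up to common normalisation'' on the non-pathological part is not immediate. You would still need to show $|N_g-\tfrac{k-1}{k}N|$ is small, which does follow from the bijection (it gives $|S_q(c,c)|=|S_c(q,c)|$, hence $|\Omega(c,c)|$ and $|\Omega(q,c)|$ differ by at most an $\alpha$-fraction), but this extra step is missing and the resulting bound comes out as $\alpha$ plus lower-order terms rather than cleanly $\alpha$. The paper's coupling avoids this: once $Z(u)=X(u)$ and $Z(v)=Y(v)$ are matched, on the event $E=\{X\in S_q(c,c),\,Z\in S_c(q,c)\}$ both $Y$ and $Z$ are uniform on the \emph{same} set $S_c(q,c)$ and can be coupled identically, so only $\Pr[\bar E\mid X\text{ bad}]\le\alpha$ survives.
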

The proof of Theorem \ref{theorem:STEPAccuracy} appears in  Section \ref{sec:theorem:STEPAccuracy}.

\section{Random Colouring Algorithm}\label{sec:RCAlgorithm}
In this section, we  study the  time complexity and the accuracy of the random colouring algorithm.
 For the sake of definitiveness we assume the input graph $G$ to be fixed and 
is such that $G$ is $k$-colourable.
Given the input graph $G$, the algorithm creates  the sequence of
subgraphs $G_0, \ldots, G_r$.  The  variable $Y_i$ denotes the $k$-colouring 
that the algorithm assigns to the graph $G_i$.
$G_{i}$ is derived by deleting from $G_{i+1}$ an edge which we call   $\{v_i, u_i\}$.     

As we consider a general graph $G$,  in the pseudo-code that follows,
we do not specify  exactly how do we get $G_i$  from $G_{i+1}$, 
i.e. what is $\{v_i,u_i\}$.  Also, we do not specify how do we get   $Y_0$, the random colouring of $G_0$. 
We  get specific on these two matters only when we consider $G(n,d/n)$ at the input,
see Section \ref{sec:AppGnpNew}.

The pseudo-code for the algorithm is as follows:
\\ \vspace{-.7cm}\\

\noindent
{\bf Random Colouring Algorithm}\\ \vspace{-.7cm} \\
\rule{\textwidth}{1pt} \\ 
{\bf Input:} $G$, $k$\\ 
\hspace*{.6cm}{\tt compute} $G_0,G_1\ldots, G_r$\\
\hspace*{.6cm}{\tt compute} $Y_0$  $\qquad \qquad\qquad /*$ {\em Get a random 
$k$-colouring of $G_0$}$*/$\\
\hspace*{.6cm}{\tt for} $0\leq i \leq r-1$ {\tt do} \\
\hspace*{1.2cm} {\tt set}  $Y_{i+1}$ the output of ${\tt Update}(G_i, v_i,u_i,Y_i, k)$\\
{\bf Output:} $Y_{r}$\\ \vspace{-.7cm} \\
\rule{\textwidth}{1pt} \\ \vspace{-.7cm} \\

\noindent
Using Theorem \ref{thrm:UpdateCmplxt} and noting that  $r\leq |E(G)|$, we get the following
result.
\begin{theorem}\label{thrm:TimeCmplxt}
Let $T(n)$ be the time complexity for $k$-colouring randomly  $G_0$.
Then, the random colouring algorithm has  time complexity $O(|E(G)|^2+T(n))$.
\end{theorem}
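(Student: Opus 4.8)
The plan is to bound, phase by phase, the running time of the pseudo-code of the random colouring algorithm: the construction of the sequence $G_0,\dots,G_r$, the computation of the initial colouring $Y_0$, and the loop of $r$ calls to ${\tt Update}$. Everything then follows by adding up the three contributions.

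First I would record the elementary structural facts. Since $G_i$ is obtained from $G_{i+1}$ by deleting exactly one edge, the number of graphs in the sequence satisfies $r\le |E(G)|$, and moreover $|E(G_i)|\le |E(G)|$ for every $i$. Constructing the sequence consists of $r$ edge deletions, each of which (in the worst case, for a fixed general graph) requires a scan of the current edge set; this costs $O(|E(G)|)$ per step, hence $O(|E(G)|^2)$ in total. By hypothesis the computation of $Y_0$ takes time $T(n)$. For the loop, Theorem \ref{thrm:UpdateCmplxt} gives that each call ${\tt Update}(G_i,v_i,u_i,Y_i,k)$ runs in time $O(|E(G_i)|)=O(|E(G)|)$; as there are $r\le |E(G)|$ such calls, the whole loop costs $O(r\,|E(G)|)=O(|E(G)|^2)$.

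Summing the three phases yields total running time $O(|E(G)|^2+T(n))$, which is the claim. There is essentially no obstacle here: the only step that needs a line of argument is the pair of observations $r\le |E(G)|$ and $|E(G_i)|\le |E(G)|$, which simultaneously bounds the number of ${\tt Update}$ calls and the cost of each one by $|E(G)|$; the rest is routine bookkeeping, and the same scheme will be reused in Section \ref{sec:AppGnpNew} with sharper, graph-specific estimates for the $G(n,d/n)$ input.
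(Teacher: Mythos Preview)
Your proposal is correct and follows the same approach as the paper: the paper's entire justification is the single sentence ``Using Theorem~\ref{thrm:UpdateCmplxt} and noting that $r\le |E(G)|$'', which is exactly your core argument for the ${\tt Update}$ loop. You add an explicit accounting of the sequence-construction phase (which the paper leaves unspecified at this level of generality and simply absorbs into the bound), but this is only extra bookkeeping, not a different idea.
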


\noindent
Next,  we investigate  the accuracy of the algorithm.   For any  $c,q \in [k]$ we let $\Omega_i(c,q)$ be 
the set of colourings of $G_i$ which assign the colours $c$ and $q$ to the vertices $v_i$ and $u_i$, 
respectively.  Furthermore,  for two different colours $c,q\in [k]$,  let $S^i_q(c,c)\subseteq \Omega_i(c,c)$
and $S^i_c(q,c)\subseteq \Omega_i(q,c)$ be defined as follows: The set $S^i_q(c,c)$ (resp. $S^i_c(q,c)$) 
contains every $\sigma\in \Omega_i(c,c)$  (resp. $\sigma\in \Omega_i(q,c)$) such that there is no path 
between $v_i$ and $u_i$ (in $G_i$) which is coloured by $\sigma$ using  the colours $c,q$, only. 
\begin{definition}\label{def:SeqIsomorphism}
For every $i=0,\ldots,r-1$, let $\alpha_i\in [0,1]$ be the minimum number such that the following 
holds: For any pair of different colours $c,q$ the sets $S^i_q(c,c)$ and $S^i_c(q,c)$ contain all but
an  $\alpha_i$-fraction  of the colourings in  $\Omega_i(c,c)$ and $\Omega_i(q,c)$,  
respectively.
\end{definition}

\noindent
Clearly the quantities $\alpha_i$ depend on $G_i$ and $k$.
\begin{theorem}\label{thrm:Accuracy}
Let $\mu$ be the uniform distribution over the $k$-colourings of the input graph $G$. Let 
$\hat{\mu}$ be the distribution of the colourings at the output of the algorithm.  It holds that
\begin{displaymath}
||\mu- \hat{\mu}||\leq \sum^{r-1}_{i=0}\alpha_i,
\end{displaymath}
where $\alpha_i$ is from Definition \ref{def:SeqIsomorphism} and $r$ is the number of
terms of the sequence  $G_0,G_1,\ldots, G_r$.
\end{theorem}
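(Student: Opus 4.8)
The plan is to prove this by a hybrid argument over the sequence $G_0, G_1, \ldots, G_r$, tracking how the colouring distribution deviates from uniformity at each step and invoking Theorem \ref{theorem:STEPAccuracy} to bound the error introduced by each single call of ${\tt Update}$. First I would set up notation: let $\mu_i$ denote the uniform distribution over $\Omega_{G_i,k}$, the proper $k$-colourings of $G_i$, and let $\hat{\mu}_i$ denote the actual distribution of the random variable $Y_i$ produced by the algorithm. By construction $\hat{\mu}_0 = \mu_0$ (the algorithm draws $Y_0$ exactly uniformly), and $\hat{\mu}_r = \hat{\mu}$, $\mu_r = \mu$, so the goal is to control $\|\mu_r - \hat{\mu}_r\|$.

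The key structural observation is that a uniformly random proper colouring of $G_{i+1}$ can be obtained from a uniformly random proper colouring of $G_i$ by applying the idealized version of the update: given $Z \sim \mu_i$, if $Z$ is \emph{good} w.r.t. $\{v_i, u_i\}$ then $Z \in \Omega_{G_{i+1},k}$ and is in fact uniform over $\Omega_{G_{i+1},k}$ conditioned on being good, and conditioning away the bad colourings is exactly what ${\tt Update}$ aims to do. More precisely, Theorem \ref{theorem:STEPAccuracy} (applied with $G = G_i$, the marked pair $\{v_i, u_i\}$, and $\alpha = \alpha_i$) tells us that if $W \sim \mu_i$ and we set $W' = {\tt Update}(G_i, v_i, u_i, W, k)$, then the law of $W'$ is within total variation distance $\alpha_i$ of $\mu_{i+1}$ (the uniform distribution over \emph{good} colourings of $G_i$, which is precisely $\Omega_{G_{i+1},k}$, since adding the edge $\{v_i,u_i\}$ forbids exactly the bad colourings). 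So each idealized step moves us from $\mu_i$ to within $\alpha_i$ of $\mu_{i+1}$.

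I would then combine this with the standard fact that applying a fixed (possibly randomized) map does not increase total variation distance: since ${\tt Update}(G_i, v_i, u_i, \cdot, k)$ is a fixed randomized kernel, if $\|\hat{\mu}_i - \mu_i\| \le \delta_i$ then pushing both distributions through this kernel preserves the bound, giving that the law of $Y_{i+1} = {\tt Update}(G_i, v_i, u_i, Y_i, k)$ is within $\delta_i$ of the law of $W'$ above. By the triangle inequality, $\|\hat{\mu}_{i+1} - \mu_{i+1}\| \le \delta_i + \alpha_i$. Starting from $\delta_0 = \|\hat{\mu}_0 - \mu_0\| = 0$ and unrolling the recursion over $i = 0, \ldots, r-1$ yields $\|\hat{\mu}_r - \mu_r\| \le \sum_{i=0}^{r-1} \alpha_i$, which is the claim.

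The main obstacle, and the place I would be most careful, is justifying that the uniform distribution over good colourings of $G_i$ literally equals the uniform distribution over proper colourings of $G_{i+1}$ — i.e., that $E(G_{i+1}) = E(G_i) \cup \{v_i, u_i\}$ with $v_i, u_i$ non-adjacent in $G_i$, so that the proper $k$-colourings of $G_{i+1}$ are exactly the proper $k$-colourings of $G_i$ assigning different colours to $v_i$ and $u_i$. This is immediate from how the sequence is constructed (each $G_i$ is $G_{i+1}$ with one edge deleted), but it is the hinge that lets Theorem \ref{theorem:STEPAccuracy} chain correctly. A secondary point worth spelling out is the data-processing / coupling step: one should note that $Y_{i+1}$ depends on $Y_i$ only through the same randomized procedure applied to $W$, so a coupling of $\hat{\mu}_i$ with $\mu_i$ achieving their total variation distance, composed with the shared randomness of ${\tt Update}$, gives a coupling of $\hat{\mu}_{i+1}$ with the law of $W'$ of the same cost; everything else is the triangle inequality.
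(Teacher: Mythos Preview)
Your proposal is correct and follows essentially the same route as the paper: define $\mu_i$ and $\hat{\mu}_i$, invoke Theorem~\ref{theorem:STEPAccuracy} to bound the error of one ideal step by $\alpha_i$, use the data-processing/coupling fact that pushing two input laws through the same randomized ${\tt Update}$ kernel does not increase their total variation distance, and telescope via the triangle inequality from $\|\mu_0-\hat\mu_0\|=0$. The paper makes the intermediate distribution (your law of $W'$) explicit as $\nu_{i-1}$ and spells out the coupling, but the argument is the same.
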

The proof of Theorem \ref{thrm:Accuracy} appears in  Section \ref{sec:thrm:Accuracy}.

\section{Random Colouring $G(n,d/n)$}\label{sec:AppGnpNew}

In this section, we focus on the case where the input of  ${\tt Random\;   Colouring\;  Algorithm}$
is  $\mathbold{G}=G(n,d/n)$.  This study leads to the proof of Theorems \ref{thrm:GnpAccuracy}
and \ref{thrm:GnpTimeCmplxt}.

We start by describing how do we get  ${G}_0, \ldots, {G}_r$ from $\G$. 
Let ${\cal E}(\mathbold{G})\subseteq 
E(\mathbold{G})$  contain exactly every edge $e \in E(\mathbold{G})$ such that  the 
shortest simple cycle that contains $e$ is of length greater than  $(\log_d n)/9$.
\\ \vspace{-.1cm}

\noindent
{\bf Computing $G_0,\ldots,G_r$:}
The sequence  ${G}_0, \ldots, {G}_r$ is constructed as follows: 
Set $r=|{\cal E}|+1$.  We set ${G}_r=\mathbold{G}$.    Given ${G}_i$ we get ${G}_{i-1}$ by 
removing  a randomly chosen  edge of ${G}_i$ 
which also belongs to ${\cal E}(\G)$, for $i=1,\ldots, r$. 
${G}_0$   contains only the edges of the initial graph which do not belong to ${\cal E}(\G)$.
\\ \vspace{-.1cm}

\noindent
Perhaps it is interesting to describe what motivates the above construction of 
the sequence $G_0,\ldots,G_r$.
Since each $\alpha_i$ depends on $G_i$, we construct the sequence so
as to have $\sum_i\alpha_i$, as small as possible. 
The smaller  the probability the algorithm  encounters a disagreement graph which
includes both $v_i, u_i$  the  smaller  $\alpha_i$s  get. 
Choosing $v_i$ and $u_i$ to be at large distance reduces the probability that the disagreement
graph includes both of them, consequently, $\alpha_i$ gets smaller.
 Our choice of sequence forces $v_i$ and $u_i$ to be at distance greater than
$(\log_dn)/9$ with each other.  To a certain extent, this allows to control the error of the
algorithm, i.e.  $\sum_{i}\alpha_i$.

Given the sequence $G_0,\ldots,G_r$, the next step is  to argue on how can we get a random
$k$-colouring of  $G_0$, efficiently. Our arguments rely on the fact that typically
$G_0$ has a very simple structure, i.e. we use the following  result.
\begin{lemma}\label{lemma:seqsubprop}
For $d>0$, let $\mathfrak{S}_{n,d}$ be the set of all graph on $n$ vertices such that their component structure
is as follows: Each component is either the trivial\footnote{single isolated vertex}, or it is a simple isolated
 cycle
\footnote{the cycles do not share edges nor vertices}
of maximum length $(\log_d n)/9$.
Consider  $\G$ and the sequence ${G}_0,\ldots {G}_r$ created as we described above.
It holds that
$$
\Pr[G_0\in \mathfrak{S}_{n,d}]\geq 1-n^{-2/3}.
$$
\end{lemma}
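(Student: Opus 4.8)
The plan is to show that after removing all edges whose shortest cycle has length at most $(\log_d n)/9$, the surviving graph $G_0$ almost surely consists only of isolated vertices and vertex-disjoint short cycles. Recall that $\mathcal{E}(\mathbold{G})$ contains exactly the edges of $\mathbold{G}$ lying on no cycle of length $\le L$, where $L=(\log_d n)/9$, and $G_0$ is obtained by deleting precisely these edges; equivalently, $G_0$ retains exactly those edges that \emph{do} lie on some cycle of length $\le L$. So the structure of $G_0$ is governed entirely by the short cycles of $\mathbold{G}$ and how they overlap. First I would reduce the claim ``$G_0\in\mathfrak{S}_{n,d}$'' to a clean structural event on $\mathbold{G}$: it suffices that every two distinct cycles of length $\le L$ in $\mathbold{G}$ are vertex-disjoint. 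Indeed, if all short cycles are pairwise disjoint, then each edge retained in $G_0$ lies on a unique short cycle, so each component of $G_0$ containing an edge is exactly one such cycle, and all other vertices are isolated; the length bound $\le L$ is then automatic. (One should also note $L\le(\log_d n)/9$ so these cycles have length at most $(\log_d n)/9$ as required by the definition of $\mathfrak{S}_{n,d}$.)

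The core of the argument is then the standard first-moment / second-moment estimate for short cycles in $G(n,d/n)$. I would bound the probability of the bad event --- that there exist two distinct cycles of length $\le L$ sharing at least one vertex --- by a union bound over all possible such configurations. A pair of short cycles sharing a vertex (or more) contains a subgraph on some number $t$ of vertices with at least $t+1$ edges; more precisely, the union of two distinct cycles that intersect is a connected subgraph with $t$ vertices and at least $t+1$ edges (it has two independent cycles, equivalently cyclomatic number $\ge 2$), and $t\le 2L$. The expected number of appearances in $\mathbold{G}$ of a fixed such subgraph on $t$ vertices with $\ge t+1$ edges is at most $n^t (d/n)^{t+1} = d^{t+1}/n$. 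Summing over all isomorphism types and all embeddings --- there are at most $n^t$ choices of vertex set, times a constant (depending on $t$) number of shapes, and $t$ ranges up to $2L$ --- gives a bound of the form
\begin{displaymath}
\Pr[\text{bad}] \;\le\; \sum_{t\le 2L} (\text{poly in } t)\cdot \frac{d^{t+1}}{n}
\;\le\; \frac{d^{2L+2}\cdot\mathrm{poly}(L)}{n}.
\end{displaymath}
Since $L=(\log_d n)/9$, we have $d^{2L}=n^{2/9}$, so the whole bound is $n^{2/9-1}\cdot\mathrm{poly}(\log n)=n^{-7/9+o(1)}$, which is comfortably $\le n^{-2/3}$ for large $n$. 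This gives the claimed $\Pr[G_0\in\mathfrak{S}_{n,d}]\ge 1-n^{-2/3}$.

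The main obstacle --- though it is more bookkeeping than genuine difficulty --- is carefully enumerating the ``overlapping short cycle'' configurations and verifying the edge-excess bound $e\ge t+1$ uniformly: one must handle two cycles sharing a single vertex, sharing a path, or sharing several disconnected pieces, and also the degenerate possibility that the configuration one is counting is itself a single cycle with a chord. In every case the union is connected with cyclomatic number at least $2$, hence $e\ge t+1$; the cleanest way to see this is that a connected graph on $t$ vertices with $e$ edges has cyclomatic number $e-t+1$, and any two distinct cycles whose union is connected produce cyclomatic number $\ge 2$. One should also double-check the edge cases where the two cycles coincide on all but a few vertices, but these only make the edge count larger relative to the vertex count, which only helps. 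Once the $e\ge t+1$ bound is in hand, the expected-count computation and the choice of the constant $1/9$ in the cycle-length threshold do the rest. \hfill $\Diamond$\\
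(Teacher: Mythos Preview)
Your approach is essentially the same as the paper's: reduce to the event that two short cycles intersect, observe that this forces a subgraph on $t\le 2L$ vertices with at least $t+1$ edges, and bound the probability by a first-moment computation. The paper carries out the same union bound explicitly, summing $\binom{n}{r}\binom{\binom{r}{2}}{r+1}(d/n)^{r+1}$ over $r\le (2/9)\log_d n$ and obtaining a bound of order $n^{\gamma-1}\log n$ with $\gamma=\frac{2\log(e^2d/2)}{9\log d}<1/4$ for large $d$.

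One small imprecision in your write-up: the number of graphs on $t$ labelled vertices with $t+1$ edges is \emph{not} polynomial in $t$ --- it is $\binom{\binom{t}{2}}{t+1}\sim (ct)^{t}$ --- so your ``$(\text{poly in }t)$'' factor is incorrect as stated. This does not break the argument, because $t\le (2/9)\log_d n$ makes this factor $n^{o(1)}$ (indeed $n^{O(\log\log n/\log d)}$), which is still absorbed into the $n^{-7/9+o(1)}$ bound; but you should replace ``poly in $t$'' by the honest $(Ct)^{t+1}$ and note that this contributes only an $n^{o(1)}$ factor. The paper handles this cleanly by keeping the exact binomial coefficient and simplifying to $(e^2d/2)^r$, which is why it needs $d$ large to push the exponent $\gamma$ below $1/3$.
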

The proof of Lemma \ref{lemma:seqsubprop} appears in Section \ref{sec:lemma:seqsubprop}. 

For  ${G}_0\in \mathfrak{S}_{n,d}$, {\em exact} random $k$-colouring can be implemented efficiently. 
In what follows we describe an efficient process that can colour randomly any graph in $\mathfrak{S}_{n,d}$.
\\ \vspace{-.1cm}

\noindent
{\bf Random Colouring in  $\mathfrak{S}_{n,d}$} \\ \vspace{-.7cm} \\
\rule{\textwidth}{1pt} \\ 
{\bf Input:} ${G}\in \mathfrak{S}_{n,d}$, $k$.\\ 
\hspace*{.6cm}{\tt set} ${\cal C}$ to be the set of all cycles in ${G}$\\
\hspace*{.6cm}{\tt for} each isolated vertex $v\in V({G})$ {\tt do} \hspace{4cm} /*Colouring isolated vertices*/\\
\hspace*{1.2cm}{\tt set} $\tau(v)$ a colour chosen uniformly random  from $[k]$ \\
\hspace*{.6cm}{\tt for} each $C=(w_0,\ldots w_l)\in {\cal C}$ {\tt do}  \hspace{4.5cm} /*Colouring isolated cycles*/\\
\hspace*{1.2cm}{\tt set} $\tau(w_0)$ a color chosen uniformly random  from $[k]$\\
\hspace*{1.2cm}{\tt for} $i=1,\ldots, l$ {\tt do}\\
\hspace*{1.8cm}{\tt set} $\mu_{w_i}$ the Gibbs marginal of $w_i$, conditional $\tau(w_0),\ldots,\tau(w_{i-1})$\\
\hspace*{1.8cm}{\tt compute} $\mu_{w_i}$  using Dynamic Programming\\
\hspace*{1.8cm}{\tt set} $\tau(w_i)$ according to $\mu_{w_i}$\\
{\bf Output:} $\tau$\\ \vspace{-.7cm} \\
\rule{\textwidth}{1pt} \\ \vspace{-.7cm} \\

\noindent
The most interesting part of the above algorithm is the one for random colouring of the cycles.
For each cycle $C\in {\cal C}$, the algorithm first  assigns a random colour on the vertex $w_0$. Once 
 $w_0$ is assigned a colour, then we eliminate the cycle structure of $C$ and  now we deal with a  tree of maximum 
 degree 2. This allows to compute the marginal $\mu_{w_i}$, for each vertex $w_i\in C$, by using 
{\em Dynamic Programming} (DP). %

\begin{remark}
The use of DP for computing Gibbs marginals on the trees is well known
to be exact, e.g. see \cite{GraphModels} for an excellent survey on the subject. 
\end{remark}

\begin{remark}
The recursive distributional equations that DP uses in this setting are more or less standard.  E\-xam\-ple of such equations
appear in the proof of Lemma \ref{lemma:spatial-monotone}, in Section \ref{sec:lemma:spatial-monotone}.
\end{remark}
Once  we get an exact random colouring of $G_0$ by using the above algorithm,  ${\tt Random\; Colouring\; Algorithm}$ 
colours the
remaining graphs ${G}_1,\ldots, {G}_r$  by using ${\tt Update}$, as we described in Section \ref{sec:RCAlgorithm}.

Let $\mathfrak{X}_{n,d}$ contain every  graph $G$ on  $n$ vertices such that the following holds:
\begin{enumerate}
\item getting a sequence of subgraphs  $G_0,\ldots, G_r$, as described in Section \ref{sec:AppGnpNew}, 
it holds that  $G_0\in \mathfrak{S}_{n,d}$
\item  $|E(G)|\leq (1+n^{-1/3})dn/2$.
\end{enumerate}
Note that for  some $G$ we have that  $G_0\in \mathfrak{S}_{n,d}$ regardless
of the order we remove the edges for creating the sequence $G_0,\ldots, G_r$.
That is,  whether $\G\in \mathfrak{X}_{n,d}$, or not, depends only on the graph
$\G$.

If the input graph $\G$ does not belong into $\mathfrak{X}_{n,d}$, then the ${\tt Random\; Colouring\; Algorithm}$ 
abandons. 
It turns out that this typically does not happen.  In particular, we have following corollary.

\begin{corollary}\label{cor:G(np)inXnd}
For sufficiently large $d>0$, it holds that $\Pr[\G\in \mathfrak{X}_{n,d}]\geq 1-2n^{-2/3}$.
\end{corollary}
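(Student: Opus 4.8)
The plan is to bound the two "bad events" separately and apply a union bound. Recall that $\G \in \mathfrak{X}_{n,d}$ fails only if either (1) the simple graph $G_0$ produced by the edge-removal procedure is not in $\mathfrak{S}_{n,d}$, or (2) $|E(\G)| > (1+n^{-1/3})dn/2$. For event (1), I would invoke Lemma \ref{lemma:seqsubprop} directly, which gives $\Pr[G_0 \notin \mathfrak{S}_{n,d}] \le n^{-2/3}$; this is the substantive input and the excerpt explicitly permits assuming it. For event (2), the edge count $|E(\G)|$ is a sum of $\binom{n}{2}$ independent Bernoulli$(d/n)$ variables, so $\mathbb{E}|E(\G)| = \binom{n}{2}\frac{d}{n} = \frac{d(n-1)}{2}$, which is already below $dn/2$. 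A standard Chernoff bound on the upper tail of a binomial gives $\Pr\big[|E(\G)| > (1+n^{-1/3})dn/2\big] \le \exp\!\big(-\Theta(n^{-2/3}\cdot dn)\big) = \exp(-\Theta(n^{1/3}))$, which is $o(n^{-2/3})$, so in particular it is at most $n^{-2/3}$ for large $n$.

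Putting these together, $\Pr[\G \notin \mathfrak{X}_{n,d}] \le \Pr[G_0 \notin \mathfrak{S}_{n,d}] + \Pr[|E(\G)| > (1+n^{-1/3})dn/2] \le n^{-2/3} + o(n^{-2/3}) \le 2n^{-2/3}$ for $d$ and $n$ sufficiently large, which is exactly the claim. One small point to verify is that the two conditions defining $\mathfrak{X}_{n,d}$ are genuinely the only constraints and that, as the surrounding text asserts, membership depends only on $\G$ and not on the random choices in building the sequence $G_0,\ldots,G_r$ — but this is stated in the excerpt just before the corollary, so I would simply cite it.

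The main obstacle here is essentially nil: the corollary is a routine consequence of Lemma \ref{lemma:seqsubprop} (which does the real work — analyzing the short-cycle structure of $G(n,d/n)$) plus an elementary concentration estimate for the number of edges. If anything, the only care needed is checking that the Chernoff tail beats $n^{-2/3}$ with room to spare, which it does comfortably since the deviation $n^{-1/3}\cdot dn/2 = \Theta(n^{2/3})$ is polynomially large compared to the mean's square root $\Theta(\sqrt{n})$; hence the tail probability is $\exp(-\Theta(n^{1/3}))$, decaying faster than any polynomial. I would present this as a short two-paragraph proof: first the edge-count bound via Chernoff, then the union bound with Lemma \ref{lemma:seqsubprop}.
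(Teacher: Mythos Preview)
Your proposal is correct and matches the paper's proof essentially line for line: invoke Lemma~\ref{lemma:seqsubprop} for the $G_0\in\mathfrak{S}_{n,d}$ condition, apply a Chernoff bound to control $|E(\G)|$, and finish with a union bound. The only cosmetic difference is the exponent in the Chernoff tail (the paper records $\exp(-n^{1/4})$ rather than your sharper $\exp(-\Theta(n^{1/3}))$), which is immaterial.
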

\begin{proof} 
Lemma
\ref{lemma:seqsubprop}, states that  for the sequence $G_0,\ldots, G_r$ generated from $\G$ as described
in Section \ref{sec:AppGnpNew}  it holds that   $\Pr[G_0\in \mathfrak{S}_{n,d}]\geq 1-n^{-2/3}$. 
Using  Chernoff's 
bounds, e.g. \cite{janson}, we also get 
\[
\Pr\left [|E(\G)|\geq (1+n^{-1/3}){dn}/{2} \right]\leq \exp\left(-n^{1/4} \right).
\]
A simple union bound, yields that indeed $\Pr[\G\in \mathfrak{X}_{n,d}]\geq 1-2n^{-2/3}$.
\end{proof}

In the following two sections we prove Theorems \ref{thrm:GnpAccuracy} and \ref{thrm:GnpTimeCmplxt}.

\subsection{Proof of Theorem \ref{thrm:GnpAccuracy}}\label{sec:thrm:GnpAccuracy}

For proving Theorem \ref{thrm:GnpAccuracy} we need to use the following result,
whose proof appears in Section \ref{sec:thrm:eIsomGnp}.

\begin{theorem}\label{thrm:eIsomGnp}
Let $\epsilon,d,k$ be as in the statement of Theorem \ref{thrm:GnpAccuracy}.
Consider the sequence  $G_0,\ldots, G_r$  ge\-ne\-ra\-ted from $\G$ as described 
in Section \ref{sec:AppGnpNew}.
For any $i\in \{0,\ldots, r-1\}$ it holds that
\begin{displaymath}
\mathbb{E}[\alpha_i]\leq 50{\epsilon^{-1}k(4+\epsilon)} n^{-\left(1+\frac{\epsilon}{36(1+\epsilon/4)\log d} \right)}.
\end{displaymath}
\end{theorem}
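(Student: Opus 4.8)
The plan is to bound $\mathbb{E}[\alpha_i]$ by going through the path-counting inequality \eqref{eq:OmegaRatioSumOfIndicators} and then invoking Theorem \ref{thrm:PrIPGnp}. First I would recall, from Definition \ref{def:SeqIsomorphism} together with \eqref{eq:OmegaRatioSumOfIndicators}, that for a fixed graph $G_i$ the quantity $\alpha_i$ is at most $\max_{c,q}\left\{\Omega'_{c,c}/\Omega_{c,c},\ \Omega'_{q,c}/\Omega_{q,c}\right\}$, and that each of these ratios is at most $\sum_{P}\Pr[\mathbf{I}_{\{P\}}=1]$, where the sum ranges over all simple paths $P$ joining $v_i$ and $u_i$ in $G_i$ and the probability is over a uniformly random $k$-colouring of $G_i$ conditioned on the endpoint colours. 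To make this usable I would first replace the $\max$ over the (at most $k^2$) colour pairs by the sum, so that $\alpha_i \le \sum_{c\neq q}\sum_{P}\Pr[\mathbf{I}_{\{P\}}=1]$; this costs a factor $k^2$ but avoids having to carry a maximum through the expectation.

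Next I would take expectations over the random graph $\mathbold{G}=G(n,d/n)$ (and hence over $G_i$, which is a subgraph of $\mathbold{G}$ obtained by a deletion process that depends only on $\mathbold{G}$). The key point is that whether a given ordered tuple $P=(v_i=w_0,w_1,\dots,w_\ell=u_i)$ of distinct vertices is actually a path in $G_i$ and is monochromatically-bichromatic-coloured is exactly the event $\{\mathbf{I}_{\{P\}}=1\}$ appearing in Theorem \ref{thrm:PrIPGnp}, with $\mathbold{H}=G_i$. Since $E(G_i)\subseteq E(\mathbold{G})$ and $V(G_i)=V(\mathbold{G})$, and since the construction of the sequence forces $v_i,u_i$ to be at graph distance at least $(\log_d n)/9$ in $G_i$, only paths of length $\ell \ge (\log_d n)/9$ can contribute. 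For each such $\ell$ (with $\ell\le\log^2 n$, say, which holds with overwhelming probability because $G_i$ has bounded degree growth; paths longer than $\log^2 n$ I would dispatch by a crude union bound, or by absorbing them into the failure probability of the event $\mathbold{G}\notin\mathfrak{X}_{n,d}$) there are at most $n^{\ell-1}$ ordered tuples of internal vertices of length $\ell$, and Theorem \ref{thrm:PrIPGnp} bounds $\Pr[\mathbf{I}_{\{P\}}=1]\le 2[(1+\epsilon/4)n]^{-\ell}$ with the probability already taken over both the graph and the colouring. Summing, $\mathbb{E}[\alpha_i] \le k^2 \sum_{\ell\ge (\log_d n)/9} n^{\ell-1}\cdot 2[(1+\epsilon/4)n]^{-\ell} = 2k^2 n^{-1}\sum_{\ell\ge(\log_d n)/9}(1+\epsilon/4)^{-\ell}$, a geometric series whose leading term is $(1+\epsilon/4)^{-(\log_d n)/9}$. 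Writing $(1+\epsilon/4)^{-(\log_d n)/9} = n^{-\log(1+\epsilon/4)/(9\log d)}$ and using $\log(1+\epsilon/4)\ge \epsilon/(4(1+\epsilon/4))$ for the geometric-series bookkeeping gives an exponent of at least $\epsilon/\big(36(1+\epsilon/4)\log d\big)$; the geometric-series constant and the $k^2$ get absorbed into the stated prefactor $50\epsilon^{-1}k(4+\epsilon)$. I would choose the constants at the end so the arithmetic closes.

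The main obstacle, and the place I would be most careful, is the interchange of the union bound over paths with the expectation over the random graph, together with the truncation of long paths. The subtlety is that the \emph{number} of candidate paths in $G_i$ is itself random, so I cannot literally write $\mathbb{E}\big[\sum_P \Pr[\mathbf{I}_{\{P\}}=1\mid G_i]\big]$ and pull the sum outside unless I sum over \emph{all} ordered tuples of distinct vertices of $V(\mathbold{G})$ and let $\mathbf{I}_{\{P\}}$ already encode "$P$ is a path in $G_i$". That is precisely the formulation of Theorem \ref{thrm:PrIPGnp}, so the right move is to sum over all tuples from the start rather than over paths of a fixed realized graph, and then the $n^{\ell-1}$ count is a clean deterministic bound. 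The second delicate point is justifying the cutoff $\ell \le \log^2 n$: I would argue that on the event $\mathbold{G}\in\mathfrak{X}_{n,d}$ (whose complement has probability $\le 2n^{-2/3}$ by Corollary \ref{cor:G(np)inXnd}, which is negligible against the claimed bound only if handled separately — so in fact I would instead bound the contribution of paths of length $>\log^2 n$ directly, since even the trivial bound $\Pr[\mathbf{I}_{\{P\}}=1]\le 2[(1+\epsilon/4)n]^{-\ell}$ times $n^{\ell}$ tuples gives $2(1+\epsilon/4)^{-\ell}$, summable to something like $n^{-\omega(1)}$). Everything else is routine: a geometric series, the elementary inequality $\log(1+x)\ge x/(1+x)$, and collecting constants.
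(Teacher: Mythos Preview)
Your overall path-counting plus geometric-series strategy is the right one, and your handling of the long-path cutoff ($\ell>\log^2 n$) is fine. But there is a genuine gap at the point where you invoke Theorem~\ref{thrm:PrIPGnp}.

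The ratios $\Omega'_{c,c}/\Omega_{c,c}$ and $\Omega'_{q,c}/\Omega_{q,c}$ that control $\alpha_i$ are probabilities under the Gibbs measure on $G_i$ \emph{conditioned on the colours of both endpoints}, i.e.\ on $X(v_i)=c$ and $X(u_i)=c$ (resp.\ $X(v_i)=q$, $X(u_i)=c$). Theorem~\ref{thrm:PrIPGnp}, however, is stated for a random colouring conditioned only on $X(w_0)=c$; the colour of the other endpoint $w_\ell$ is \emph{not} fixed there. So the event you want to bound and the event the theorem bounds are not ``exactly'' the same, contrary to what you write. Passing from the doubly-conditioned measure to the singly-conditioned one costs a factor $1/\Pr[X(u_i)=c \mid X(v_i)=c]$, and showing that this marginal is close to $1/k$ is not free: $v_i$ and $u_i$ are only $\Theta(\log_d n)$ apart in a graph with high-degree vertices, so one needs a genuine spatial-mixing argument. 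The paper supplies this as Lemma~\ref{lemma:Pont2PontSpatial} (itself proved via Proposition~\ref{prop:DecayOfPaths}, which is where Theorem~\ref{thrm:PrIPGnp} actually gets applied), obtaining ${\cal P}^i_{c,c}[{\cal Q}_{c,q}]\le 2k\,{\cal P}^i_{c,*}[{\cal Q}_{c,q}]$ and then running the path count against the singly-conditioned measure. Your proposal skips this step entirely.

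A secondary point: your max-to-sum move over colour pairs introduces a factor $k^2$, which is \emph{not} absorbed by the stated prefactor $50\epsilon^{-1}k(4+\epsilon)$ (only one power of $k$ appears there). The paper keeps the $\max$ and pays a single factor $2k$ from the conditioning step above, which is why the constant is linear in $k$. For the downstream Theorem~\ref{thrm:GnpAccuracy} an extra power of $k$ would be harmless since $k$ is fixed, but to prove Theorem~\ref{thrm:eIsomGnp} \emph{as stated} you cannot afford the sum over colour pairs.
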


\begin{theoremproof}{\ref{thrm:GnpAccuracy}}
In light of Corollary \ref{cor:G(np)inXnd}, it suffices to show that  \eqref{eq:thrm:GnpAccuracy} holds  
with sufficiently large probability over the instances $\G$, conditional that $\G\in \mathfrak{X}_{n,d}$.

Let ${\cal A}$ be the event $\G\in \mathfrak{X}_{n,d}$.
First we argue about
$\mathbb{E}\left [ ||\mu- \hat{\mu}|| \  | \ {\cal A }\right  ]$, i.e. the expectation is w.r.t. the instances $\G$.
Using Theorem \ref{thrm:Accuracy} and Theorem \ref{thrm:eIsomGnp}
we have that
\begin{displaymath}
\mathbb{E}\left [ ||\mu- \hat{\mu}|| \ | \  {\cal A} \right  ]\leq \mathbb{E}  \left[ \sum_{i=0}^{r-1} \alpha_i  \ |\  {\cal A} \right],
\end{displaymath}
where the expectation is taken over the instances  $\G$.
Noting that  $\alpha_i\in [0,1]$, we get 
\begin{equation}\label{thrm:GnpAccuracyBase}
\mathbb{E} \left [ ||\mu- \mu'||  \ | \  {\cal A} \right ] \leq \displaystyle \sum_{i=0}^{(1+n^{-1/3})dn/2} 
\mathbb{E} [\alpha_i \ |\  {\cal A} ],
\end{equation}
where the above follows by observing that ${\cal A}$ implies that $r\leq (1+n^{-1/3})dn/2$.

On the other hand for the quantities $\mathbb{E} [\alpha_i \ | \ {\cal A} ]$ we work as follows:
\begin{eqnarray}\nonumber
\mathbb{E}[\alpha_i \ | \ {\cal A}]&\leq &\left(\Pr[ {\cal A} ]\right)^{-1}\cdot 
\mathbb{E}[\alpha_i] \qquad\qquad\qquad\qquad \mbox{[since $\alpha_i\geq 0$]}\nonumber \\
&\leq & 100 \epsilon^{-1} k(4+\epsilon)n^{-\left(1+\frac{\epsilon}{36(1+\epsilon/4)\log d} \right)},\label{eq:EAiCondBound}
\end{eqnarray}
in the final inequality we used Theorem \ref{thrm:eIsomGnp} and  Corollary \ref{cor:G(np)inXnd}.
Plugging  \eqref{eq:EAiCondBound} into \eqref{thrm:GnpAccuracyBase},  we get that
\begin{displaymath}
\mathbb{E}\left [ ||\mu- \hat\mu|| \  | \ {\cal A}\right ]\leq C\cdot  n^{-\frac{\epsilon}{36(1+\epsilon/4)\log d}},
\end{displaymath}
for fixed $C>0$. The theorem follows by applying  Markov's  inequality.
\end{theoremproof}

\subsection{Proof of Theorem \ref{thrm:GnpTimeCmplxt}}\label{sec:thrm:GnpTimeCmplxt}

\noindent
First, we are going to show that, on input $G\in \mathfrak{X}_{n,d}$,  
${\tt Random\; Colouring\; Algorithm}$  has time complexity   $O(n^2)$.  Then, the theorem 
will follow by using Corollary \ref{cor:G(np)inXnd}.

We start by considering the time complexity of the algorithm on input  $G\in \mathfrak{X}_{n,d}$. 
First the algorithm  constructs $G_0,\ldots,G_r$. For this, it needs to  distinguish which edges in  
$E(G)$ do not belong to a short cycle.
This can be done by exploring the structure of the $(\log_d n)/9$-neighbourhood 
around each edge of G by  using {\em Breadth First Search}
(BFS).  The search around each edge requires  $O(n)$ steps, since $|E(G)|=O(n)$.
The exploration is repeated for each edge in $E(G)$. Thus,  the algorithm requires 
$O(n^2)$ steps to find the short cycles. This implies that $G_0,\ldots, G_r$ can be 
constructed in $O(n^2)$ steps.

Since the $|E(G_i)|=O(n)$, for every $i=0,\ldots, r$,  Theorem \ref{thrm:UpdateCmplxt} implies that 
the number of steps required for each ${\tt Update}$ call is $O(n)$.
Consequently,   we need $O(n^2)$ steps for 
all the calls of ${\tt Update}$,  since $r\leq |E(G)|=O(n)$.

It remains to consider   the time complexity of  colouring  randomly $G_0$.  
The algorithms uses  ${\tt Random}$ ${\tt  Colouring\; in\;  \mathfrak{S}_{n,d}}$ (Section \ref{sec:AppGnpNew})
to colour randomly  $G_0$.  
Due to our assumptions it holds that $G_0 \in \mathfrak{S}_{n,d}$. 
Let ${\cal C}$ be the set of cycles in $G_0$.   Note that all the cycles in ${\cal C}$ are simple and  
isolated from each other. 
Also, all the vertices in $G_0$ which are not in a cycle are isolated.

We consider the time complexity of colouring the cycles in ${\cal C}$. For  each   $C=(w_0,\ldots, w_{|C|})\in {\cal C}$, 
first,  the problem  is reduced to computing Gibbs marginals  on a tree of maximum degree 2. This is done by assigning 
 $w_0$ a uniformly random colour  from $[k]$. 
Then, the algorithm colours iteratively the vertices in $C$. At iteration $i$, the colouring
of the vertices $w_1,\ldots, w_{i-1}$ is already known and the algorithm colours $w_i$ as follows:
It computes the marginal  $\mu_{w_i}$, conditional the colour assignment of the vertices $w_0,\ldots, w_{i-1}$, 
 by using Dynamic  Programming. Then it assigns a colour to $w_i$ according to $\mu_{w_i}$.

Given the distribution of the children of $w_i$ w.r.t. the subtree that hangs from them,  the Dynamic 
Program requires  $O(k^2)$ arithmetic  operations  to compute $\mu_{w_i}$.  This means that the algorithm
requires $O(k^2  |C|)$ operations for computing $\mu_{w_i}$. It is clear that each cycle $C$ requires 
at most $O(k^2  |C|^2)$ steps  to be coloured randomly.

Consequently,  the algorithm requires  $O(k^2 n\log^2 n)$  number of steps to colour randomly
all the cycles in ${\cal C}$, since $|C|=O(\log n)$ and $|{\cal C}|=O(n)$. Additionally,  the algorithm requires $O(n)$ 
steps to colour randomly all the $O(n)$ many isolated vertices.  

Concluding,   the time complexity of  
${\tt Random\; Colouring\; in \; \mathfrak{S}_{n,d}}$, for  fixed $k$ is $O(n\log^2 n)$.
This implies that  ${\tt Random\; Colouring\; Algorithm}$,  on input  $G\in \mathfrak{E}_{n,d}$,  has  time complexity 
$O(n^2)$.

The theorem follows.

\section{Proof of Theorem \ref{thrm:eIsomGnp}}\label{sec:thrm:eIsomGnp}

Let $\Lambda_{n,k}$ denote the  set of all the 4-tuples $(G,v,u,\sigma)$ such that $G$ is a $k$ colourable  
graph on $n$ vertices, $v,u\in V(G)$ and $\sigma$ is a $k$-colouring of $G$.  For  $(G,v,u,\sigma)\in \Lambda_{n,k}$ 
and   $q\in [k]\backslash\{\sigma_v\}$, consider the disagreement graph $\mathbold{Q}=\mathbold{Q}(G,v,\sigma,q)$
and let  the event  ${\cal Q}_{\sigma_v,q}=\textrm{``$u\in \mathbold{Q}$''}.$

For  $c_1,c_2 \in [k]$   and  an integer $i\geq 0$ we let the distribution   ${\cal P}^i_{c_1,c_2}$  over
$(G,v,u,Z)\in \Lambda_{n,k}$     be induced by the following  experiment:  Take an instance $\G$ and 
construct the sequence $G_0,\ldots, G_r$ as described in Section \ref{sec:AppGnpNew}. Then,
\begin{enumerate}
\item $G$ is equal to  $G_i$
\item $v$ and $u$ are equal to $v_i$ and $u_i$, respectively
\item $Z$ is distributed uniformly at random in $\Omega_{G}(c_1,c_2)$
\end{enumerate}

\begin{remark}
In $G_{0},\ldots, G_r$, the number of terms in the sequence is a random variable.  
In the definition of ${\cal P}^i_{c_1,c_2}$  if $i>r$ we follow the convention that $G$ is the empty graph with 
probability $1$.
\end{remark}

\noindent
Also, denote by ${\cal P}^i_{*,c_2}$  the distribution when $Z(v)$ is not fixed, i.e. $Z$ is a random $k$-colouring
of $G$, conditional that $Z(u)=c_2$. In the same manner, denote by  ${\cal P}^i_{c_1,*}$, the distribution when 
$Z(u)$  is not fixed. Finally, we define ${\cal P}^i_{*,*}$ when there  is no restriction on the colouring of both $v,u$.

For proving Theorem \ref{thrm:eIsomGnp} we need the following two results.
\begin{proposition}\label{prop:DecayOfPaths}
Let $\epsilon$, $d$ and $k$ be as in the statement of Theorem \ref{thrm:eIsomGnp}.
Let $c,q\in [k]$ be   such that $c\neq q$.  For any $i\geq 0$, it holds that
\[
{\cal P}^i_{c,*}[{\cal Q}_{c,q} ] \leq 10\epsilon^{-1}
(4+\epsilon) n^{-\left(1+\frac{\epsilon}{36(1+\epsilon/4)\log d} \right)}.
\]
\end{proposition}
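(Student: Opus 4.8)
\textbf{Proof plan for Proposition~\ref{prop:DecayOfPaths}.}
The plan is to bound the probability of the event ${\cal Q}_{c,q}$ (``$u$ lies in the disagreement graph $\mathbold{Q}(G,v,Z,q)$'') by the expected number of $v$--$u$ paths that are coloured using only the colours $c,q$, exactly as in \eqref{eq:OmegaRatioSumOfIndicators} and the remark following it. First I would fix $c,q$ and work under the distribution ${\cal P}^i_{c,*}$, so $G=G_i$, $(v,u)=(v_i,u_i)$, and $Z$ is a uniformly random $k$-colouring of $G_i$ conditional on $Z(v)=c$. The key structural observation, recorded in the remark after Definition~\ref{assm:isomorphism}'s neighbourhood (the ``pathological'' characterization), is that ${\cal Q}_{c,q}$ occurs if and only if there is at least one simple path $P=(w_0,\dots,w_\ell)$ from $w_0=v$ to $w_\ell=u$ in $G$ with $Z(w_j)\in\{c,q\}$ for all $j$. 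Hence by the union bound
\begin{displaymath}
{\cal P}^i_{c,*}[{\cal Q}_{c,q}] \;\le\; \sum_{\ell\ge 1}\ \sum_{P:\,|P|=\ell}\Pr[\mathbf{I}_{\{P\}}=1],
\end{displaymath}
where the outer sum ranges over all vertex-permutations $P$ of length $\ell+1$ starting at $v$ and ending at $u$, and the probability is over both the graph instance $\G$ (hence $G_i$) and the colouring $Z$.

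Next I would apply Theorem~\ref{thrm:PrIPGnp} with $\mathbold{H}=G_i$ (which is indeed a subgraph of $\G$ on the same vertex set), giving $\Pr[\mathbf{I}_{\{P\}}=1]\le 2[(1+\epsilon/4)n]^{-\ell}$ for every $P$ of length $\ell$, provided $\ell\le\log^2 n$. The disagreement graph in $G_i$ is connected and avoids short cycles only up to length $(\log_d n)/9$; but since we only need an \emph{upper} bound on the probability, I would split the path-length sum at a threshold. For $\ell\le\log^2 n$ I use Theorem~\ref{thrm:PrIPGnp}; crucially, because $v_i$ and $u_i$ were chosen at graph distance greater than $(\log_d n)/9$ in $G_i$, any path $P$ realizing ${\cal Q}_{c,q}$ has $\ell\ge(\log_d n)/9$, which is what produces the extra $n^{-\epsilon/(36(1+\epsilon/4)\log d)}$ saving beyond the baseline $n^{-1}$. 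For the (negligible) tail $\ell>\log^2 n$ I would bound the contribution crudely — the number of such paths is at most $n^\ell$ and each contributes at most $2[(1+\epsilon/4)n]^{-\ell}$ on the range where the theorem applies, and for longer paths a trivial $k^{-(\ell-1)}$-type bound on the colouring event suffices — so that this tail is $n^{-\omega(1)}$ and absorbed into the constant.

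The counting step is then straightforward: the number of permutations $P=(w_0,\dots,w_\ell)$ with $w_0=v$, $w_\ell=u$ and $w_1,\dots,w_{\ell-1}$ distinct vertices of $V(\G)$ is at most $n^{\ell-1}$. Combining,
\begin{displaymath}
{\cal P}^i_{c,*}[{\cal Q}_{c,q}] \;\le\; \sum_{\ell\ge (\log_d n)/9}^{\log^2 n} n^{\ell-1}\cdot 2[(1+\epsilon/4)n]^{-\ell} \;+\; n^{-\omega(1)}
\;=\; \frac{2}{n}\sum_{\ell\ge (\log_d n)/9} (1+\epsilon/4)^{-\ell} \;+\; n^{-\omega(1)},
\end{displaymath}
and the geometric series sums to $\frac{(1+\epsilon/4)}{\epsilon/4}\,(1+\epsilon/4)^{-(\log_d n)/9} = \frac{4(4+\epsilon)}{\epsilon}\,n^{-\frac{\log(1+\epsilon/4)}{9\log d}}$. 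Using $\log(1+\epsilon/4)\ge \frac{\epsilon/4}{1+\epsilon/4}$ gives exponent at least $\frac{\epsilon}{36(1+\epsilon/4)\log d}$, so the whole bound is at most $\frac{8(4+\epsilon)}{\epsilon}\,n^{-\left(1+\frac{\epsilon}{36(1+\epsilon/4)\log d}\right)}$, which is within the claimed $10\epsilon^{-1}(4+\epsilon)n^{-(\cdots)}$ once the tail term is absorbed. The main obstacle I anticipate is not any single estimate but being careful about the conditioning: Theorem~\ref{thrm:PrIPGnp} is stated for $Z$ conditioned only on $Z(w_0)=c$, which matches ${\cal P}^i_{c,*}$ exactly, so no extra conditioning on $Z(u)$ enters — one must check that the event ${\cal Q}_{c,q}$ and the path indicators are being evaluated under the same measure, and that the distance-lower-bound $\ell\ge(\log_d n)/9$ genuinely holds for \emph{every} $G_i$ in the support of ${\cal P}^i_{c,*}$ (this is where the construction of $G_0,\dots,G_r$ via removing only edges not on short cycles is used).
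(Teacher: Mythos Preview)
Your approach is essentially the paper's: union-bound over $v$--$u$ paths, apply Theorem~\ref{thrm:PrIPGnp}, and sum a geometric series starting at $\ell\ge(\log_d n)/9$. Two points deserve attention, and the first is a genuine gap.

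\textbf{The tail $\ell>\log^2 n$.} Your ``trivial $k^{-(\ell-1)}$-type bound on the colouring event'' is not available anywhere in the paper and is not obvious for an arbitrary subgraph of $\G$: the Gibbs marginal of a vertex on a long path can be far from $1/k$ under a bad boundary, so a naive product bound does not hold pointwise. The paper handles this differently and more cleanly. It replaces the event ``$v$ and $u$ are joined by a disagreement path of length $>\log^2 n$'' by the larger event ``there is a disagreement path from $v$ of length \emph{exactly} $\log^2 n$ to \emph{some} vertex''. This needs Theorem~\ref{thrm:PrIPGnp} only at $\ell=\log^2 n$ (within its range); counting $n^{\ell}$ paths (endpoint free) instead of $n^{\ell-1}$, one still obtains $2p^{-1}(1+\epsilon/4)^{-\log^2 n}=n^{-\omega(1)}$. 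You should use this truncation trick rather than invoke an unjustified per-path bound for arbitrarily long $\ell$.

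\textbf{The conditioning you flag.} This is real and the paper does not leave it as a ``check''. Under ${\cal P}^i_{c,*}$ the pair $(v_i,u_i)$ is the endpoints of a randomly removed edge of $\G$, hence not a uniformly random ordered pair of vertices, whereas Theorem~\ref{thrm:PrIPGnp} is stated for a \emph{fixed} permutation $P$. The paper lets $p$ be the probability that a uniformly random edge of $\G$ lies on no short cycle, observes $p\ge 1-n^{-9/10}$, and multiplies all path-expectations by $p^{-1}\le 2$ to pass from the unconditional expectation (to which Theorem~\ref{thrm:PrIPGnp} applies directly by linearity) to the conditional one. With this correction your geometric-series computation goes through and matches the paper's constant.
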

The proof of Proposition \ref{prop:DecayOfPaths} appears in Section \ref{sec:prop:DecayOfPaths}. 
\begin{lemma}\label{lemma:Pont2PontSpatial}
Let $\epsilon$, $d$, $k$ be as in the statement of Theorem \ref{thrm:eIsomGnp}. For any $c\in [k]$
and any $i\geq 0$ it holds that
\[
||{\cal P}^i_{c,*}(\cdot )- {\cal P}^i_{*,*}(\cdot )||_{\{u_i\}}\leq  n^{-1}. 
\]
\end{lemma}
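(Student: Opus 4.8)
The statement says that conditioning a uniform random $k$-colouring of $G = G_i$ on the colour of $v = v_i$ barely changes the marginal law of the colour of $u = u_i$. Since $v_i$ and $u_i$ were chosen, by construction of the sequence $G_0, \ldots, G_r$, so that the edge $\{v_i,u_i\}$ lies on no short cycle, the graph distance between $v_i$ and $u_i$ in $G_i$ is at least $(\log_d n)/9$. So the natural route is a \emph{spatial mixing / correlation decay} argument: I would show that the influence of the colour at $v_i$ on the distribution at $u_i$ decays exponentially in that distance, and $(\log_d n)/9$ suffices to push the total variation distance below $n^{-1}$.

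The plan is as follows. First I would pass from the fixed graph to a random graph statement: the distribution ${\cal P}^i_{c,*}$ is over pairs $(G, Z)$ where $G$ is a typical instance of $G_i$ and $Z$ a uniform colouring conditioned on $Z(v)=c$; it suffices to bound the TV distance on the $u_i$-marginal for a $1 - o(n^{-1})$ fraction of instances, since the residual fraction contributes at most $o(n^{-1})$ to the averaged TV distance. On a typical instance, the radius-$R$ neighbourhood of $v_i$ in $G_i$, with $R = \lfloor (\log_d n)/18 \rfloor$ say, is a tree (or tree-like with at most one cycle) of maximum degree $O(\log n / \log\log n)$, and similarly for $u_i$, and these two neighbourhoods are disjoint because $\mathrm{dist}(v_i,u_i) \ge (\log_d n)/9 > 2R$. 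Second, I would run the standard recursive coupling argument for colourings on the tree-like neighbourhood of $u_i$: condition on an arbitrary boundary colouring at distance $R$ from $u_i$, express the marginal at $u_i$ via the recursive distributional (belief-propagation) equations mentioned in the paper, and use the contraction of these recursions — this is exactly the Gibbs uniqueness estimate of Jonasson \cite{TreeUniq}, extended (as the paper notes after Theorem \ref{thrm:PrIPGnp}) from $\Delta$-regular trees to trees of maximum degree $\Delta \approx d$ — to show that two different boundary conditions produce marginals at $u_i$ that differ by at most $(\text{something} < 1)^{R}$. Since $R = \Theta(\log n)$, this contraction factor to the power $R$ is $n^{-\Omega(1)}$, in fact $\le n^{-1}$ for $n$ large because the constant $\epsilon$ in $k \ge (1+\epsilon)d$ gives a uniform contraction rate bounded away from $1$ (cf.\ the exponent $\epsilon/(36(1+\epsilon/4)\log d)$ appearing elsewhere, which is what a careful bookkeeping of the contraction rate produces). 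Finally, conditioning the \emph{full} colouring of $G_i$ on $Z(v_i) = c$ versus not conditioning at all amounts, from the viewpoint of $u_i$, to two different (random) boundary conditions at distance $R$ from $u_i$ — the effect of $Z(v_i)$ can only propagate to $u_i$ through the region between the two neighbourhoods — so the spatial mixing bound applies and yields $\|{\cal P}^i_{c,*} - {\cal P}^i_{*,*}\|_{\{u_i\}} \le n^{-1}$.

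Concretely, the key steps in order: (1) record that $\mathrm{dist}_{G_i}(v_i,u_i) \ge (\log_d n)/9$ by construction; (2) a structural lemma that with probability $1 - o(n^{-1})$ over $\G$ the $R$-neighbourhoods of $v_i$ and $u_i$ in $G_i$ are disjoint and tree-like of max degree $O(\log n/\log\log n)$; (3) set up the BP recursion for the marginal at $u_i$ on that neighbourhood with an arbitrary boundary condition; (4) invoke the uniqueness/contraction estimate (Jonasson, extended to bounded-degree trees) to get an exponential-in-$R$ bound on the discrepancy between any two boundary conditions; (5) translate "conditioning on $Z(v_i)=c$" into a boundary condition and combine with (4); (6) handle the atypical instances by the trivial bound $\|\cdot\|\le 1$ times their probability. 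The main obstacle — and the place where the argument needs real care — is step (4) together with the degree control in step (2): one must check that the contraction rate of the colouring recursion stays uniformly bounded below $1$ when the maximum degree is not exactly $d$ but grows like $\log n/\log\log n$ at a few vertices, so that raising it to the power $R = \Theta(\log n)$ still beats $n^{-1}$; this is where the slack provided by $k \ge (1+\epsilon)d$ (as opposed to $k \ge d$) is essential, and where I would lean on the monotonicity remark following Theorem \ref{thrm:PrIPGnp} to reduce to the worst-case max-degree bound.
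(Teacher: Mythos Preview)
Your plan takes a genuinely different route from the paper, and the route you propose has a real gap at exactly the point you flag as ``the main obstacle.''

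The paper's proof is very short and does \emph{not} do spatial mixing on a neighbourhood of $u_i$ at all. Instead it couples a random colouring $X$ with $X(v)=c$ to a random colouring $Z$ with $Z(v)=q$ (uniform $q\in[k]$) by setting $Z={\tt Switching}(G,v,X,q)$; Claim~\ref{claim:OldIsomorph} shows this is a bijection $\Omega_c\to\Omega_q$, so $Z$ has the right law. The key observation is then purely combinatorial: $X(u)\neq Z(u)$ if and only if $u$ lies in the disagreement graph $\mathbold{Q}(G,v,X,q)$, i.e.\ the event ${\cal Q}_{c,q}$ holds. Hence $\|{\cal P}^i_{c,*}-{\cal P}^i_{*,*}\|_{\{u_i\}}\leq {\cal P}^i_{c,*}[{\cal Q}_{c,q}]$, and Proposition~\ref{prop:DecayOfPaths} (already needed for Theorem~\ref{thrm:eIsomGnp}) gives the bound $n^{-1}$. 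So Lemma~\ref{lemma:Pont2PontSpatial} is essentially a corollary of Proposition~\ref{prop:DecayOfPaths} via the switching bijection.

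Your approach tries to invoke Jonasson's contraction on the depth-$R$ tree around $u_i$, with $R=\Theta(\log_d n)$. The problem is that this tree has $n^{\Theta(1)}$ vertices, and since $d$ is a fixed constant, it almost surely contains vertices of degree far exceeding $k$ (the fraction of vertices of degree $>(1+\epsilon/3)d$ is $e^{-\Theta(d)}$, a fixed positive constant, so there are $n^{\Theta(1)}$ such vertices in the ball). Jonasson's bound, and its extension in Proposition~\ref{prop:spatial-monotone}, requires $k\geq\Delta+2$ where $\Delta$ is the maximum degree; at a vertex of degree $>k$ you get no contraction whatsoever. So step~(4) of your plan does not go through as stated. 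The ``monotonicity remark'' you appeal to concerns passing from $\Delta$-regular trees to trees of maximum degree at most $\Delta$, and the edge-removal monotonicity from $\G$ to $\mathbold H$; neither speaks to vertices whose degree exceeds $k$. To make a spatial-mixing argument work here you would need to handle high-degree vertices along each path separately --- which is exactly the content of the ``path neighbourhood revealing'' machinery and Lemma~\ref{lemma:failProb} that go into the proof of Theorem~\ref{thrm:PrIPGnp}, and hence into Proposition~\ref{prop:DecayOfPaths}. At that point you would have re-derived the paper's estimate rather than bypassed it; the paper simply cites it.
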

The proof of Lemma \ref{lemma:Pont2PontSpatial} appears in Section \ref{sec:lemma:Pont2PontSpatial}.\\

\begin{theoremproof}{\ref{thrm:eIsomGnp}}
It is elementary to verify that  
\begin{equation}\label{eq:Base4Eai}
\mathbb{E} [\alpha_i]\leq \max_{c,q\in [k]:c\neq q}\left\{ {\cal P}^i_{c,c}[  {\cal Q}_{c,q} ]+{\cal P}^i_{q,c}[ {\cal Q}_{q,c} ]\right\}.
\end{equation}
The theorem  follows by bounding appropriately the probability terms in the r.h.s. of \eqref{eq:Base4Eai}.

Given  $(G,v,u,\sigma)\in \Lambda_{n,k}$, we let the events  $E:=$``$\sigma(v)=\sigma(u)$" and 
$A_{c_1}:=$``$\sigma(u)=c_1$", for every $c_{1}\in [k]$.  Since it holds that  
${\cal P}^i_{c,*}[{\cal Q}_{c,q}] \geq {\cal P}^i_{c,*}[{\cal Q}_{c,q}|E]\cdot  {\cal P}^i_{c,*}[E]$ and 
 ${\cal P}^i_{c,*}[\cdot |E]={\cal P}^i_{c,c}[\cdot ]$, we get that 
\begin{equation}\label{eq:PrBI}
{\cal P}^i_{c,c}[  {\cal Q}_{c,q} ] \leq\frac{1}{{\cal P}^i_{c,*}[E]}  {\cal P}^i_{c,*}[{\cal Q}_{c,q}].
\end{equation}

\noindent
Noting that ${\cal P}^i_{c,*}[E]={\cal P}^i_{c,*}[A_c]$ and  ${\cal P}^i_{*,*}[A_c]=k^{-1}$,
from Lemma \ref{lemma:Pont2PontSpatial} we get that
\begin{equation}\label{eq:PrEi}
\left |{\cal P}^i_{c,*}[E] -k^{-1}\right |\leq n^{-1}.
\end{equation}
Using \eqref{eq:PrEi} and \eqref{eq:PrBI} we get that 
\begin{equation}\label{eq:PiCCUpperBound}
{\cal P}^i_{c,c}[ {\cal Q}_{c,q} ]\leq \displaystyle  2k \cdot {\cal P}^i_{c,*}[ {\cal Q}_{c,q}] 
\leq 20\epsilon^{-1}{k(4+\epsilon)}n^{-\left(1+\frac{\epsilon}{36(1+\epsilon/4)\log d} \right)},
\end{equation}
where the last inequality follows from Proposition \ref{prop:DecayOfPaths}.
Applying the same  arguments we, also, get that
\begin{equation}\label{eq:PiCQUpperBound}
{\cal P}^i_{q,c}[ {\cal Q}_{q,c} ]\leq
20\epsilon^{-1}{k(4+\epsilon)}n^{-\left(1+\frac{\epsilon}{36(1+\epsilon/4)\log d} \right)}.
\end{equation}
The  bounds in \eqref{eq:PiCCUpperBound} and \eqref{eq:PiCQUpperBound} hold for any  
$c,q\in [k]$, different with each other. The theorem follows by plugging \eqref{eq:PiCCUpperBound} 
and \eqref{eq:PiCQUpperBound} into \eqref{eq:Base4Eai}.
\end{theoremproof}

\subsection{Proof of Lemma \ref{lemma:Pont2PontSpatial}}\label{sec:lemma:Pont2PontSpatial}

Let  $(G,v,u,X),(G,v,u,Z) \in \Lambda_{n,k}$, for some fixed $G$. Let   $X, Z$ be   two  coupled random 
colourings of $G$. In particular for  $X,Z$ we have the following: Assuming
that $X(v)=c$,   we   choose  $q$ u.a.r. among $[k]$  and we set $Z(v)=q$.  Depending on whether
$c=q$ or not the coupling does the following choices.
\begin{description}
\item[Case ``$c=q$":]   Couple $Z$ and $X$ identically, i.e. $X=Z$
\item[Case ``$c\neq q$":]  Set $Z={\tt Switching}(G,v,X,q)$, 
\end{description}

\noindent
where ${\tt Switching}$ is  from Section \ref{sec:Problem1}. Claim \ref{claim:OldIsomorph} establishes  
that $Z$ follows the appropriate distribution.
\begin{myclaim}\label{claim:OldIsomorph}
 ${\tt Switching}(G,v, X, q)$ is a random colouring of $G$ conditional 
on that  $v$ is coloured $q$.
\end{myclaim}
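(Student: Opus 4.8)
The plan is to show that the coupling described above sets up a bijection between $\Omega_G(c,\cdot)$ (the colourings assigning $v$ the colour $c$) and $\Omega_G(q,\cdot)$ (those assigning $v$ the colour $q$), for each fixed pair $c \neq q$, and that this bijection is exactly the restriction of ${\tt Switching}(G,v,\cdot,q)$. Since $X$ is uniform over $\Omega_{G,k}$ conditional on $X(v)=c$, pushing it through a bijection onto $\Omega_G(q,\cdot)$ yields the uniform distribution on $\Omega_G(q,\cdot)$, which is precisely a random colouring of $G$ conditional on $v$ being coloured $q$.

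First I would record that ${\tt Switching}(G,v,X,q)$ indeed lands in $\Omega_G(q,\cdot)$: by Lemma~\ref{lemma:SwitchProperColour} the output $\tau$ is a proper $k$-colouring, and since $v \in V(\mathbold{Q})$ and $X(v)=c$, the switching reassigns $v$ to $q$, so $\tau(v)=q$. Next I would argue injectivity and surjectivity together by exhibiting the inverse. The key structural observation is that the disagreement graph is determined by the \emph{unordered} pair of colours $\{c,q\}$ together with the connected component of $v$ in the subgraph induced by $X^{-1}(c)\cup X^{-1}(q)$; swapping the two colour classes within $\mathbold{Q}(G,v,X,q)$ does not change which vertices belong to that component, hence $V(\mathbold{Q}(G,v,X,q)) = V(\mathbold{Q}(G,v,\tau,c))$, where $\tau = {\tt Switching}(G,v,X,q)$. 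Therefore applying ${\tt Switching}(G,v,\cdot,c)$ to $\tau$ swaps the colour classes back on exactly the same vertex set and recovers $X$. This shows ${\tt Switching}(G,v,\cdot,q)\colon \Omega_G(c,\cdot)\to\Omega_G(q,\cdot)$ and ${\tt Switching}(G,v,\cdot,c)\colon \Omega_G(q,\cdot)\to\Omega_G(c,\cdot)$ are mutually inverse, so both are bijections.

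The main obstacle, and the step deserving the most care, is the claim $V(\mathbold{Q}(G,v,X,q)) = V(\mathbold{Q}(G,v,\tau,c))$ — i.e.\ that the Kempe chain through $v$ on colours $\{c,q\}$ is invariant under the swap. This needs the observation that a vertex $x$ lies in the chain iff there is a path $w_1=v,\dots,w_\ell=x$ with every $w_j$ coloured in $\{c,q\}$ under $X$; since $\tau$ agrees with $X$ outside $\mathbold{Q}$ and permutes $\{c,q\}$ inside $\mathbold{Q}$, the set of vertices reachable from $v$ via $\{c,q\}$-coloured paths is identical under $\tau$ (using Definition~\ref{def:DiagreementGraph}). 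One must also note the edge case where $X(v)\ne c$ is impossible here since we conditioned on $X(v)=c$, and that $q\ne c$ guarantees ${\tt Switching}$ is applied rather than the identity branch.

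Finally, I would assemble the probabilistic conclusion: conditional on $X(v)=c$, the variable $X$ is uniform on $\Omega_G(c,\cdot)$; the map $Z = {\tt Switching}(G,v,X,q)$ is a bijection onto $\Omega_G(q,\cdot)$; hence $Z$ is uniform on $\Omega_G(q,\cdot)$, which is the distribution of a random $k$-colouring of $G$ conditional on $v$ being coloured $q$. This is exactly the assertion of Claim~\ref{claim:OldIsomorph}.
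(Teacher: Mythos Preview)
Your proposal is correct and follows essentially the same approach as the paper: both argue that ${\tt Switching}(G,v,\cdot,q)\colon \Omega_c\to\Omega_q$ is a bijection by exhibiting ${\tt Switching}(G,v,\cdot,c)$ as its inverse, and then conclude that the uniform distribution is preserved. You supply more detail than the paper on the key step that the Kempe chain $V(\mathbold{Q})$ is invariant under the swap (which the paper simply asserts), but the overall structure is identical.
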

\begin{proof}
It suffices to show that the sets $\Omega_c=\cup_{c' \in [k]} \Omega_i(c,c')$ and 
$\Omega_q=\cup_{c'\in[k]}\Omega_i(q,c')$   admit the bijection ${\tt Switching}(G,v,\cdot,q):\Omega_c\to \Omega_q$.

First, note that  Lemma \ref{lemma:SwitchProperColour} implies that if $\tau={\tt Switching}(G,v, \sigma,q)$, 
then  $\tau\in \Omega_{G,k}$. Also,  it is direct that $\tau\in \Omega_{q}$. Second, we need to show that 
the mapping ${\tt Switching}(G,v,\cdot, q):\Omega_c\to \Omega_q$ is  {\em surjective}, i.e. for any 
$\sigma\in \Omega_q$ there is a  $\sigma'\in \Omega_c$ such that $\sigma={\tt Switching}(G,v,\sigma',q)$.
Clearly, such $\sigma'$ exists. In particular, it holds that $\sigma'={\tt Switching}(G,v,\sigma,c)$. The last 
observation also implies that the mapping is {\em one-to-one}. Since ${\tt Switching}(G,v,\cdot,c)$ is surjective
and one-to-one  it is a bijection. The claim follows.
\end{proof}

\noindent
For the case where $q\neq c$,  consider the disagreement  graph $\mathbold{Q}=\mathbold{Q}(G,v, X,q)$. 
We remind the reader that  the event ${\cal Q}_{c,q}:=$``$u\in \mathbold{Q}"$. 
Due to the way we construct  $Z$ we have that  the event ${\cal Q}_{c,q}$ holds if and only if
$X(u)\neq Z(u)$ holds.  
That is,
\begin{equation}\label{eq:lemma:Pont2PontSpatial4FixedG}
\Pr[X(u)\neq Z(u)] \leq  \Pr[{\cal Q}_{c,q}].
\end{equation}
Note that the probability terms above hold for any $k$-colourable graph $G$.

For our purpose, we need to consider $(G,v,u,X),(G,v,u,Z)$ distributed as in ${\cal P}^i_{c,*}$ and
${\cal P}^i_{q,*}$ respectively, for $q\neq c$.  For such 4-tuples,  
 \eqref{eq:lemma:Pont2PontSpatial4FixedG} implies that 
\[
\Pr[X(u)\neq Z(u)]\leq {\cal P}^i_{c,*}[Q_{c,q}].
\]
Note that the above is derived by taking averages w.r.t. the graph  instance $G_i$ in the  sequence 
$G_0,\ldots,G_r$  where  $(v,u)$ correspond to $(v_i,u_i)$. 
The lemma follows by noting that
\[
||{\cal P}^i_{c,*}(\cdot )- {\cal P}^i_{*,*}(\cdot )||_{\{u\}}\leq  {\cal P}^i_{c,*}[Q_{c,q}], 
\]
while from Proposition \ref{prop:DecayOfPaths} we have that ${\cal P}^i_{c,*}[Q_{c,q}]\leq n^{-1}$.

\section{Proof of Proposition \ref{prop:DecayOfPaths}} \label{sec:prop:DecayOfPaths}

Let $(G,v, u,X)$ be distributed as in ${\cal P}^i_{c,*}$.
Every path $P$ in $G$ which start from $v$ and $\forall w\in P$ we have $X(w)\in\{c,q\}$ is called 
 {\em path of disagreement}.
It holds that
\[
{\cal P}^i_{c,*}[{\cal Q}_{c,q} ]\leq 
{\cal P}^i_{c,*}[B ]+{\cal P}^i_{c,*}[C],
\]
where the events $B$ and $C$ are  as follows:  $B:=$``$v$ and $u$ are connected through a path of 
disagreement of length at most $\log^2 n$''.  $C:=$``$v$ and $u$ are connected through a path of length 
greater  than $\log^2 n$'.

Let, also, the event $C':=$ ``there is a path of disagreement starting  from $v$ and has length greater 
than $\log^2 n$". Note that the event $C'$ does not specify the end vertex of the path of disagreement.
It is immediate that ${\cal P}^i_{c,*}[C'] \geq {\cal P}^i_{c,*}[C]$,  since, the event $C$ is included in  the
event $C'$.  Thus, it holds that 
\[
{\cal P}^i_{c,*}[{\cal Q}_{c,q} ]\leq 
{\cal P}^i_{c,*}[B ]+{\cal P}^i_{c,*}[C'].
\]
The proposition will follow by bounding appropriately the probabilities ${\cal P}^i_{c,*}[B]$ and ${\cal P}^i_{c,*}[C']$.

For every vertex $w$, we  let $\Gamma_{w}(l)$ denote the number of paths  of disagreement of length $l$ that 
connect $v$ and $w$. From Markov's  inequality  we get that
\begin{equation}\label{eq:calPBBound}
{\cal P}^i_{c,*}[B]\leq \mathbb{E}_{{\cal P}^i_{c,*}}\left[\sum_{l\leq \log^2 n}\Gamma_{u}(l)\right],
\end{equation}
where $\mathbb{E}_{{\cal P}^i_{c,*}}[\cdot]$ is the expectation w.r.t. ${\cal P}_{i_{c,*}}$.
For bounding ${\cal P}^i_{c,*}[C']$ we use the following inequality
\begin{equation}\label{eq:calPCBound}
{\cal P}^i_{c,*}[C']\leq \mathbb{E}_{{\cal P}^i_{c,*}}\left[\sum_{w}\Gamma_w(\log^2 n)\right],
\end{equation}
where the summation on the r.h.s. of the inequality, above, runs over all the vertices of the graph.

So as to compute the expectation both in \eqref{eq:calPBBound} and \eqref{eq:calPCBound}
we use Theorem \ref{thrm:PrIPGnp}. However,  we note  that the pair of vertices $v,u$ we consider
is not a uniformly random one. Since  we consider the probability distribution ${\cal P}^i_{c,*}$, the pair $v,u$
is distributed uniformly at random among the pair of vertices which are at distance greater than $(\log_dn)/9$
in $\G$.

Letting   $p$ be the probability that  a randomly chosen edge from $\G$ does not belong to a cycle of length 
less than $(\log_d n)/9$.  Using Theorem \ref{thrm:PrIPGnp}  we get that
\begin{equation}
\mathbb{E}_{{\cal P}^{i}_{c,*}} \left[\sum_{l\leq \log^2 n}\Gamma_{u}(l)\right] 
\leq 2p^{-1}\sum^{\log^2 n}_{l\geq l_0} n^{l-1}\left({(1+\epsilon/4)n}\right)^{-l},
\qquad \textrm{for }l_0=(\log_d n)/9+1.
\end{equation}
Let us explain how do we get the above inequality from {Theorem
 \ref{thrm:PrIPGnp}}. If the vertices $v,u$ were not conditioned to be at distance 
 greater than $(\log_d n)/9$, then the expected number of paths of disagreement of length $l$ between
them   is equal to the number of possible paths of length
$l$ times the probability each of these paths is a path of disagreement. Clearly the number of the possible
paths is at most  $n^{l-1}$, i.e. we have fixed the first and the last vertex of the paths. From 
{Theorem   \ref{thrm:PrIPGnp}}
we have that the probability of each of these paths to be disagreeing is $2\left({(1+\epsilon/4)n}\right)^{-l}$.
We divide by $p$ due to conditioning that the vertices $v,u$  are not entirely  random, since we 
have conditioned that their distance is larger than $(\log_d n)/9$.

It is direct to show that it holds that $p\geq 1-n^{-9/10}$. Then, we have that  
\begin{equation}\label{eq:FinalBoundBPathDis}
\mathbb{E}_{{\cal P}^{i}_{c,*}} \left[\sum_{l\leq \log^2 n}\Gamma_{u}(l)\right] 
\leq 4\epsilon^{-1} {(4+\epsilon)}n^{-1-\frac{\epsilon}{36(1+\epsilon/4)\log d}}.
\end{equation}
Working in the same manner for  \eqref{eq:calPCBound} we get that
\begin{eqnarray}
\mathbb{E}_{{\cal P}^{i}_{c,*}}
\left[\sum_{w}\Gamma_w(\log^2 n)\right]
&\leq& 2p^{-1}\left(1+\epsilon/4 \right)^{-\log^2 n} \nonumber \\
& \leq&2 p^{-1}n^{-((\log n) \cdot\log(1+\epsilon/4))} \leq n^{-\sqrt{\log n}},\label{eq:FinalBoundCPathDis}
\end{eqnarray}
where the last inequality holds for  large $n$ and noting that $p>1/2$. Observe that in the second case the 
number of paths of length $l$ that emanate from $v$ is  at most $n^l$, as  we do not fix the last vertex of the path.

Using \eqref{eq:FinalBoundBPathDis}  and \eqref{eq:calPBBound} we bound 
appropriately ${\cal P}^i_{c,*}[B]$. Using  \eqref{eq:FinalBoundCPathDis}
and  \eqref{eq:calPCBound} we bound appropriately ${\cal P}^i_{c,*}[C']$. The proposition follows.

\section{Proof of Theorem \ref{thrm:PrIPGnp}}\label{sec:thrm:PrIPGnp}

For the sake of brevity we denote with $P$ not only the permutation of the vertices $w_0,\ldots,w_{\ell}$
but the corresponding path in $\SG$, if such path exists.
The probability term in \eqref{eq:thrm:PrIPGnp} is w.r.t. both the randomness of the graph $\SG$  and the random $k$-colourings
of $\SG$.  That is, for  $\mathbf{I}_{\{P\}}=1$, first we need  to have that the  vertices in the permutation 
$P$  form  path in $\SG$.  Then,  given that  $\SG$ contains the path $P$, we need to bound the probability
that this path is 2-coloured in a random $k$-colouring of $\SG$.  Clearly, the challenging part is the second one.
We denote $\SG_P$ the  graph  $\SG$ conditional that the path  $P$ appears in the graph.

Our approach  is as follows: Given $\SG_P$,  first we specify an appropriate subgraph of $\SG_P$ which 
includes the path $P$. We call this subgraph $\mathbold{N}$. Also, we specify a set $\mathbold{B}\subset V(\mathbold{N})$ 
such that $\mathbold{B}$ separates $V(\mathbold{N})\backslash \mathbold{B}$ from the rest of the graph $\SG_P$.
We   set  an appropriate  (worst case) boun\-dary condition $\sigma_{\mathbold{B}} \in [k]^{\mathbold{B}}$
on $\mathbold{B}$. 
Let $\mu^{\sigma}_{N}$, be the Gibbs distribution  of the $k$-colourings of $\mathbold{N}$, conditional  that 
$\mathbold{B}$ is coloured $\sigma_{B}$.
The choice of $\sigma$ is such that under $\mu^{\sigma}_{N}$ the  probability of $P$ to be 2-coloured with 
$c,q$ is lower bounded by the corresponding probability under  $\mu_H$,  the Gibbs distribution of the $k$-colourings
of $\SG_P$.

Let us describe how do we get $\mathbold{N}$ and $\mathbold{B}\subset V(\mathbold{N})$.
For this, we consider  an integer parameter $h=h(\epsilon)>0$, which we assume that is  sufficiently large
it depends on $\epsilon$ and it is  independent of $d$.
\\ \vspace{-.1cm}

\noindent
{\bf Path Neighbourhood Revealing.}
Consider the graph $\SG_P$. For each  $w_i\in P$ we  define  the sets ${ L}_{i,s} \subseteq V(\SG_P)$, 
for $s=0,\ldots, h$, as follows:
 ${ L}_{i,0}=\{w_i\}$.
We get  ${ L}_{i,s}$ by working inductively, i.e. we use  ${ L}_{i,s-1}$. 
Let ${\cal R}_{i,s} \subset V(\G)$ contain all the vertices  but those 
which belong to  $P$
and those which belong in $\bigcup_{j <  i}\bigcup_{j'\leq h}{ L}_{j,j'} $ and $\bigcup_{j'<s}{ L}_{i,j'}$. 
Consider an (arbitrary) ordering of the vertices in ${\cal R}_{i,s}$. For each vertex  $u\in { L}_{i,s-1}$ we 
examine its adjacency with the  vertices in ${\cal R}_{i,s}$ in the   predefined order. We stop revealing 
the neighborhood of $u$ in ${\cal R}_{i,s}$ once we either have revealed $(1+\epsilon/3)d+1$ many neighbours,  
or if  we have checked all the possible adjacencies of $u$  with ${\cal R}_{i,s}$. Whichever happens 
first\footnote{\label{foot:DegLv} Clearly, as the process goes,  the number of neighbours of $u$ in ${\cal R}_{i,s}$ is at most
$(1+\epsilon/3)d$+1.}. Then $L_{i,s}$ contains all the vertices in ${\cal R}_{i,s}$ which have been revealed
to have a neighbour in $L_{i,s-1}$.

For $i=0,\ldots, \ell$,  let   ${ N}_{i,h}$ be the induced subgraph of $\SG_P$ with vertex set 
 $\bigcup^h_{s=0}{ L}_{i,s}$.  Note that the size of ${ N}_{i,h}$ depends only on $\epsilon, d, h$, i.e. 
 it is independent of $n$. In particular,    it holds that   
\begin{equation}\label{eq:SizeOfNih}
\left |V(N_{i,h}) \right |\leq  N_0=\frac{[(1+\epsilon/3)d+1]^{h+1}-1}{(1+\epsilon/3)d}.
\end{equation}

\noindent
We call  ${ N}_{i,h}$,  {\tt Fail} if at least one of the following happens:
\begin{itemize}
 \item The maximum degree in ${N}_{i,h}$ is at least  $(1+\epsilon/3)d+1$
\item The graph ${N}_{i,h}$ is not a tree
\item There is an  integer $j\neq i$ such that some vertex $w''\in {N}_{j,h}$
is adjacent to some vertex $w'\in {\cal N}_{i,h}$ and the edge $\{w',w''\}$ does not belong to 
 the path  $P$.
\end{itemize}
\begin{lemma}\label{lemma:failProb}
Let  $\epsilon,d$ be  as in Theorem \ref{thrm:PrIPGnp}. 
Consider a sufficiently large fixed integer $h=h(\epsilon)>0$, independent of $d$.  Let $F$ be the number of 
vertices $w_i\in P$ such that $N_{i,h}$ is ${\tt Fail}$, for $i=1,\ldots, \ell$. For any $s=1,\ldots, \ell$, it holds that 
\begin{equation}
\Pr[F=s]\leq (1+n^{-1/3}){\ell \choose s} \exp\left[-\epsilon^2{d}s/{35} \right]. \nonumber
\end{equation}
\end{lemma}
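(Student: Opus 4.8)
The plan is to control the failure of each $N_{i,h}$ by a local union bound over the structure revealed by the {\tt Path Neighbourhood Revealing} process, and then to leverage the independence of edges in $\G$ together with the bounded size $N_0$ of each $N_{i,h}$ (see \eqref{eq:SizeOfNih}) to show that each individual $N_{i,h}$ is {\tt Fail} with probability at most $\exp(-\epsilon^2 d/35)$, uniformly over the history of the revealing process. First I would fix $i$ and condition on everything revealed so far (the path $P$, and the sets $L_{j,s}$ for $j<i$ together with the relevant non-adjacencies already exposed). The key observation is that, by construction, the only edges of $\G$ that have been exposed at this point are: the $\ell$ edges of the path $P$ (which are present by conditioning on $\SG_P$), the edges within $\bigcup_{j<i} N_{j,h}$, and certain non-edges; crucially, no edge incident to ${\cal R}_{i,0}=$ (roughly) the complement of those sets has been exposed, so when we reveal the neighbourhood of $w_i$ and grow outward, each potential adjacency is an independent Bernoulli$(d/n)$ trial (this is where the factor $1+n^{-1/3}$ will come from — a mild correction because a bounded number of vertices have been removed from the pool, changing $d/n$ to something like $d/(n-O(1))$).

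Next I would bound $\Pr[N_{i,h}\textrm{ is }{\tt Fail}]$ by the sum of the probabilities of the three failure modes. For the first mode (some vertex in $N_{i,h}$ has more than $(1+\epsilon/3)d$ neighbours among the fresh pool), since each of the at most $N_0$ vertices of $N_{i,h}$ has at most $n$ potential neighbours each present independently with probability $\approx d/n$, its degree is stochastically dominated by Binomial$(n,d/n)$, and a Chernoff bound gives $\Pr[\mathrm{Bin}(n,d/n)\geq (1+\epsilon/3)d]\leq \exp(-\epsilon^2 d/c)$ for an absolute constant $c$ (roughly $c=9$ or so from the standard bound $\exp(-\delta^2\lambda/3)$); multiplying by $N_0$ costs only a constant factor, which is absorbed by slightly worsening the constant in the exponent from, say, $9$ to $35$ once $d$ is large. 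For the second mode (the induced subgraph $N_{i,h}$ fails to be a tree), a non-tree means there is a ``short cycle'' inside a ball of radius $h$ around $w_i$, or two branches of the BFS tree collide; the number of such configurations is $O(N_0^2)$ and each requires a specific extra edge present with probability $\approx d/n$, so this mode has probability $O(N_0^2 d/n)=O(1/n)$, which is $o(\exp(-\epsilon^2 d/35))$ and hence negligible. For the third mode (a vertex of $N_{i,h}$ is adjacent, off the path, to some vertex of a previously-built $N_{j,h}$, $j\neq i$), we again have at most $N_0\cdot \ell N_0$ pairs of vertices, each adjacent with probability $\approx d/n$, giving probability $O(\ell N_0^2/n)=O(\log^2 n/n)$, again negligible. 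So all told $\Pr[N_{i,h}\textrm{ is }{\tt Fail}\mid \textrm{history}]\leq (1+n^{-1/3})\exp(-\epsilon^2 d/35)$ for large $d$.

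Finally, since this bound holds \emph{conditionally on the history} for every $i$, the indicator events ${\bf 1}\{N_{i,h}\textrm{ is }{\tt Fail}\}$ are stochastically dominated by independent Bernoulli random variables with parameter $(1+n^{-1/3})\exp(-\epsilon^2 d/35)$; hence $F$ is stochastically dominated by a Binomial$\big(\ell,(1+n^{-1/3})\exp(-\epsilon^2 d/35)\big)$. Then
\[
\Pr[F=s]\leq \binom{\ell}{s}\Big((1+n^{-1/3})\exp(-\epsilon^2 d/35)\Big)^s\leq (1+n^{-1/3})\binom{\ell}{s}\exp(-\epsilon^2 d s/35),
\]
using $(1+n^{-1/3})^s\leq$ a harmless factor after re-absorbing (or, more carefully, keeping the exponent slightly below $\epsilon^2 d/35$ and noting $(1+n^{-1/3})^{\ell}\le 1+o(1)$ since $\ell\le\log^2 n$, then rounding the constant). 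This yields exactly the claimed inequality. The main obstacle is the conditioning bookkeeping in the revealing process: one must argue carefully that at the moment we start growing $N_{i,h}$, the adjacencies we are about to probe are genuinely unconditioned Bernoulli trials (only a bounded-plus-$O(\log^2 n)$ number of vertex slots having been touched), so that the Chernoff estimates apply and the dominating Binomials are legitimately independent across $i$. Everything else is a routine first-moment/union-bound computation over configurations of size bounded by $N_0=N_0(\epsilon,d,h)$, independent of $n$.
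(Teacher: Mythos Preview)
Your plan follows the paper's approach closely for the first two failure modes (the Chernoff bound on degrees and the $O(N_0^2 d/n)$ bound on extra internal edges are exactly what the paper does), but there is a genuine gap in your handling of the third mode and the final domination step.

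The third {\tt Fail} condition for $N_{i,h}$ reads: there exists \emph{some} $j\neq i$ and an off-path edge between $N_{i,h}$ and $N_{j,h}$. This involves all $j$, including $j>i$. In your sequential exposure, at the moment you finish building $N_{i,h}$ the neighbourhoods $N_{j,h}$ for $j>i$ have not yet been revealed, so the event ``$N_{i,h}$ is {\tt Fail}'' is \emph{not} measurable with respect to the history up to step $i$. What your argument actually bounds is the backward-looking event $G'_i$ = ``$A_i$ or $B_i$, or an off-path edge from $N_{i,h}$ to some $N_{j,h}$ with $j<i$''. The domination $\sum_i \mathbf{1}[G'_i]\preceq \mathrm{Bin}(\ell,p)$ is fine, but $F\neq \sum_i \mathbf{1}[G'_i]$: a single cross-edge event $E_{i,j}$ (with $i<j$) makes \emph{both} $N_{i,h}$ and $N_{j,h}$ {\tt Fail}, yet only $G'_j$ fires. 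Hence $F$ can be as large as roughly twice $\sum_i \mathbf{1}[G'_i]$, and your stochastic domination of $F$ itself by $\mathrm{Bin}(\ell,p)$ is not justified. (Concretely: with no $A$'s or $B$'s and a single $E_{1,2}$, one has $F=2$ but $\sum_i\mathbf{1}[G'_i]=1$.)

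The paper deals with exactly this double-counting by separating the two mechanisms. It lets $R$ count the indices with $A_i\cup B_i$ (these \emph{do} admit your sequential domination, giving $\Pr[R=x]\le \binom{\ell}{x}z_0^x$ with $z_0=\exp(-\epsilon^2 d/35)$), and lets $Z$ count the pairs $\{i,j\}$ with $E_{i,j}$, bounding $\Pr[Z\ge t]\le (Cn^{-1}\log^4 n)^t$. Then it uses that $F=s$ and $R=x$ force $Z\ge \lceil (s-x)/2\rceil$, writes $\Pr[F=s]\le \sum_{x\le s}\Pr[R=x]\Pr[Z\ge \lceil (s-x)/2\rceil]$, and sums. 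This extra decomposition is precisely what absorbs the factor of $2$ coming from one cross-edge producing two failures, and it is what your argument is missing. Your final inequality is in fact true (the cross-edge contribution is tiny), but to prove it you need this split rather than a direct sequential bound on $\mathbf{1}\{N_{i,h}\text{ is }{\tt Fail}\}$.
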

In the lemma, above,  $F$ does not consider $N_{0,h}$.
The proof of Lemma \ref{lemma:failProb} appears in Section  \ref{sec:lemma:failProb}.\\

\noindent
The graph  $\mathbold{N}$ we are looking for is a subgraph of $\bigcup^{\ell}_{i=0}{ N}_{i,h}$.
For specifying  $\mathbold{N}$ perhaps it is more natural to  start with  the set $\mathbold{B}$ which separates
$\mathbold{N}$ from the rest of $\mathbold{H}_P$.
Each time, we  decide on $\mathbold{B}\cap V({ N}_{i,h})$ by  examining 
each ${ N}_{i,h}$, separately. 
If ${ N}_{i,h}$ is {\tt Fail}, then  $\mathbold{B}\cap V({ N}_{i,h})=\{w_i\}$, i.e the
vertex in the path $P$.
On the other hand,  if   ${ N}_{i,h}$ is not {\tt Fail}, then 
$\mathbold{B}\cap V({ N}_{i,h})={ L}_{i,h}$, i.e.  all the vertices in ${ N}_{i,h}$
that are at distance $h$ from $w_i$.

\begin{figure}
	\centering
		\includegraphics[height=2.3cm]{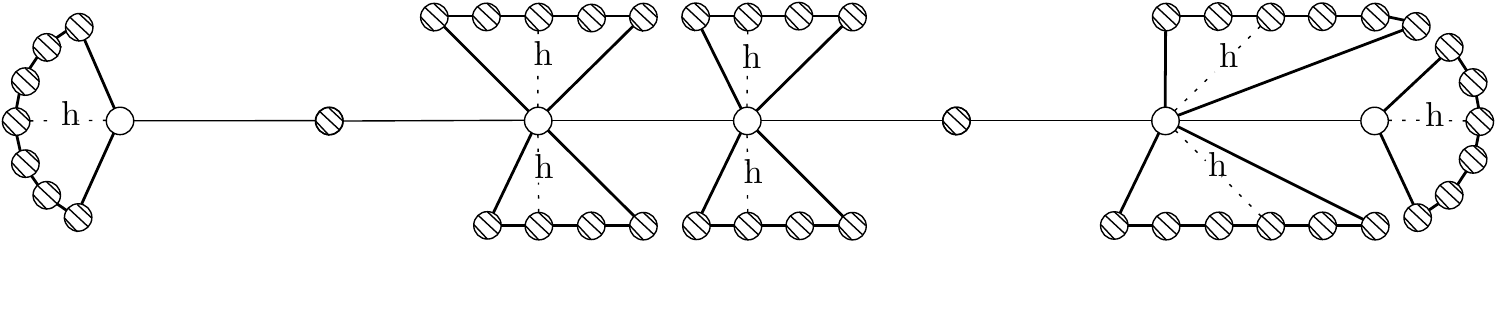}
		\caption{The lined vertices belong to $\mathbold{B}$.}
	\label{fig:NewApprRefined}
\end{figure}
In Figure \ref{fig:NewApprRefined}, we see one example of a possible outcome of the
exploration we describe above.  The lined vertices are exactly those which belong to the boundary
set $\mathbold{B}$. If some vertex $w_i$ on the path is lined, this means that $N_{i,h}$  is 
{\tt Fail}. The vertices of the path which are not lined correspond to the roots of a ``low degree"
tree of height at most $h$.

Let $S\subseteq\{0,\ldots, \ell\}$ contain each $i$ such that $N_{i,h}$ is not {\tt Fail.}
Also,  let $V_A=\bigcup_{i\in S}V(N_{i,h-1})$  \footnote{ $N_{i,h-1}$ is defined in the natural way}.  It is not hard to see that the vertex set
 $\mathbold{B}$ is a cut-set that separates $V_A$ from the rest of the vertices
in $V(\SG_P)$. The graph $\mathbold{N}$ is the induced subgraph of $\SG_P$ with vertex
set $V_A\cup \mathbold{B}$. 
\begin{remark}
Since $\SG_P$ is random, the subgraph $\mathbold{N}$ is random.
\end{remark}

\noindent
Consider  the graph $\SG_P$ and the corresponding Gibbs distribution $\mu_H$.
The distribution $\mu_{H}$ specifies a convex combination of boundary conditions
on $\mathbold{B}$. Using these boundary conditions we could estimate the
probability that $P$ is coloured only with $c,q$, exactly. 
However, estimating this convex combination of boundaries is a formidable task to accomplish. 
We get an upper bound of this probability by considering a worst boundary condition
on the vertex set $\mathbold{B}$. The condition is worst in the sense that it maximizes
the probability of interest. That is, instead of $\mu_H$, we consider the distribution
$\mu^{\sigma}_N$ which is much easier to handle.
Under  $\mu^{\sigma}_N$ the probability that $P$ is coloured with $c,q$  is at least as  big as
under $\mu_H$.

In the following results, we let ${\cal T}_{d,\epsilon,h}$ be the set of labeled, rooted, trees of 
maximum degree $(1+\epsilon/3)d$ and height $h$.
\begin{proposition}\label{prop:spatial-mixing-trees}
Let $\epsilon, d, k$ be as in Theorem  \ref{thrm:PrIPGnp}.  
Consider a sufficiently large fixed integer $h=h(\epsilon)>0$, independent of $d$.
Consider $\SG_P$ and
let $\mathbold{N, B}$ be as  defined above.
For each   $w_j\in P$  such that $w_j \notin \mathbold{B}$ the following is true:

Let $\Gamma$ be the neighbours of $w_j$ in the path $P$ and let $\mathbold{B}^+=\mathbold{B}\cup \Gamma$.
 There exists a  function $f_{\epsilon}:\mathbb{N}\to\mathbb{R}^+$, such that 
 $f(h)\to 0$ as $h\to \infty$, while
for any $\sigma\in \Omega_{\mathbold{N},k}$  and any  $c\in [k]$  it holds that
\begin{equation}
\max_{{N}_{j,h}\in {\cal T}_{d, \epsilon, h}}\left|
\Pr[X(v_j)=c \ | \ {N}_{j,h},  X_{\mathbold{B}^+}=\sigma_{\mathbold{B}^+}]-
\Pr[X(v_j)=c\ |\ {N}_{j,h},  X_{\Gamma}=\sigma_{\Gamma} ]
\right |\leq  k^{-1} {f_{\epsilon}(h)}, \nonumber
\end{equation}
where $X$ is  a random $k$-colouring of $\mathbold{N}$. 
\end{proposition}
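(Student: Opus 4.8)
The plan is to prove the spatial mixing statement for a single low-degree tree $N_{j,h}$ by a standard recursive coupling/contraction argument on the tree recursions for colouring marginals. The key point is that, conditional on $N_{j,h}$ being not \texttt{Fail}, $N_{j,h}$ is a rooted tree of maximum degree at most $(1+\epsilon/3)d$ and height $h$, rooted at $v_j$, and the two distributions we must compare differ only in whether we additionally condition on the colours of the boundary leaves $\mathbold B \cap V(N_{j,h}) = L_{j,h}$ (in the first term) or not (in the second term); in both terms we condition on the colours $\sigma_\Gamma$ of the at most two path-neighbours of $v_j$, which we fold into $\mathbold B^+$. So the quantity to bound is the total variation distance, at the root, between two Gibbs distributions on the same tree that agree on the internal structure and on the conditioning at $\Gamma$, but one of which pins the leaves at depth $h$ to an arbitrary (worst-case) proper partial colouring and the other leaves them free.

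First I would set up the standard ``broadcasting'' recursion: for a vertex $x$ with children $y_1,\ldots,y_t$ in the rooted tree, the marginal law $p_x \in \Delta([k])$ of $X(x)$ conditional on the sub-tree boundary is a deterministic function $\mathcal F$ of the child marginals $p_{y_1},\ldots,p_{y_t}$, namely $p_x(c) \propto \prod_{a=1}^{t}\bigl(1-p_{y_a}(c)\bigr)$, suitably normalised (this is the exact DP recursion referenced after Lemma~\ref{lemma:seqsubprop}; see also Lemma~\ref{lemma:spatial-monotone}). Two different boundary conditions give two sequences of child marginals which I would compare in a suitable metric --- the natural choice here is (a weighted) total variation or the $\ell_1$/``$\log$-ratio'' metric used by Jonasson~\cite{TreeUniq}. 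The core estimate is a one-step contraction: there is a constant $\kappa = \kappa(\epsilon) < 1$ such that if the child marginals of $x$ under the two boundary conditions are within distance $\delta_a$ at child $y_a$, then the root marginals are within distance at most $\sum_a \kappa_a \delta_a$ where $\sum_a \kappa_a \le \kappa < 1$; the gain factor per level is bounded because each vertex has at most $(1+\epsilon/3)d$ children and $k \ge (1+\epsilon)d$, so the ``effective branching'' $\tfrac{(1+\epsilon/3)d}{k}$ (or the analogous quantity coming from the correct Lipschitz analysis of $\mathcal F$) is bounded away from $1$ --- this is exactly the tree uniqueness regime of \cite{TreeUniq}, adapted from $\Delta$-regular trees to trees of maximum degree $\Delta=(1+\epsilon/3)d$, a monotone extension already invoked after Theorem~\ref{thrm:PrIPGnp}. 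Iterating the contraction over the $h$ levels between the root $v_j$ and the pinned leaves, the initial discrepancy (at most $1$ at each leaf) is damped to at most $\kappa^{h}$ times a polynomial-in-degree prefactor coming from the branching; absorbing this into a function $f_\epsilon(h)$ with $f_\epsilon(h)\to 0$ as $h\to\infty$, and noting the uniform marginal $k^{-1}$ scale, gives the claimed bound $k^{-1} f_\epsilon(h)$. The conditioning on $\Gamma$ is harmless: it only changes the boundary condition at the root's path-neighbours, and since $v_j \notin \mathbold B$ those neighbours are interior to $N_{j,h}$ only on the path, so we simply treat the colours $\sigma_\Gamma$ as an additional fixed external field at $v_j$ that is common to both distributions and pull it through the recursion unchanged.

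The main obstacle I expect is making the one-step contraction quantitative with the right constant, i.e. showing $\kappa(\epsilon) < 1$ for \emph{all} $k \ge (1+\epsilon)d$ and maximum degree $(1+\epsilon/3)d$ uniformly, rather than just in the $k/\Delta \to \infty$ limit; this is precisely where one must import (and slightly re-derive) Jonasson's tree-uniqueness computation for non-regular trees, being careful that the worst case over child marginals and over the pinned boundary colouring is handled correctly and that the metric chosen makes $\mathcal F$ genuinely a contraction (plain total variation is known to be delicate here, so I would likely work with the potential/message formulation and a carefully chosen norm, as in \cite{TreeUniq}). A secondary, more routine point is bookkeeping the prefactor: the number of root-to-leaf paths in $N_{j,h}$ is at most $[(1+\epsilon/3)d]^{h}$, so to get decay one needs the per-level contraction to beat this branching; since we are strictly inside uniqueness this holds, but the resulting $f_\epsilon(h)$ will be of the form $C(\epsilon)\, \rho(\epsilon)^{h}$ with $\rho(\epsilon)<1$, and I would just state it in that form. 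Finally I would remark that the ``$\max$ over $N_{j,h} \in \mathcal T_{d,\epsilon,h}$'' is legitimate because the contraction constant depends only on $\epsilon$ and $d$ (through the degree bound), not on the particular tree shape, so the bound is uniform over the class $\mathcal T_{d,\epsilon,h}$ as required.
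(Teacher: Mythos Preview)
Your proposal is correct in outline, but the paper takes a different and more economical route. Rather than re-deriving a one-step contraction for the tree recursion from scratch, the paper invokes Jonasson's uniqueness theorem \cite{TreeUniq} for the \emph{complete} $\Delta$-ary tree as a black box (Theorem~\ref{thrm:TreeUniq}), and then proves a short embedding lemma (Lemma~\ref{lemma:spatial-monotone}, packaged as Proposition~\ref{prop:spatial-monotone}) showing that the root marginal of any tree of maximum degree $\Delta$ under a boundary condition on its level-$h$ vertices coincides with the root marginal of the complete $\Delta$-ary supertree under the same partial boundary: the extra branches of the supertree carry no boundary, hence contribute uniform $1/k$ marginals that cancel in the recursion. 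Jonasson's decay function $\phi_k(h)$ is thus inherited verbatim and no new Lipschitz or metric analysis is needed. The conditioning on $\Gamma$ is handled not by ``pulling it through the recursion'' but by deleting the path edges incident to $w_j$, computing the $\Gamma$-only marginal exactly as $(k-|t_\sigma|)^{-1}$ (eq.~\eqref{eq:PrXGammaBound}), and relating the two marginals via the explicit ratio formula~\eqref{eq:MuSigmaVsNuSigma}; the final $f_\epsilon(h)$ is simply $10\,\phi_k(h)$. Your direct contraction approach would also work and has the merit of being self-contained, but it obliges you to reproduce the delicate part of Jonasson's argument---choosing a potential in which the map is a genuine contraction across the whole range $k\ge\Delta+2$---which the paper sidesteps entirely. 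Incidentally, your worry about a branching prefactor is unnecessary: if $\sum_a \kappa_a \le \kappa<1$ at every internal vertex then iteration already gives $\kappa^h$ at the root with no additional factor.
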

Note that the above  is a spatial mixing result. It implies that for any $N_{j,h}$ which is not {\tt Fail} the boundary we 
set at distance $h$ from $w_j$, essentially, has no effect on the distribution of the $k$-colouring of $w_i$.
The proof of Proposition \ref{prop:spatial-mixing-trees} appears in Section \ref{sec:prop:spatial-mixing-trees}.

For every  $w_j\in P$ such that $w_j\in \mathbold{B}$, the worst case boundary condition 
sets the vertex to its appropriate colour,  i.e. if $j$ is even then the colour is  $c$, otherwise the colour is  $q$.
Proposition \ref{prop:spatial-mixing-trees} implies that, whatever is the boundary condition at $\mathbold{B}$,
if $w_j\notin \mathbold{B}$, its probability of getting  colour $q$ or $c$, depending on
the parity of $j$,  is approximately  $1/k$.
\\

\begin{theoremproof}{\ref{thrm:PrIPGnp}}
Let  $E_{P}$ be the event that $\mathbold{H}$ contains the path $P$. It holds that 
\begin{displaymath}
\Pr\left[\mathbf{I}_{\{P\}}=1 \right] \leq   \left (d/n\right)^{\ell} \cdot \Pr\left[\mathbf{I}_{\{P\}}=1 \ | \ E_{P}\right].
\end{displaymath}
Consider  $\SG_P$ and let  $X$ be  a random $k$-colouring  conditional on that $X(w_0)=c$.  
For $i$ even, we call $w_i \in P$ disagreeing if $X(w_i)=c$. For $i$ odd number,  we call 
$w_i \in P$ disagreeing if $X(w_i)=q$.

Let the event $D_i$ that ``$w_i$ is disagreeing". 
Clearly it holds that
\begin{equation}\label{eq:PrIpCondEpBoundBase}
\Pr\left[ \mathbf{I}_{\{P\}}=1   \right]\leq \left (d/n\right)^{\ell} \Pr \left[ \cap^{\ell} _{i=1} D_i \ |\  E_P\right].
\end{equation}

\noindent
Let the events $A_i, B_i, C_i$ be defined as follows:
$A_i=$ ``${N}_{i,h}$ is {\tt Fail}". $B_i=$ ``${N}_{i,h}$ is not {\tt Fail} and $w_i$ is disagreeing". 
Also let  $C_i=A_i \cup B_i$.
\begin{myclaim}\label{claim:ReductionProbDis}
It holds that 
$$\Pr \left[ \cap^{\ell} _{i=1} D_i \ |\  E_{P}\right] \leq \Pr \left[ \cap^{\ell} _{i=1} C_i\ |\ E_P\right] .$$
\end{myclaim}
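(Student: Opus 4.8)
The plan is to show the containment of events $\cap_i D_i \subseteq \cap_i C_i$ (conditional on $E_P$) on an event-by-event basis, and then take probabilities. Concretely, I would argue that for each fixed $i \in \{1,\ldots,\ell\}$ one has the inclusion $D_i \subseteq C_i = A_i \cup B_i$ as subsets of the underlying sample space (pairs consisting of a graph instance containing $P$ together with a colouring $X$ with $X(w_0)=c$). This is essentially a tautology: $A_i$ is the event ``$N_{i,h}$ is \texttt{Fail}'', and $B_i$ is the event ``$N_{i,h}$ is not \texttt{Fail} and $w_i$ is disagreeing''. So $A_i \cup B_i$ is the event ``either $N_{i,h}$ is \texttt{Fail}, or $w_i$ is disagreeing''. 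Since $D_i$ is precisely the event ``$w_i$ is disagreeing'', every outcome in $D_i$ either has $N_{i,h}$ \texttt{Fail} (hence lies in $A_i$) or has $N_{i,h}$ not \texttt{Fail} (hence, together with $w_i$ disagreeing, lies in $B_i$). Either way the outcome lies in $C_i$, so $D_i \subseteq C_i$.

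Given the pointwise inclusions $D_i \subseteq C_i$ for every $i$, intersecting over $i = 1, \ldots, \ell$ gives $\bigcap_{i=1}^{\ell} D_i \subseteq \bigcap_{i=1}^{\ell} C_i$ as events. Monotonicity of (conditional) probability under set inclusion then yields
\[
\Pr\left[\bigcap_{i=1}^{\ell} D_i \ \Big| \ E_P\right] \leq \Pr\left[\bigcap_{i=1}^{\ell} C_i \ \Big| \ E_P\right],
\]
which is exactly the claimed bound. The only mild subtlety is making sure the sample space is set up consistently: $A_i$, $B_i$, $C_i$, $D_i$ and $E_P$ must all be regarded as events in the same probability space — namely the space of a random graph $\mathbold{H}$ together with a random $k$-colouring $X$ of $\mathbold{H}$ conditioned on $X(w_0)=c$ — and the sets $N_{i,h}$, together with whether they are \texttt{Fail}, must be measurable functions of that graph instance (and of the revealing order, which is fixed in advance). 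Since the \texttt{Path Neighbourhood Revealing} procedure is a deterministic function of $\SG_P$ once the vertex orderings are fixed, this measurability is immediate, and the definitions of $A_i$, $B_i$, $D_i$ are genuinely events.

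There is essentially no hard step here; the claim is purely a bookkeeping/monotonicity observation, and the ``obstacle'' is only to state cleanly that ``$w_i$ disagreeing'' splits according to whether $N_{i,h}$ is \texttt{Fail} or not, with the non-\texttt{Fail} branch being exactly $B_i$ and the \texttt{Fail} branch being absorbed into $A_i$ regardless of the colour of $w_i$. I would write the proof as: fix $i$, observe $D_i = (D_i \cap A_i) \cup (D_i \cap A_i^c) \subseteq A_i \cup B_i = C_i$ using $D_i \cap A_i^c \subseteq B_i$ by definition of $B_i$; then intersect over $i$ and apply monotonicity of conditional probability.
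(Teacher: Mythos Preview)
Your proposal is correct and the underlying logic is the same as the paper's: since $C_i = A_i \cup B_i = A_i \cup D_i$, one has $D_i \subseteq C_i$ pointwise, hence $\bigcap_i D_i \subseteq \bigcap_i C_i$ and the inequality follows by monotonicity. The paper phrases this slightly differently---it first conditions on the revealed graph $\SG_P$, which makes the partition of indices into ``\texttt{Fail}'' ($W'$) and ``not \texttt{Fail}'' ($W$) deterministic, and then observes that on $W$ the events $C_i$ and $D_i$ coincide while on $W'$ the event $C_i$ holds automatically---but this is just your event inclusion unpacked after conditioning. Your direct argument is cleaner and avoids the unnecessary detour through conditioning on $\SG_P$.
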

\begin{proof}
In the setting of the proof of Theorem \ref{thrm:PrIPGnp},   assume that we have revealed the 
underlying graph $\SG_P$.  It suffices to show that 
\begin{equation}\label{eq:FromD_i2CiDisagreement}
\Pr \left[  \left.  \cap^{\ell} _{i=1} D_i \ \right  |\   \SG_P  \right] 
\leq \Pr \left[ \left . \cap^{\ell} _{i=1} C_i \ \right | \  \SG_P  \right].
\end{equation}
Observe that the probability terms are only w.r.t. the random colouring of $\SG_P$.

Let $W$ be the set of vertices $q_i\in P$ such that ${ N}_{i,h} $ is not {\tt Fail}. Also, let 
$W'\subseteq \mathbold{B}$ be the set of vertices $w_i\in P$ for which  ${ N}_{i,h}$ is {\tt Fail}. 
The events $\cap_{w_i\in W} C_i$ and $\cap_{w_i\in W} D_i$  are identical, since both occur 
if the vertices in $W$ are disagreeing. Thus it holds that 
$\Pr \left[ \left . \cap_{w_i\in W} D_i\ \right | \  \SG_P\right] = \Pr \left[ \left . \cap_{w_i\in W} C_i \ \right |\ \SG_P\right]$.

Furthermore, we note that
$\Pr \left[ \left . \cap_{w_i\in W'} C_i \ \right | \   \SG_P, \cap_{w_i\in W} C_i \right]=1$.
On the other hand, it holds that  $\Pr \left[ \left. \cap_{w_i\in W'} D_i  \  \right| \ \SG_P,  \cap_{w_i\in W} D_i \right]\leq 1$.
These imply that   \eqref{eq:FromD_i2CiDisagreement} is true.  The claim follows. 
\end{proof}

\noindent
Using   Claim \ref{claim:ReductionProbDis} and \eqref{eq:PrIpCondEpBoundBase},   it suffices to  bound appropriately 
$\Pr \left[ \cap^{\ell} _{i=1} C_i \ | \ E_P\right] $.

Consider $\SG_P$ and  let ${\cal F}_i(C)$ be the $\sigma$-algebra generated by the events
 $C_j$, for every $j\neq i$. 
Proposition   \ref{prop:spatial-mixing-trees} implies that 
\begin{equation}\label{eq:ConseqFromSMTrees}
\rho=\Pr[B_i \ | \ {\cal F}_i(C), E_P, N_{i,h} \textrm{ is not }  {\tt Fail}] \leq (k-2)^{-1}+{f_{\epsilon}(h)}/{k}.
\end{equation}
for any $i=0,\ldots, \ell$. 
Letting  $F$ be the number of vertices $w_i\in P$ such that $N_{i,h}$ is {\tt Fail}, for
$i=1,\ldots, \ell$,  we have that 
\begin{eqnarray}
\Pr \left[ \cap^{\ell} _{i=1} C_i \ | \ E_P\right] &=&
 \sum^{\ell}_{s=0} \Pr \left[ \cap^{\ell} _{i=1} C_i \ | \  E_P, F=s\right] \Pr \left[ F=s \ |\ E_P\right]\nonumber\\
&\leq &  \sum^{\ell}_{s=0} \rho^{\ell-s} \Pr \left[ F=s \ |\ E_P\right] \hspace{3.5cm} \mbox{[from \eqref{eq:ConseqFromSMTrees}]} \nonumber\\
&\leq &  (1+n^{-1/3})\sum^{\ell}_{s=0} {\ell  \choose s} \rho^{\ell-s} \exp(-\epsilon^2 ds/35)
\qquad \mbox{[from  Lemma  \ref{lemma:failProb}]} \nonumber\\
&\leq & 2\left [ \rho+\exp(-\epsilon^2 d/35) \right] ^{\ell}
\label{eq:PrCjBoundA}.
\end{eqnarray}
Using the fact that $k\geq (1+\epsilon)d$, for sufficiently large $h, d$, 
 \eqref{eq:PrCjBoundA} implies that
\begin{equation}
\Pr \left[ \cap^{\ell}_{i=1} C_i \ | \  E_P\right] \leq 
2((1+\epsilon/4)d)^{-\ell}. \label{eq:PrCjBoundB}
\end{equation}
The theorem follows from   \eqref{eq:PrCjBoundB}, \eqref{eq:PrIpCondEpBoundBase} and
Claim \ref{claim:ReductionProbDis}.
\end{theoremproof}

\subsection{Proof of Lemma \ref{lemma:failProb}}\label{sec:lemma:failProb}
For proving the lemma we  use the following tail bound,  \cite{janson}, Corollary 2.3.
Let $W$ be distributed as in ${\cal B}(n,d/n)$,  i.e. binomial distribution with parameters $n$ and $d/n$.
For any fixed  $\alpha>0$ and sufficiently large $d$, it holds that
\begin{equation}\label{eq:JansChern}
\Pr[W\geq (1+\alpha)d]\leq \exp\left( -\alpha^2d/3\right).
\end{equation}

\noindent
For $i,j=0,\ldots, \ell$ consider the following events:   Let 
$A_i:=$`` ${N}_{i,h}$ has  maximum degree greater than  $(1+\epsilon/3)d$".   
Also, let   $B_i:=$``${N}_{i,h}$ is not a tree".   For any two $i,j$ such that $i\neq j$, we let  $E_{i,j}:=$``there is 
an edge, not in $P$, which connects some vertex in $N_{i,h}$ and some vertex in $N_{j,h}$".

Given some $i\in\{0,\ldots, \ell\}$ and any  $S\subset \{0,\ldots, \ell\}$ such that $i\notin S$, 
let ${\cal F}_{S}$ be the $\sigma$-algebra generated be the events $A_j,B_j$ for $j\in S$. 
Given, ${\cal F}_S$, for every vertex $w\in { L}_{i,t-1}$ has a  number of  neighbours in ${\cal R}_{i,t}$ 
which is 
dominated by ${\cal B}(n,d/n)$, for  $t=1,\ldots, h$. Then, \eqref{eq:JansChern} implies
that  the probability for $w$ to have at least $(1+\epsilon/3)d$ neighbours in ${\cal R}_{i,t}$ is at 
most $\exp\left( -\epsilon^2 d/27\right).$ 

The event $A_i$ occurs if there exists  $t\in [h]$ and $w\in { L}_{i,t-1}$ whose number of neighbour in 
${\cal R}_{i,t}$ is at least $(1+\epsilon/3)d$. 
A simple union bound over the vertices in $N_{i,h}$ implies the following: for every 
$i=0,\ldots, \ell$ we have that
\begin{equation}
\Pr\left[ A_i \ \left |  \  {\cal F}_{S} \right. \right ]  \leq  N_0\exp\left ( -\epsilon^2d/27\right) \leq  \exp\left ( -\epsilon^2d/30\right),
 \label{eq:RedEvent1}
\end{equation}
where $N_0$ is defined in \eqref{eq:SizeOfNih}.
Also,  it holds that  
\begin{equation}
\Pr\left[ B_i \ \left | \  {\cal F}_S  \right. \right ]  \leq   {N_0 \choose 2} \frac{d}{n}\leq \frac{d^{5h}}{n} .
\label{eq:RedEvent2}
\end{equation}
The above follows by noting $B_i$ occurs, if there is an edge between the vertices $N_{i,h}$ which
is not exposed during the revelation of the sets $\bigcup^h_{s=0}{ L}_{i,s}$. The probability of having
such an edge is upper bounded by the expected number of such edges.

Combining  \eqref{eq:RedEvent1} and \eqref{eq:RedEvent2} with  a simple union bound we get that
\begin{equation}
\Pr\left[ A_i\cup B_i \ \left | \  {\cal F}_{S} \right. \right ] \leq \exp\left( -\epsilon^2d/35\right).
\label{eq:UnionEvent12}
\end{equation}
Let $R$ be the number of subgraphs $N_{i,h}$, for $i\in \{1,\ldots, \ell\}$,  such that  the event $A_i\cup B_i$ holds.  
Eq. \eqref{eq:UnionEvent12} implies that for $R$ we have the following:
For any $x\in \{1,\ldots, \ell\}$ it holds that
\begin{equation}
\Pr[R=x] \leq {\ell  \choose x}z^x_0(1-z_0)^{\ell-x}, 
\end{equation}
where $z_0=\exp\left( -\epsilon^2d/35\right).$
Also, we have that
 \begin{eqnarray}
 \Pr[F=s] &=&\sum^s_{x=0}\Pr[R=x]\Pr[F=s \ | \ R=x] \nonumber \\
 &\leq &\sum^s_{x=0} 
 {\ell \choose x}z^x_0(1-z_0)^{\ell-x}  \Pr[F=s \ | \  R=x] \nonumber \\
&\leq &  \sum^s_{x=0}  {\ell \choose x}z^x_0 \cdot \Pr[F=s\  | \ R=x],  \label{eq:Basis4failProb}
 \end{eqnarray}
 where the last inequality follows from the fact that $(1-z_0)^{\ell-x}\leq 1$.
 
We proceed by bounding appropriately the quantity $\Pr[F=s \ | \ R=x]$.
For this, let $Z$ be the number of pairs of subgraphs $N_{i,h}, N_{j,h}$  for which the event 
$E_{i,j}$ holds, for $i,j=0,1,\ldots, \ell$.
Given that $R=x$,  so as to have $F=s$ there  should be   at least $\lceil  (s-x)/2\rceil $ pairs 
$N_{i,h}, N_{j,h}$ such that $E_{i,j}$ holds, i.e. 
\begin{equation}
\Pr[F=s \ | \  R=x]\leq  \Pr[Z\geq \lceil (s-x)/2\rceil  \ | \ R=x].  \label{eq:FVsZFailProb}
\end{equation}
Given   some $i$ and $j$,   let $J_1$ be a subset of  events  $E_{i',j'}$ such that $E_{i,j}\notin J_1$.
Also, let $J_2$ any subset of events $A_{i'},B_{i'}$. Let   ${\cal F}_{J}$ be the $\sigma$-algebra generated by the
events in $J_1\cup J_2$. 

Noting that the expected number of edges between $N_{i,h}$ and $N_{j,h}$ is at most 
$N^2_0d/n$,  we have that 
\[
\Pr\left[ E_{ij}  \ | \ {\cal F}_{J} \right ]  \leq  N^2_0d/n \leq d^{5h}/n.
\] 
 The above  inequality implies that
for any integer $x\geq 0$ and  $z_1=d^{5h}/n$, we have
\begin{eqnarray}%
\Pr[Z\geq x] &\leq &\sum_{r\geq x}{{\ell +1 \choose 2} \choose r}(z_1)^r(1-z_1)^{{\ell+1 \choose 2}-r} \nonumber \\
&\leq &\sum_{r\geq x}  {{\ell+1 \choose 2} \choose r}(z_1)^r \;\leq \; \sum_{r\geq x} \left( \frac{(\ell+1)^2ez_1}{2r} \right)^r
\nonumber \qquad \mbox{[since ${n\choose i}\leq (ne/i)^i$]}\\
&\leq & 2\left( \frac{(\ell+1)^2ez_1}{2x} \right)^x \; \leq \; (4n^{-1} \log ^4 n)^x, \label{eq:BinomailZ}
\end{eqnarray}
where the last inequality follows due to our assumption that $\ell\leq (\log n)^2$.

Plugging \eqref{eq:BinomailZ} , \eqref{eq:FVsZFailProb} into \eqref{eq:Basis4failProb} we get that
\begin{eqnarray}
 \Pr[F=s] &\leq & \sum^s_{x=0} 
 {\ell \choose x}z^x_0 (4n^{-1} \log ^4 n)^{(s-x)/2} \nonumber \\
 &\leq & \sum^s_{x=0} 
 {\ell \choose s-x }z^{s-x}_0 (2n^{-1/2} \log ^2 n)^{x} \nonumber \\
 &\leq & {\ell \choose s }z^s_0 \  \sum^s_{x=0} 
 {\ell \choose s-x }{\ell \choose s}^{-1} [(2/z_0)n^{-1/2} \log ^2 n]^{x} \nonumber \\
 &\leq & {\ell  \choose s }z^s_0 \  \sum^s_{x=0} 
 \frac{s!}{(s-x)!}\frac{(\ell-s)!}{(\ell-s+x)!}
 [(2/z_0)n^{-1/2} \log ^2 n]^{x} \nonumber \\
 &\leq & {\ell  \choose s }z^s_0 \  \sum^s_{x=0} 
\left(\frac{s}{\ell-s+1} \right)^x
 [(2/z_0)n^{-1/2} \log ^2 n]^{x} \nonumber \\
 &\leq & {\ell  \choose s }z^s_0 \  \frac{1}{1-n^{-2/5}}, \nonumber
\end{eqnarray}
where in the last inequality we use the fact that $s\leq \ell\leq (\log n)^2$ and $z_0=\Theta(1)$.
The lemma follows.

\section{Proof of Proposition \ref{prop:spatial-mixing-trees}}\label{sec:prop:spatial-mixing-trees}

For some vertex $w_j\in P$ such that $w_j\notin \mathbold{ B}$ we have that  ${N}_{j,h}$ is  not {\tt Fail}.
That is,  $N_{j,h}$   is a tree of  maximum degree less than $(1+\epsilon/3)d$.
For such $N_{j,h}$ we assume   $w_j$  to be the root.

If the  height of $N_{j,h}$ is less than $h$, then  no vertex in ${ N}_{j,h}$ 
belongs to $\mathbold{B}$.  For such  tree,   the proposition is trivially true.
For the rest of the proof we assume that  the height of ${N}_{j,h}$ is $h$.

From \cite{TreeUniq} we have the following theorem.
\begin{theorem}[Jonasson 2001]\label{thrm:TreeUniq}
Let $\Delta, h$ be sufficiently large integers and let $k\geq \Delta+2$.  Let $T$  be  a 
complete $\Delta$-ary  tree of height $h$. Let $r$ be the root and let $L$ be  the leaves of $T$.
Also,  let $X$ be a random $k$-colouring of the tree.  For any $c\in [k]$ it holds that
\[
\max_{\sigma\in \Omega_{T,k}}\left|\Pr[X(r)=c \ |\  X(L)=\sigma_L]-{k}^{-1}\right|\leq k^{-1} {\phi_{k}(h)},
\]
where the quantity $\phi_{k}(h) \geq 0$ which tends to zero as $h\to\infty$.
\end{theorem}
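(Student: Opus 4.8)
The plan is to analyze the standard \emph{tree recursion} for conditional marginals. Root $T$ at $r$, and for an internal vertex $v$ with children $v_1,\dots,v_\Delta$ write $a_v(c)=\Pr[X(v)=c\mid X(L_v)=\sigma_{L_v}]$, where $L_v$ is the set of leaves below $v$; at a leaf $w$, $a_w(\cdot)$ is the point mass at $\sigma_w$. Conditional independence of the subtrees given $X(v)$ gives
\[
a_v(c)=\frac{\prod_{i=1}^\Delta\bigl(1-a_{v_i}(c)\bigr)}{\sum_{c'\in[k]}\prod_{i=1}^\Delta\bigl(1-a_{v_i}(c')\bigr)},\qquad\text{write }\pi(c)=\prod_{i=1}^\Delta\bigl(1-a_{v_i}(c)\bigr).
\]
First I would record the two structural facts that identify $1/k$ as the target: the uniform vector $a_{v_i}\equiv 1/k$ is a fixed point of this map (the numerator becomes colour-independent), and, since $T$ is the complete $\Delta$-ary tree, the uniform measure over proper colourings is invariant under global colour permutations, so the single-vertex marginal at $r$ under the \emph{free} boundary is exactly $1/k$. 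Hence everything reduces to bounding the drift of $a_r$ away from $1/k$ as the leaf boundary varies.

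The core is a one-step contraction estimate. Parametrise $a_{v_i}(c)=\tfrac1k(1+\eta_{v_i}(c))$ with $\sum_c\eta_{v_i}(c)=0$ and $\eta_{v_i}(c)\in[-1,k-1]$, so that $1-a_{v_i}(c)=\tfrac{k-1}{k}\bigl(1-\tfrac{\eta_{v_i}(c)}{k-1}\bigr)$; substituting into the recursion and expanding the products, one finds $\eta_v(c)\approx-\tfrac1{k-1}\sum_i\eta_{v_i}(c)$ to first order, whence $\|\eta_v\|_\infty\le\tfrac{\Delta}{k-1}\max_i\|\eta_{v_i}\|_\infty$ up to higher-order terms. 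The hypothesis $k\ge\Delta+2$ makes the factor $\tfrac{\Delta}{k-1}\le\tfrac{\Delta}{\Delta+1}<1$, which is precisely where the threshold enters. The same fact can be packaged through the ratio $\gamma_v=\frac{1-\min_c a_v(c)}{1-\max_c a_v(c)}\ge1$: one gets $\gamma_v\le\frac{k-2+\rho}{k-1}$ with $\rho=\pi_{\max}/\pi_{\min}\le\prod_i\gamma_{v_i}$, and the scalar map $x\mapsto\frac{k-2+x^{\Delta}}{k-1}$ has $x=1$ as a fixed point with derivative $\tfrac{\Delta}{k-1}<1$ there.

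The main obstacle, and the substance of Jonasson's argument, is that these estimates are contractive only once the marginals are already near uniform, whereas near the leaves they are not: a leaf marginal is a point mass, a level-one marginal can be as large as $\tfrac1{k-\Delta}=\tfrac12$ on a colour, and even two levels above the leaves the crude worst-case bound is only $\eta_v(c)\le 2^{\Delta}-1$ (so the naive iteration of $x\mapsto\frac{k-2+x^\Delta}{k-1}$ does \emph{not} start inside the basin of $1$). The fix is to show this pre-asymptotic non-uniformity dissipates: the configurations of children marginals that make $\pi_{\max}/\pi_{\min}$ near $2^{\Delta}$ are not self-sustaining, because the constraints $\sum_c a_{v_i}(c)=1$ and $a_{v_i}(c)\le\tfrac1{k-\Delta}$ force $\sum_{c'}\pi(c')$ to be large whenever some $\pi(c)$ is large, so the very next level is within an absolute (in $\Delta$) distance of uniform. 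I would make this quantitative by proving a uniform bound $\|\eta_v\|_\infty\le C$ for every $v$ above a fixed level $h_0$ and every boundary, and then showing the one-step map is a strict contraction on $\{\|\eta\|_\infty\le C,\ \sum_c\eta(c)=0\}$ — this is where ``$\Delta$ sufficiently large'' is spent, to absorb the quadratic-and-higher terms of the expansion. Composing the bounded $h_0$-level transient with geometric decay at rate $\tfrac{\Delta}{k-1}$ thereafter yields $\max_{\sigma}\bigl|a_r(c)-\tfrac1k\bigr|\le k^{-1}\phi_k(h)$ with $\phi_k(h)\le C'\,(\Delta/(k-1))^{\,h-h_0}\to0$, which is the asserted bound.
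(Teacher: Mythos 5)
The paper does not prove this statement at all: Theorem~\ref{thrm:TreeUniq} is imported verbatim from Jonasson~\cite{TreeUniq} with the preface ``From~\cite{TreeUniq} we have the following theorem,'' so there is no internal proof to compare your sketch against. What you have written is therefore an attempt to reconstruct Jonasson's own argument, not the paper's.

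As a reconstruction, the scaffolding is right: the tree recursion for the conditional marginals, the observation that the uniform vector is a fixed point, and the linearised contraction rate $\Delta/(k-1)<1$ under $k\geq\Delta+2$ are all correct and are the right quantities to track. But you have left the load-bearing step as an announced intention rather than a proof. You correctly observe that near the leaves the marginals are nowhere near uniform (a leaf is a point mass, one level up a marginal can be as large as $1/(k-\Delta)=1/2$, and the naive ratio $\pi_{\max}/\pi_{\min}$ can be as large as $2^{\Delta}$), so the iteration $x\mapsto\frac{k-2+x^{\Delta}}{k-1}$ is started far outside the basin of its fixed point $1$ and the linearised bound $\|\eta_v\|_\infty\le\frac{\Delta}{k-1}\max_i\|\eta_{v_i}\|_\infty$ is not yet applicable. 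Your fix --- ``prove a uniform bound $\|\eta_v\|_\infty\le C$ for every $v$ above a fixed level $h_0$'' and then contract --- is exactly the nontrivial content of Jonasson's paper, and you give only a heuristic (``the constraints $\sum_c a_{v_i}(c)=1$ and $a_{v_i}(c)\le 1/(k-\Delta)$ force $\sum_{c'}\pi(c')$ to be large whenever some $\pi(c)$ is large'') in place of an estimate. In particular, that heuristic does not by itself yield a bound on $\|\eta_v\|_\infty$ independent of $\Delta$, and without such a bound the ``absorb the higher-order terms'' step has nothing to absorb them \emph{into}. Jonasson's argument reduces the recursion to a one-dimensional monotone map on a carefully chosen ratio and proves \emph{global} (not merely local) attraction to the fixed point; replacing that global analysis with ``local contraction plus an unproved claim that the transient dies'' is where your proposal has a genuine gap. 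Everything else you wrote is a correct sketch, but that one step is the theorem.
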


\noindent
Theorem \ref{thrm:TreeUniq} establishes the {\em Gibbs uniqueness} condition for the random colourings 
of a $\Delta$-ary  tree. In  Proposition \ref{prop:spatial-monotone} we extend the previous result to  trees of 
{\em maximum degree} $\Delta$.
\begin{proposition}\label{prop:spatial-monotone}
Let $\Delta, h$ be  sufficiently large integers and $k\geq \Delta+2$.  Let $T$ be a 
tree of height $h$ and {\em maximum} degree at most $\Delta$.
Let $r$, $L_0$ denote the root and the vertices at level $h$, respectively.
For $X$ a random $k$-colouring of $T$, the following is true:

For $\phi_{k}(h)$ as in Theorem \ref{thrm:TreeUniq} and for any $c\in [k]$ it holds
that 
\[
\max_{\sigma\in \Omega_{T,k}}\left|\Pr[X(r)=c \ | \ X(L_0)=\sigma_{L_0}]-{k}^{-1}\right|\leq k^{-1} \phi_{k}(h).
\]
\end{proposition}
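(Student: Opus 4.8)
The plan is to reduce the maximum-degree-$\Delta$ case to the complete $\Delta$-ary case of Theorem~\ref{thrm:TreeUniq} by a monotonicity argument: adding edges (or subtrees) to a tree only ever \emph{increases} the worst-case bias of the root's colour, so the complete $\Delta$-ary tree of height $h$ is the extremal case among all trees of height $\le h$ and maximum degree $\le\Delta$. Concretely, I would set up the standard recursion for tree colourings. Root $T$ at $r$, let $v_1,\ldots,v_t$ be the children of $r$ (so $t\le\Delta$), and let $T_1,\ldots,T_t$ be the subtrees hanging from them. Fixing a boundary condition $\sigma$ on the level-$h$ vertices $L_0$, write $p^{(T_s)}_{v_s}(c)=\Pr[X(v_s)=c\mid X(L_0\cap T_s)=\sigma]$ for the marginal at $v_s$ in its own subtree under the induced boundary condition. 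The marginal at the root is then
\begin{equation}\label{eq:rootrec}
\Pr[X(r)=c\mid X(L_0)=\sigma]=\frac{\prod_{s=1}^{t}\bigl(1-p^{(T_s)}_{v_s}(c)\bigr)}{\sum_{c'\in[k]}\prod_{s=1}^{t}\bigl(1-p^{(T_s)}_{v_s}(c')\bigr)}.
\end{equation}
The quantity one wants to control is $\max_c\bigl|\Pr[X(r)=c\mid X(L_0)=\sigma]-k^{-1}\bigr|$.

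The key lemma to prove is a monotonicity/extremality statement: among all height-$h$, max-degree-$\le\Delta$ trees and all boundary conditions, the maximum of $\max_c|p_r(c)-k^{-1}|$ is attained at the complete $\Delta$-ary tree of height \emph{exactly} $h$ with a worst-case boundary. I would argue this inductively on $h$. For the inductive step, I would first observe that from \eqref{eq:rootrec} the root-bias is a monotone function of the ``deviation from uniformity'' of the vectors $\bigl(p^{(T_s)}_{v_s}(\cdot)\bigr)$: making the child marginals more biased (in the worst coordinated direction) makes the root more biased, and this is maximised when each child subtree is itself extremal, i.e.\ a complete $\Delta$-ary tree of height $h-1$ under its worst boundary; and when $r$ has the full complement of $t=\Delta$ children rather than fewer. (Here one must be slightly careful: the root has degree $\le\Delta$ but a non-root internal vertex has $\le\Delta-1$ children; this is exactly the ``$\ge\Delta+2$ colours'' slack in Jonasson's theorem, so the bound $\phi_k(h)$ from the $\Delta$-ary tree is still an upper bound. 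Alternatively one pads each short subtree: a leaf at level $<h$ can be replaced by the root of a complete $\Delta$-ary tree of the appropriate remaining height without decreasing the worst-case bias, since a free subtree only adds entropy — one should check this direction of the inequality, which again follows from \eqref{eq:rootrec} since an unconstrained child has marginal closer to uniform, but "closer to uniform at the child" need not mean "closer to uniform at the root" in general, so the honest statement is that padding can only help once we take the \emph{max} over boundary conditions, because any boundary condition on the padded tree restricts to one on the original.) After reducing to the complete $\Delta$-ary height-$h$ tree, Theorem~\ref{thrm:TreeUniq} gives exactly $\max_c|\Pr[X(r)=c\mid X(L)=\sigma_L]-k^{-1}|\le k^{-1}\phi_k(h)$, and since this is the claimed bound, the proposition follows.

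The main obstacle I anticipate is making the monotonicity argument rigorous in the right direction. It is tempting but false in general to say ``a child marginal closer to uniform $\Rightarrow$ root marginal closer to uniform''; what is actually true and what one should exploit is that the quantity being bounded is a \emph{maximum over boundary conditions}, and every boundary condition on a larger (padded, or higher-degree) tree induces a boundary condition on any subtree, so the sup over the larger family dominates the sup over the smaller one. So the cleanest route is: (i) show the max-degree-$\Delta$, height-$\le h$ family's worst-case root bias is dominated by the worst-case root bias of the complete $\Delta$-ary tree of height exactly $h$, by exhibiting, for any $(T,\sigma)$ in the former family, an embedding into a complete $\Delta$-ary height-$h$ tree with a boundary condition whose induced restriction reproduces the behaviour at $r$ (freeze extra leaves to match, or extend short branches and then note the sup is over all their boundaries); (ii) invoke Theorem~\ref{thrm:TreeUniq}. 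A secondary technical point is the degree bookkeeping (root degree $\Delta$ vs.\ internal degree $\Delta-1$ vs.\ Jonasson's $\Delta$-ary); since the claimed bound uses the \emph{same} $\phi_k(h)$ as in Theorem~\ref{thrm:TreeUniq}, the direction of all these inequalities must be the "$\le$" direction throughout, which the embedding/restriction framing delivers automatically.
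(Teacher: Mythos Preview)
Your overall strategy---pad $T$ to a complete $\Delta$-ary tree $T'$ of height $h$ and invoke Theorem~\ref{thrm:TreeUniq}---is exactly what the paper does. But you are making the reduction much harder than it needs to be, and the monotonicity route you spend most of the proposal on is not what carries the argument.

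The paper's key observation (its Lemma~\ref{lemma:spatial-monotone}) is not an inequality but an \emph{equality}: if you embed $T$ into the complete $\Delta$-ary tree $T'$ and impose the boundary $\sigma_{L_0}$ only on the original level-$h$ vertices $L_0$ (leaving the extra leaves of $T'$ \emph{free}), then the root marginal in $T'$ is \emph{identical} to the root marginal in $T$. The reason is immediate from your recursion~\eqref{eq:rootrec}: every child $v\in S$ that exists only in $T'$ sits atop a subtree with no boundary condition, so by symmetry $p^{(T'_v)}_v(c')=k^{-1}$ for every $c'$, and the corresponding factor $(1-1/k)$ is the same in numerator and denominator and cancels. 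An easy induction on $h$ propagates this through all levels. Once you have this equality, you are done: the root marginal in $T$ under $\sigma_{L_0}$ equals the root marginal in $T'$ under the partial boundary $\sigma_{L_0}$, which is a convex combination of root marginals under full boundaries on all of $L$, each of which is within $k^{-1}\phi_k(h)$ of $k^{-1}$ by Theorem~\ref{thrm:TreeUniq}.

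So you do not need any of the delicate ``worst-case bias is monotone in the tree'' reasoning, nor do you need to ``freeze extra leaves to match'' anything; leaving them free is precisely what makes the argument trivial. Your own caveat that ``closer to uniform at the child need not mean closer to uniform at the root'' is correct and is exactly why the inequality route is painful---but the equality route sidesteps it entirely. The degree-bookkeeping worry also evaporates: the equality lemma holds for any supertree $T'$, so you just take $T'$ to be whatever Jonasson's theorem requires.
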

 The proof of Proposition \ref{prop:spatial-monotone} appears in Section \ref{sec:prop:spatial-monotone}.\\

\begin{propositionproof}{\ref{prop:spatial-mixing-trees}}
We let $\mu_{N}$ be the Gibbs distribution over the $k$-colourings of $\mathbold{N}$, while 
we  let $\mu_{w_j}$ be the marginal of $\mu_{N}$ on $w_j\in P$.
For   $\sigma\in \Omega_{\mathbold{N},k}$ we  let  $t_{\sigma}\subseteq [k]$  contain all the colours 
 that are used from $\sigma$ to colour the vertices in $\Gamma$. It is elementary that $|t_{\sigma}|\leq 2$.
Also,   it holds that
\begin{equation}\label{eq:PrXGammaBound}
\Pr[X(v_j)=c \ |\ {N}_{j,h},  X_{\Gamma}=\sigma_{\Gamma} ]=(k-|t_{\sigma}|)^{-1},
\end{equation}
since we have assumed that ${N}_{j,h}$ is not {\tt Fail}, the structure of $N_{j,h}$ is treelike. 
The above  holds for any   ${N}_{j,h}\in  {\cal T}(d,\epsilon, h)$.

Let  $\mathbold{N}'$ be the graph derived from $\mathbold{N}$ be deleting the edges of
$P$ which are incident to $w_j$.
Let $\nu$ be the Gibbs distribution over  the $k$-colourings of $\mathbold{N}'$,
while  let $\nu_{w_j}$ be the marginal of $\nu$ on $w_j$. 
For any $\sigma\in\Omega_{N,k}$ and  any $c\in [k]\backslash t_{\sigma}$,
let $X$ be a random $k$-colouring of $\mathbold{N}$, then
\begin{equation}\label{eq:MuSigmaVsNuSigma}
\Pr[X(v_j)=c \ |\ {N}_{j,h},  X_{\mathbold{B}^+}=\sigma_{\mathbold{B}^+}]=
\frac{\nu^{\sigma_{\mathbold{B}^+}}_{w_j}(c)}{1-\nu^{\sigma_{\mathbold{B}^+}}_{w_j}\left( t_{\sigma} \right)},
\end{equation}
where $\nu^{\sigma_{\mathbold{B}^+}}_{j}(\cdot)$ denotes the distribution
$\nu_{j}$ conditional that $\mathbold{B}^+$ is coloured $\sigma_{\mathbold{B}^+}$.

The proposition will follows by showing that   the r.h.s. of \eqref{eq:MuSigmaVsNuSigma}
and \eqref{eq:PrXGammaBound} are sufficiently close.
For this,  we need to estimate $\nu^{\sigma_{\mathbold{B}^+}}_{w_j}(c)$.
In particular, we show that for  any $c\in [k]$ it holds that
\begin{equation}\label{eq:target-mu'}
\left|\nu^{\sigma_{\mathbold{B}^+}}_{w_j}(c)-k^{-1}\right|\leq k^{-1}\cdot \phi_{k}(h),
\end{equation}
where $\phi_{k}(h):\mathbb{N}^+\to\mathbb{R}_{\geq 0}$ is the function defined in
Theorem   \ref{thrm:TreeUniq}.

In the graph $\mathbold{N}'$,  the component of $w_j$, i.e. $N_{j,h}$ is a tree 
and it is only the vertices at distance $h$ from $w_j$ that belong to $\mathbold{B}$.
The colouring of the vertices in $\Gamma$ does not affect the colour assignment
of $w_j$, since we have deleted the edges of $P$ which are incident to $w_j$.
Since ${N}_{j,h} \in {\cal T}(d,\epsilon,h)$,   Proposition  \ref{prop:spatial-monotone}
implies that   \eqref{eq:target-mu'} is indeed true for any  ${N}_{j,h}\in  {\cal T}(d,\epsilon, h)$.

Combining \eqref{eq:target-mu'} and \eqref{eq:MuSigmaVsNuSigma} we get that
\begin{equation}\label{eq:PrXB+Bound}
\left|\Pr[X(v_j)=c \ |\ {N}_{j,h},  X_{\mathbold{B}^+}=\sigma_{\mathbold{B}^+}]- ({k-|t_{\sigma}|)^{-1}} \right|
\leq 10k^{-1}  \phi_k(h).
\end{equation}
The proposition follows from \eqref{eq:PrXB+Bound} and  \eqref{eq:PrXGammaBound} and setting $f_{\epsilon}(h)=10\phi_k(h)$.
\end{propositionproof}

\section{Proof of Proposition \ref{prop:spatial-monotone}}\label{sec:prop:spatial-monotone}

Let $T'$ be a supertree of $T$ such that   $T' $  is a complete  $\Delta$-ary tree of height $h$. 
That is, $T$ and $T'$ have the same height. Also,  both trees have the same root $r$.
We denote with $L$ the set of vertices at level $h$ in $T'$.   $L_0\subseteq L$ is the set 
of vertices which are at level $h$ in both $T$ and $T'$.

For $T$ and $T'$ we have  the following result.
\begin{lemma}\label{lemma:spatial-monotone}
Assume that $k\geq \Delta+2$. Let $X,Y$ be  random $k$-colourings of $T,T'$, respectively.
Also, let $\sigma$ be any $k$-colouring of $T$. 
 For any  $c\in [k]$ it holds that
\[
\Pr\left[X(r)=c \ | \ X(L_0)=\sigma_{L_0}\right]=\Pr\left[Y(r)=c \ | \ Y(L_0)=\sigma_{L_0}\right].
\]
\end{lemma}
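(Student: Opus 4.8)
\emph{(Sketch of the intended argument; details are routine tree bookkeeping.)}

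The plan is to view $T'$ as $T$ with several ``uninformative'' subtrees grafted on, and to show, via the exact leaves-to-root recursion for counting colourings, that grafting such a subtree merely rescales every message by a colour-independent factor, so that the normalised root marginal is unchanged.

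First I would fix an embedding of the rooted tree $T$ into the complete $\Delta$-ary tree $T'$ that sends the root of $T$ to $r$ and, at every vertex, injects the $T$-children into the $T'$-children; this is possible since $T$ has height $h$ and maximum degree at most $\Delta$. The embedding preserves distance from $r$, so a vertex of $T$ is at level $h$ in $T$ iff it is at level $h$ in $T'$; hence $L_0$ is exactly the set of level-$h$ vertices of $T$, in particular $L_0\subseteq V(T)$, so every vertex of $V(T')\setminus V(T)$ is unconstrained under $\sigma_{L_0}$. Moreover $V(T')\setminus V(T)$ splits into vertex-disjoint subtrees, each hanging off a single vertex $w\in V(T)$ (for a new vertex, take the topmost new vertex on its path to $r$; its parent lies in $T$), each of them a complete $\Delta$-ary subtree reaching level $h$ of $T'$ and consisting entirely of unconstrained vertices; call these the \emph{free subtrees}. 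In particular, a leaf of $T$ lying at level strictly below $h$ becomes, in $T'$, an internal vertex all of whose $\Delta$ children root free subtrees.

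Next I would invoke the standard exact recursion. For $S\in\{T,T'\}$, a non-root vertex $w$ with parent $p$, and a colour $a$, let $m^{S}_{w\to p}(a)$ be the number of proper $k$-colourings of the subtree of $S$ rooted at $w$ that are consistent with $\sigma_{L_0}$ and in which $w$ does not get colour $a$. Then $m^{S}_{w\to p}(a)=\sum_{b\ne a}\prod_{u}m^{S}_{u\to w}(b)$, the product over the children $u$ of $w$ in $S$, with $m^{S}_{w\to p}(a)=\mathbf{1}[a\ne\sigma_w]$ when $w$ is a constrained leaf and $m^{S}_{w\to p}(a)=k-1$ when $w$ is an unconstrained leaf; and $\Pr[X(r)=a\mid X(L_0)=\sigma_{L_0}]\propto\prod_{u}m^{T}_{u\to r}(a)$ over the children $u$ of $r$ in $T$, and likewise for $Y$ and $T'$. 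A one-line induction on height shows a free subtree rooted at $u$ has $m^{T'}_{u\to w}(\cdot)\equiv K_u$ for a constant $K_u>0$ (base case $k-1$; step $\sum_{b\ne a}\prod K=(k-1)\prod K$). Then, inducting on the vertices of $T$ from its leaves up to $r$, I would show $m^{T'}_{w\to p}(\cdot)=\lambda_w\,m^{T}_{w\to p}(\cdot)$ for constants $\lambda_w>0$: the children of $w$ in $T'$ are its $T$-children --- for which the hypothesis applies, or which are constrained leaves in $L_0$ where the two messages coincide exactly --- together with some free-subtree roots contributing the colour-independent factor $\prod K_u$; substituting into the recursion pulls all such factors outside the sum over $b$, and a leaf of $T$ at level $<h$ is covered with $m^{T}_{w\to p}$ the constant function $k-1$. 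Evaluating at $r$ then gives $\Pr[Y(r)=a\mid Y(L_0)=\sigma_{L_0}]\propto\prod_{u}m^{T'}_{u\to r}(a)=\big(\textstyle\prod K_u\big)\big(\prod\lambda_u\big)\prod_{u\in\mathrm{ch}_T(r)}m^{T}_{u\to r}(a)\propto\Pr[X(r)=a\mid X(L_0)=\sigma_{L_0}]$. Both sides are probability vectors on $[k]$ --- the normalising sums are the numbers of proper colourings of $T$, resp.\ $T'$, consistent with $\sigma_{L_0}$, which are positive because $\sigma$ is a proper colouring of $T$ and, since $k\ge\Delta+2$, it extends greedily over the free subtrees --- so proportionality forces equality, as claimed.

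I expect the only delicate point to be the bookkeeping rather than any estimate: one must keep straight which vertices of $T$ are leaves of $T$ at level $<h$ (unconstrained, turned into internal vertices of $T'$) versus at level $h$ (in $L_0$, leaves of both trees), and verify that every component of $T'\setminus T$ is a free subtree so that it transmits no colour dependence toward $r$. The underlying fact that the recursion above computes Gibbs marginals on trees exactly is standard and is already used in Section~\ref{sec:AppGnpNew}.
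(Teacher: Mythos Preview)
Your proposal is correct and follows essentially the same approach as the paper. Both arguments use the exact tree recursion and exploit that the extra subtrees of $T'$ carry no boundary and therefore contribute colour-independent factors that cancel upon normalisation; the paper phrases this via conditional \emph{marginals} $\Pr[Y^v(v)=c]$ (noting these equal $1/k$ on the free parts) and inducts on the height $h$, while you phrase it via unnormalised counting \emph{messages} $m_{w\to p}$ and induct vertex-by-vertex, but the two formulations are equivalent and the logical structure is the same.
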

The  proof of Lemma \ref{lemma:spatial-monotone} appears in Section \ref{sec:lemma:spatial-monotone}.

Given Lemma \ref{lemma:spatial-monotone},  we show the proposition by working as follows:
Let $X$, $Y$ be a random $k$-colouring of $T$ and $T'$, respectively. 
Let $\tau\in \Omega_{T,k}$ be  such that $\tau_{L_0}$ maximizes the following
quantity,
\[
|\Pr[X(r)=c \ |\ X(L_0)=\tau_{L_0}]-k^{-1} | .
\]
By Lemma \ref{lemma:spatial-monotone}, we have that 
$\Pr[X(r)=c \ |\ X(L_0)=\tau_{L_0}]=\Pr[Y(r)=c \ | \ Y(L_0)=\tau_{L_0}].$ It holds that  
\[
|\Pr[X(r)=c \ |\ X(L_0)=\tau_{L_0}]- k^{-1} |\leq\max_{\sigma\in \Omega_{T',k}}|\Pr[Y(r)=c \ |\ Y(L_0)=\sigma_{L_0}]-k^{-1} |, 
\]
where $\sigma$ varies over all the proper colourings of $T'$.
The proposition  follows by using Theorem \ref{thrm:TreeUniq} to bound the r.h.s. of the inequality above.

\subsection{Proof of Lemma \ref{lemma:spatial-monotone}}\label{sec:lemma:spatial-monotone}

For the tree ${T}$ (resp. the tree $T'$)  and a vertex $v$,  let ${T}_v$
 (resp. $T'_v$) denote the subtree that contains the vertex  $v$ once we delete the edge 
of ${T}$ (resp. $T'$) that connects $v$ and   its parent.
For the tree ${T}_v$ (resp. $T'_v$)  the root is the vertex $v$.

Consider the random colourings $X, Y$ of the trees ${T}$ and $T'$, 
respectively, with boundary condition $\sigma_{L_0}$.  Also, consider the following random 
variables:  For every vertex $v\in {T}$,  (resp. $T'$) we consider the subtree ${T}_v$ 
(resp. $T'_v$) and the random colouring
$X^v$ (resp. $Y^v$) on this tree, with boundary conditions set as follows: Letting 
$L_v=L_0\cap T_v$, then the boundary condition for both $X^v$  and $Y^v$ 
is $\sigma_{L_v}$.

We denote with $C$  the set of the children of the root $r$ which belong to both trees, 
${T},T'$. Also, we denote with $S$ be the set of children of $r$ which belong 
only to the tree $T'$.

The proof is by induction on the height of the tree $h$. We start with $h=1$. 
Since the height of the tree is $1$, it holds that $C=L_0$.
Clearly for any color which appears in the boundary it holds that
neither $X$ nor $Y$ is going to use it for colouring the root.
Let $U\subset [k]$ contain all the colours that are not used by
the boundary condition $\sigma_{L_0}$.
For any $c\in U$ it holds that 
\begin{eqnarray}
\Pr[ Y(r)=c \ |\ Y(L_0)=\sigma_{L_0}]&=&\frac{\prod_{v\in S}(1-\Pr[Y^v(v)= c]) \times \prod_{v\in C}(1-\Pr[Y^v(v)= c])}
{\sum_{q\in [k]}\left( \prod_{v\in S}(1-\Pr[Y^v(v)= q])\times \prod_{v\in C}(1-\Pr[Y^v(v)= q]\right)}\nonumber\\
&=&\frac{\prod_{v\in S}(1-\Pr[Y^v(v) = c]) }
{\sum_{q\in U} \prod_{v\in S}(1-\Pr[Y^v(v)= q]) }. \nonumber
\end{eqnarray}
To see why the second inequality holds consider the following:   
If  $q\notin U$, then we have that $\prod_{v\in C}(1-\Pr[Y^v(v)= q])=0$,    since, 
we have assumed that  there is $v\in C$ such that $Pr[Y^v(v)= q]=1$.
On the other hand, if $q\in U$,  then  $\prod_{v\in C}(1-\Pr[Y^v(v)= q])=1$ 
since, by definition, for every $v\in C$ it holds that $\Pr[Y^v(v)= q]=0$.
Furthermore,  it is direct that
\[
\Pr[ Y(r)=c \ |\ Y(L_0)=\sigma_{L_0}] = \frac{(1-1/k)^{|S|} }
{|U|(1-1/k)^{|S|} } =\frac{1}{|U|}
=\Pr[ X(r)=c \ |\ X(L_0)=\sigma_{L_0}]. 
\]
Assume now that our hypothesis is true for trees of height $h-1$, for some
$h\geq 2$.   We are going to show that the hypothesis is true for trees of height $h$, too.
It holds that 
\begin{eqnarray}
\Pr[ X(r)=c \ |\ X(L_0)=\sigma_{L_0}]&=& 
\frac{\prod_{v\in C}(1-\Pr[X^v(v)= c])}
{\sum_{q\in [k]} \prod_{v\in C}(1-\Pr[X^v(v)= q])}
\nonumber \\
&=& \frac{\prod_{v\in C}(1-\Pr[Y^v(v)= c])}
{\sum_{q\in [k]} \prod_{v\in C}(1-\Pr[Y^v(v)= q])},\label{eq:XviANDYvi}
\end{eqnarray}
where the second equality follows from the induction hypothesis.
Also, it holds that 
\begin{eqnarray}
\Pr[ Y(r)=c \ |\ Y(L_0)=\sigma_{L_0}]&=&\frac{\prod_{v\in S}(1-\Pr[Y^v(v)= c]) \times \prod_{v\in C}(1-\Pr[Y^v(v)= c])}
{\sum_{q\in [k]}\left( \prod_{v\in S}(1-\Pr[Y^v(v)= q])\times \prod_{v\in C}(1-\Pr[Y^v(v)= q]\right)}\nonumber\\
&=&\frac{(1-1/k)^{|S|}\prod_{v\in C}(1-\Pr[Y^v(v) = c]) }
{\sum_{q\in [k]} \left((1-1/k)^{|S|} \prod_{v\in C}(1-\Pr[Y^v(v)= q] \right) }  \nonumber\\
&=&\frac{\prod_{v\in C}(1-\Pr[Y^v(v) = c]) }
{\sum_{q\in [k]} \prod_{v\in C}(1-\Pr[Y^v(v)= q]) },  \label{eq:YviANDYvi}
\end{eqnarray}
where the second equality holds because  for every $v\in S$ it holds $\Pr[Y^v(v)= c]=k^{-1}$. Observe that
if $v\in S$, then the subtree $T'_v$ contains no vertex $u$ which also belongs to
${T}$, thus $Y^v$ has no boundary conditions at all. 
The lemma follows from \eqref{eq:XviANDYvi} and \eqref{eq:YviANDYvi}.

\section{Proof of Theorem \ref{theorem:STEPAccuracy}}\label{sec:theorem:STEPAccuracy}

For proving Theorem \ref{theorem:STEPAccuracy} we need the following result.
\begin{lemma}\label{lemma:isomorphism}
For any $c,q\in [k]$ such that  $c\neq q$, it holds that   ${\tt Switching}(G,v,\cdot, q):S_q(c,c) \to S_c(q,c)$ is 
a {\em bijection}.
\end{lemma}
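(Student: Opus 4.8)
The plan is to exhibit ${\tt Switching}(G,v,\cdot,c)$ as a two-sided inverse of ${\tt Switching}(G,v,\cdot,q)$, after checking that ${\tt Switching}(G,v,\cdot,q)$ sends $S_q(c,c)$ into $S_c(q,c)$ and that ${\tt Switching}(G,v,\cdot,c)$ sends $S_c(q,c)$ into $S_q(c,c)$.

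First I would isolate the structural fact that drives everything. Fix $\sigma\in\Omega_{G,k}$ with $\sigma_v=c$, set $\mathbold{Q}=\mathbold{Q}(G,v,\sigma,q)$ and $\tau={\tt Switching}(G,v,\sigma,q)$; by Lemma \ref{lemma:SwitchProperColour}, $\tau\in\Omega_{G,k}$. Let $W_\sigma=\sigma^{-1}(c)\cup\sigma^{-1}(q)$ and $W_\tau=\tau^{-1}(c)\cup\tau^{-1}(q)$. Since $\tau$ agrees with $\sigma$ outside $\mathbold{Q}$, while on $V(\mathbold{Q})\subseteq W_\sigma$ it merely interchanges the colour classes of $c$ and $q$, we get $W_\tau=W_\sigma$, and $\mathbold{Q}$ is precisely the connected component of $v$ in the induced subgraph $G[W_\sigma]=G[W_\tau]$. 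From this I would extract two consequences: (i) for either of $\sigma,\tau$ there is a $v$--$u$ path all of whose vertices are coloured with $\{c,q\}$ if and only if $u\in V(\mathbold{Q})$; and (ii) since $\tau_v=q$ and $W_\tau=W_\sigma$, the disagreement graph $\mathbold{Q}(G,v,\tau,c)$ equals $\mathbold{Q}$, so ${\tt Switching}(G,v,\tau,c)$ swaps the colours $c,q$ back on $V(\mathbold{Q})$ and returns $\sigma$; that is, ${\tt Switching}(G,v,\cdot,c)\circ{\tt Switching}(G,v,\cdot,q)$ is the identity on $\{\sigma\in\Omega_{G,k}:\sigma_v=c\}$, and symmetrically ${\tt Switching}(G,v,\cdot,q)\circ{\tt Switching}(G,v,\cdot,c)$ is the identity on $\{\rho\in\Omega_{G,k}:\rho_v=q\}$.

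Next I would use (i) to control where the maps land. Take $\sigma\in S_q(c,c)$. By definition of $S_q(c,c)$ there is no $v$--$u$ path coloured only with $\{c,q\}$, so by (i) $u\notin V(\mathbold{Q})$; hence $\tau(u)=\sigma(u)=c$ and $\tau(v)=q$, so $\tau\in\Omega(q,c)$. Applying (i) to $\tau$ and using $W_\tau=W_\sigma$ (so $u$ is again outside the component of $v$), there is no $v$--$u$ path coloured only with $\{c,q\}$ under $\tau$, i.e. $\tau\in S_c(q,c)$. Thus ${\tt Switching}(G,v,\cdot,q)$ maps $S_q(c,c)$ into $S_c(q,c)$. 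Running the identical argument with ${\tt Switching}(G,v,\cdot,c)$ on a colouring $\rho\in S_c(q,c)$ — where now $\rho_v=q$ and $c$ plays the role of the switching colour — shows it maps $S_c(q,c)$ into $S_q(c,c)$. Combining these two containments with the two-sided identity from (ii), the two maps restrict to mutually inverse maps between $S_q(c,c)$ and $S_c(q,c)$, hence ${\tt Switching}(G,v,\cdot,q):S_q(c,c)\to S_c(q,c)$ is a bijection.

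The only step with any substance is the identity $W_\tau=W_\sigma$ — that the set of vertices using the two ``active'' colours is unchanged by the switch — since this is what makes the disagreement graph stable and lets the inverse again be a switching. Everything else is unwinding the definitions of $\mathbold{Q}$ and of $S_q(c,c),S_c(q,c)$, together with Lemma \ref{lemma:SwitchProperColour} for properness, so I do not anticipate a genuine obstacle.
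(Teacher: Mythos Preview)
Your proposal is correct and follows essentially the same approach as the paper: exhibit ${\tt Switching}(G,v,\cdot,c)$ as the inverse of ${\tt Switching}(G,v,\cdot,q)$, having first checked that the latter lands in $S_c(q,c)$. Your version is in fact more careful than the paper's --- you make explicit the structural reason (the invariance $W_\sigma=W_\tau$, hence $\mathbold{Q}(G,v,\tau,c)=\mathbold{Q}(G,v,\sigma,q)$) behind the involution identity and behind the preservation of the ``no $\{c,q\}$-coloured $v$--$u$ path'' property, whereas the paper simply asserts the key observation that $\tau={\tt Switching}(G,v,\xi,q)$ implies $\xi={\tt Switching}(G,v,\tau,c)$ and appeals to the definitions for the rest.
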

\begin{proof}
For any $\sigma\in S_q(c,c)$, it holds that ${\tt Switching}(G,v,\sigma,q)\in S_c(q,c)$. 
This follows from Lemma \ref{lemma:SwitchProperColour} and the definition of the sets
$S_q(c,c)$ and $S_c(q,c)$.

It suffices to show that the mapping  ${\tt Switching}(G,v,\cdot,q):S_q(c,c)\to S_c(q,c)$ it is {\em one-to-one}
and it  is {\em surjective}, i.e. it has range $S_c(q,c)$. For showing both properties we  use the following observation:
If for some  $\tau\in S_c(q,c)$ and    $\xi\in S_q(c,c)$ it holds that  $\tau={\tt Switching}(G,v,\xi,  q)$,
then it also holds that $\xi={\tt Switching}(G,v,\tau,c)$.

As far as  surjectiveness is regarded, it suffices to have that  for  every $\tau\in S_c(q,c)$ there exists
 $\xi\in S_q(c,c)$ such that ${\tt Switching}(G,v,\xi,  q)=\tau$.
From the above observation we get that each $\tau\in S_c(q,c)$ is 
the image of $\xi\in S_q(c,c)$ for which it holds that 
$\xi={\tt Switching}(G,v,\tau,c)$.  
Furthermore, we observe that this $\xi$ is unique. This implies that
${\tt Switching}(G,v,\cdot, q)$ is {\em one-to-one}, too.

The lemma follows.
\end{proof}

\begin{theoremproof}{\ref{theorem:STEPAccuracy}}
Let $X,Y$ be the input and the output of ${\tt Update}$,  respectively.  $X$ is distributed uniformly at 
random among the $k$-colourings of $G$.  Also, let $Z$ be a random variable distributed as in
$\nu$, the uniform distribution over the good $k$-colourings of $G$.

The theorem will follow by providing a coupling of $Z$ and $Y$
such that 
\begin{displaymath}
\Pr[Z\neq Y]\leq \alpha.
\end{displaymath}

\noindent
First, we need the following observations:
For any $q,c\in [k]$ such that 
$c\neq q$, it holds that
\begin{equation}
\Pr[Z(v)=q \ | \ Z(u)=c]=\Pr[X(v)=q \ |\ X(u)=c,X(v)\neq c]=(k-1)^{-1}\label{eq:equationA}
\end{equation}
and 
\begin{equation}
\Pr[X(v)=X(u)=c \ |\ \textrm{$X$ is bad }]=k^{-1}.\label{eq:equationB}
\end{equation}
All the above equalities follow due to symmetry between the colours. Also, it is direct to show that
\begin{equation}\label{eq:equationC}
\Pr[Y(v)=q \ | \ X(u)=c]=(k-1)^{-1}.
\end{equation}
In particular, \eqref{eq:equationC} holds because  $Y(v)$ is set according to the following rules:
 if $X$ is good, then we have  that $X=Y$ and   \eqref{eq:equationA} holds. On the other hand, if $X$ is bad and $X(u)=c$, 
then $Y(v)$ is chosen uniformly at random from $[k]\backslash\{c\}$.

Now we are going to describe the coupling.  We need to involve the variable $X$ in the coupling,
since $Y$ depends on it.
At the beginning, we set  $Z(u)=X(u)$,  also we set $Z(v)=Y(v)$. 
From \eqref{eq:equationA},  \eqref{eq:equationB} and \eqref{eq:equationC},
it is direct that  $Z(u)$ and $Z(v)$ are set according to the appropriate distribution.

We need to consider two cases, depending on whether $X$ is a good or a bad colouring.
For each case we have different couplings.
Then it holds that
\begin{equation}\label{eq:DistinguishGoodBad}
\Pr[Y\neq Z]\leq \Pr[Y\neq Z \ | \ \textrm{$X$ is good}]+\Pr[Y\neq Z\ | \ \textrm{$X$ is bad}].
\end{equation}

\noindent
If $X$ is good,  then it is distributed uniformly at random among the good colourings of
$G$. That is, $X$ and $Z$ are identically distributed.   That is, if $X$ is good, then  there is a coupling 
such that $X=Z$  with probability 1. Also, from  ${\tt  Update}$ we have that $X=Y$.   
It is direct that if $X$ is good, then there is a coupling such that 
\begin{equation}\label{eq:CoupleXYZ-XGood}
\Pr[Y\neq Z \ |\ \textrm{$X$ is good}]=0.
\end{equation}

\noindent
On the other hand, if $X$ is a bad colouring, the situation is as follows:
If $X(u)=X(v)=c$, for some $c\in [k]$, then
$Z(u)=c$ and $Z(v)=q$ for some $q\in [k]\backslash\{c\}$
and $Y(v)=q$.
We let the event $E_{c,q}=$ ``$X(u)=X(v)=Z(u)=c$ and $Y(v)=Z(v)=q$
while $X\in S_q(c,c)$ and $Z\in S_c(q,c)$".
Also,  let the event $E=\bigcup_{c,q\in [k]:c\neq q}E_{c,q}$.

In the coupling we are distinguishing the cases where the event $E$
occurs from those that is does  not. For each case we have different couplings.  It holds that
\begin{equation}\label{eq:CouplingBadBound}
\Pr[Y\neq Z \ |\ \textrm{$X$ is bad}]\leq \Pr[Y\neq Z\ |\ E, X \textrm{ is bad}]+\Pr[\bar E\ |\ X \textrm{ is bad}],
\end{equation}
where $\bar E$ is the complement of $E$.  The theorem  follows by showing that the r.h.s. of 
\eqref{eq:CouplingBadBound}  is  at most $\alpha$. From the  definition of the quantity $\alpha$
(Definition \ref{assm:isomorphism}),   it holds that 
\[
\Pr[X\in S_q(c,c) \ |\ X(u)=X(v)=c]\geq 1-\alpha,
\]
also, it holds that 
\[
\Pr[Z\in S_c(q,c)\ |\ Z(u)=c, Z(v)=q]\geq 1-\alpha, 
\]
for any $c,q\in [k]$ and $q\neq c$.  The  above implies that,  when $X$ is bad, there is a coupling 
such that 
\begin{equation}\label{eq:PrE}
\Pr[E\ |\  X\textrm{ is  bad}]\geq 1-\alpha.
\end{equation}
It remains to describe a coupling of $Z,Y$, when $X$ is bad and $E$ occurs 
(i.e. bound $\Pr[Y\neq Z\ |\ E, X \textrm{ is bad}]$).  For this, we need the following claim.

\begin{myclaim}\label{claim:YUniformE}
Conditional on the event $E_{c,q}$, $Y$ is distributed uniformly over
$S_c(q,c)$.
\end{myclaim}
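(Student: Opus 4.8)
The plan is to identify the conditional law of the input colouring $X$ on the event $E_{c,q}$ and then transport it through ${\tt Switching}$ using Lemma \ref{lemma:isomorphism}. First I would disentangle the clauses that make up $E_{c,q}$. The colouring $X$ is uniform over $\Omega_{G,k}$; the colour $q$ used inside ${\tt Switching}$ is drawn uniformly from $[k]\setminus\{X(v)\}$ and, given $X(v)=c$, this draw is independent of $X$; and $Y(v)=q$ is automatic once $X(v)=c$ and the drawn colour is $q$, since $v$ lies in the disagreement graph $\mathbold{Q}(G,v,X,q)$ and ${\tt Switching}$ recolours its $c$-class with $q$. Hence the part of $E_{c,q}$ bearing on $X$ and $Y$ reduces to the single event ``$X\in S_q(c,c)$'' (recall $S_q(c,c)\subseteq\Omega_{c,c}$, so this already forces $X(v)=X(u)=c$ and the absence of a $v$--$u$ path bicoloured with $c,q$), conjoined with the independent event that the internal draw equals $q$. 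Conditioning a uniform element of $\Omega_{G,k}$ on landing in the subset $S_q(c,c)$ yields the uniform distribution on $S_q(c,c)$, and the independent choice of $q$ does not perturb it.

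Next I would check that the remaining clauses of $E_{c,q}$, namely $Z(u)=c$, $Z(v)=q$ and $Z\in S_c(q,c)$, do not reweight the law of $X$. The first two are forced by the coupling, which sets $Z(u)=X(u)$ and $Z(v)=Y(v)$, so on the event already described they hold automatically. For the third, in the coupling $Z$ may be drawn, given its two pinned colours, independently of everything constructed so far, so $\Pr[Z\in S_c(q,c)\mid Z(u)=c,Z(v)=q]$ is a constant depending on neither the realised $X$ nor $Y$; conditioning on this clause therefore scales every outcome's probability by the same factor and leaves $X$ uniform over $S_q(c,c)$. Consequently, conditional on $E_{c,q}$, the variable $X$ is uniformly distributed over $S_q(c,c)$ and $Y={\tt Switching}(G,v,X,q)$.

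Finally I would invoke Lemma \ref{lemma:isomorphism}, which says that ${\tt Switching}(G,v,\cdot,q)$ restricts to a bijection from $S_q(c,c)$ onto $S_c(q,c)$, with inverse ${\tt Switching}(G,v,\cdot,c)$. A bijection between finite sets carries the uniform distribution on its domain to the uniform distribution on its range and preserves cardinality, so for every $\tau\in S_c(q,c)$,
\[
\Pr[Y=\tau\mid E_{c,q}]=\Pr\bigl[X={\tt Switching}(G,v,\tau,c)\mid E_{c,q}\bigr]=\frac{1}{|S_q(c,c)|}=\frac{1}{|S_c(q,c)|},
\]
which is exactly the assertion of the claim. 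I expect the only genuine obstacle to be the bookkeeping in the first two paragraphs: verifying that none of the several clauses defining $E_{c,q}$ — in particular those concerning $Z$ and the internal randomisation of the switching colour — biases the conditional law of $X$ away from uniformity on $S_q(c,c)$. Once that is in place, the conclusion is immediate from Lemma \ref{lemma:isomorphism}.
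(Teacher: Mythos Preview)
Your proof is correct and follows the same route as the paper: show that, conditional on $E_{c,q}$, the input $X$ is uniform on $S_q(c,c)$, then push this through the bijection of Lemma~\ref{lemma:isomorphism}. The paper's own argument simply asserts the uniformity of $X$ on $S_q(c,c)$ under $E_{c,q}$ without justification, whereas you take the extra (and welcome) care of checking that the clauses of $E_{c,q}$ involving $Z$ and the internal random draw of $q$ do not bias the conditional law of $X$; this bookkeeping is genuinely needed and your handling of it is sound.
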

\begin{proof}
From Lemma \ref{lemma:isomorphism}  we have that ${\tt Switching}(G,v,\cdot, q):S_q(c,c) \to S_c(q,c)$
is a bijection. The existence of this bijection implies that $|S_q(c,c)|=|S_c(q,c)|$. 
Also, for each $\tau\in S_c(q,c)$ there is a unique $\xi \in S_q(c,c)$ such that ${\tt Switching}(G,v,\xi, q)=\tau$.
Clearly $\Pr[Y=\tau\ |\ E_{c,q}]=\Pr[X=\xi \ |\  E_{c,q}]$.

Conditional on the event $E_{c,q}$, the random variable $X$ is distributed
uniformly over $S_q(c,c)$.  Thus, $\Pr[Y=\tau|E_{c,q}]=|S_q(c,c)|^{-1}=|S_c(q,c)|^{-1}$, for any
$\tau \in S_c(q,c)$. The claim follows.
\end{proof}

\noindent
It is direct that conditional on $E_{c,q}$ the random variable $Z$ is distributed
uniformly at random in $S_c(q,c)$.  Also, observe that conditional on that $X$ is bad  and 
$E$ occurring,  we are going to have $Z(v)=Y(v)$ and $Z(u)=Y(u)$.  All these imply that 
there is a coupling of $Z,Y$ such that
\begin{equation}\label{eq:EXbadBound}
\Pr[Y\neq Z \ |\ X\text{ is bad}, E]=0.
\end{equation}
Plugging \eqref{eq:PrE} and \eqref{eq:EXbadBound} into \eqref{eq:CouplingBadBound}, we
get that 
\[
\Pr[Y\neq Z \ |\ \textrm{$X$ is bad}]\leq \alpha.
\]
The theorem follows by plugging the above bound and \eqref{eq:CoupleXYZ-XGood} into 
\eqref{eq:DistinguishGoodBad}.
\end{theoremproof}

\section{The rest of the proofs}

\subsection{Lemma \ref{lemma:SwitchProperColour}}\label{sec:lemma:SwitchProperColour}
We show that for any $\sigma \in \Omega_{G,k}$, it holds that ${\tt Switching}(G,v,\sigma,q)$ 
returns a   proper colouring of $G$. Assume the contrary, i.e.  there is  $\sigma\in\Omega_{G,k}$ such
that for $\tau={\tt Switching}(G,v,\sigma, q)$ it holds that $\tau\notin \Omega_{G,k}$.

Let the disagreement graph $\mathbold{Q}=\mathbold{Q}(G,v,\sigma, q)$. 
Since $\tau$ is non-proper is has at least one monochromatic edge.
The  monochromatic edge can be  incident either to two vertices in $\mathbold{Q}$ 
or  to  some vertex in $\mathbold{Q}$ and some vertex outside $\mathbold{Q}$.
We are going to show that neither of the  two cases can happen.

${\tt Switching}(G,v,\sigma,q)$ cannot create any monochromatic edge between two vertices in $\mathbold{Q}$. 
To see this, note that the disagreement graph $\mathbold{Q}$ is bipartite and $\sigma$ specifies exactly one colour 
for each part of the graph.  ${\tt Switching}(G,v,\sigma,q)$ just exchanges  the colours of the two parts in the graph.
Clearly this operation cannot generate a monochromatic  of the first kind. 

${\tt Switching}(G,v,\sigma,q)$   cannot cause any monochromatic edge between a vertex in 
$\mathbold{Q}$ and some vertex outside $\mathbold{Q}$, either.
This follows by the fact that the disagreement graph is maximal. That is, there is no vertex 
$w$ outside $\mathbold{Q}$  such that  $\sigma_w\in \{q,c\}$ while at the same 
time $w$ is adjacent to some vertex in $\mathbold{Q}$. Since the recolouring that
${\tt Switching}(G,v,\sigma,q)$ does, involves only vertices coloured  $c,q$,  no
monochromatic edge of the second kind can be generated, too.

The lemma follows.

\subsection{Lemma \ref{lemma:StepAc}}\label{sec:lemma:StepAc}

The time complexity of computing   ${\tt Switching}(G,v, \sigma, q)$
is dominated by the time we need to reveal the disagreement graph 
$\mathbold{Q}=\mathbold{Q}(G,v,\sigma,q)$. We will  show that
we need $O(|E(G)|)$  steps to get $\mathbold{Q}$.

We reveal  the graph $\mathbold{Q}$ in steps $j=0,\ldots, |E(G)|$.
At step $0$, we have  $\mathbold{Q}(0)$ which contains only the vertex $v$. 
Given  $\mathbold{Q}(j)$ we construct $\mathbold{Q}(j+1)$ 
as follows: Pick some edge which is incident to a vertex in $\mathbold{Q}(j)$. 
If the other end of this edge  is incident to a vertex outside $\mathbold{Q}(j)$ that is coloured
either $\sigma_v$ or $q$, then we get $\mathbold{Q}(j+1)$ by inserting
this edge and the vertex into $\mathbold{Q}(j)$. Otherwise 
$\mathbold{Q}(j+1)$ is the same as $\mathbold{Q}(j)$. 
We never pick the same edge twice in the process above.

 The lemma follows by noting that  the process has at most $|E|$ steps,
 while at the end   we get $\mathbold{Q}$.

\subsection{Theorem \ref{thrm:Accuracy}}\label{sec:thrm:Accuracy}

For $i=0,\ldots, r$ consider the following:
Let $\mu_i$ denote the uniform distribution over the $k$-colourings of $G_i$.
Also let $\hat{\mu}_i$ denote the distribution of $Y_i$, where $Y_i$ is the colouring
that the algorithm assigns to the graph $G_i$.  Finally, let $\nu_i$ denote the 
distribution of the output colouring of ${\tt Update}(G_i,v_i, u_i,X_i,k)$ where $X_i$
is distributed as in $\mu_i$.

The theorem follows by showing that that
\begin{equation}\label{eq:thrm:Accuracy:base}
|| \mu_r-\hat{\mu}_r||\leq \sum^{r-1}_{i=0}\alpha_i.
\end{equation}
Theorem \ref{theorem:STEPAccuracy} implies the following:
For every $i=1,\ldots, r$ it holds that
\begin{equation}\label{eq:STEPAccuracyRel}
|| \mu_i - \nu_{i-1}|| \leq \alpha_{i-1},
\end{equation}
It suffices to show that 
\begin{equation}\label{eq:TriangleDecom}
|| \mu_r-\hat{\mu}_r|| \leq \sum^{r}_{i=1}||\mu_i- \nu_{i-1} ||,
\end{equation}
since  it is direct that \eqref{eq:thrm:Accuracy:base} follows from \eqref{eq:STEPAccuracyRel}
and \eqref{eq:TriangleDecom}.

For getting \eqref{eq:TriangleDecom},  we are going to show  for any $i=1,\ldots, r$ the following is true:
\begin{equation}\label{eq:Sufficient4eq:TriangleDecom}
|| \nu_{i-1}-\hat{\mu}_i|| \leq ||\mu_{i-1}-\hat{\mu}_{i-1} ||.
\end{equation}
From \eqref{eq:Sufficient4eq:TriangleDecom} we get to \eqref{eq:TriangleDecom}
by working as follows:  Using the triangle inequality, we have that
\begin{eqnarray}
|| \mu_r-\hat{\mu}_r|| &\leq & || \mu_r- \nu_{r-1}||+|| \nu_{r-1}-\hat{\mu}_r|| \nonumber\\
&\leq& || \mu_r- \nu_{r-1}||+|| \mu_{r-1}-\hat{\mu}_{r-1}||. \qquad\mbox{[from \eqref{eq:Sufficient4eq:TriangleDecom}]} \nonumber
\end{eqnarray}
We work with the term $|| \mu_{r-1}-\hat{\mu}_{r-1}||$, above, in the same way as we did with
$|| \mu_r-\hat{\mu}_r||$ and so on.  This sequence of substitutions and the fact that
$||\mu_0-\hat{\mu}_0 ||=0$,  yield  \eqref{eq:TriangleDecom}.

It remains to show \eqref{eq:Sufficient4eq:TriangleDecom}.
For this, let  $X_{i-1}$ be a random $k$-colouring of the graph $G_{i-1}$ and let
$Z_{i}={\tt Update}(G_{i-1},v_{i-1},u_{i-1},X_{i-1},k).$ It is direct that $Z_i$ is distributed 
as in $\nu_{i-1}$.
Let $Y_{i-1}, Y_i$ be the colouring that  the algorithm assigns to the graphs $G_{i-1}$, $G_i$, 
respectively. Clearly it holds that $Y_i={\tt Update}(G_{i-1},v_{i-1},u_{i-1}, Y_{i-1},k)$

So as to bound $|| \nu_{i-1}-\hat{\mu}_i||$ we consider the following coupling 
of $Z_i$ and $Y_i$: We couple $X_{i-1}$ and $Y_{i-1}$ {\em optimally}. Then  from $X_{i-1}$
and $Y_{i-1}$,  we get $Z_i$ and $Y_{i}$, respectively, as described above. 
By the coupling lemma we have the following
\begin{equation}
|| \nu_{i-1}-\hat{\mu}_i||  \leq \Pr[Z_i\neq Y_i] 
\leq \Pr[Z_i\neq Y_i \ | \  X_{i-1}=Y_{i-1}] + \Pr[X_{i-1}\neq Y_{i-1}]. \label{eq:1821A}
\end{equation}
It is direct that if $X_{i-1}=Y_{i-1}$, then there is a coupling which yield $Z_i= Y_i$ with probability 1.
That is, $\Pr[Z_i\neq Y_i \ |\  X_{i-1}=Y_{i-1}]=0$. Also,  since we have coupled  $X_{i-1}$ and $Y_{i-1}$
optimally,  it holds that 
\begin{equation}\label{eq:OptCouplXY}
\Pr[X_{i-1}\neq Y_{i-1}]=||\mu_{i-1}-\hat{\mu}_{i-1} ||.
\end{equation}
Plugging \eqref{eq:OptCouplXY} into \eqref{eq:1821A} and using the fact that 
$\Pr[Z_i\neq Y_i \ |\ X_{i-1}=Y_{i-1}]=0$, we get \eqref{eq:Sufficient4eq:TriangleDecom}.
The theorem follows.

\subsection{Lemma \ref{lemma:seqsubprop}}\label{sec:lemma:seqsubprop}
 It suffices to show that with probability at least $1-n^{-2/3}$ for any two cycles in $\G$, of 
 maximum length  $(\log_d n)/9$  do not share edges and vertices with each 
other.  Assume the opposite, i.e. that there are at least two  such cycles that intersect with 
each other. Then, there  must exist a subgraph of $\G$ that contains at most $(2/9) \log_d n$ 
vertices  while the number of edges exceeds  by 1, or more, the number of vertices.

Let $D$ be the event that in $\G$ there exists a set of $r$ vertices which have 
$r+1$ edges between them, for $r \leq (2\log_d n)/9$.  The lemma follows
by showing that $\Pr[D]\leq n^{-2/3}$.

We have the following:
\begin{displaymath}
\begin{array}{lcl}
\Pr[D] & \leq & \displaystyle \sum_{r=1}^{(2/9)\log_d n} {n \choose r} { {r \choose 2} \choose r+1} (d/n)^{r+1} (1-d/n)^{{r \choose 2}-(r+1)}
\\ \vspace{-.3cm} \\
& \leq & \displaystyle \sum_{r=1}^{(2/9) \log_d n} \left( \frac{n e}{r} \right )^r \left ( \frac{r^2 e}{2(r+1)} \right )^{r+1} (d/n)^{r+1} 
\leq 
\frac{e \cdot d}{2n}\sum_{r=1}^{(2/9) \log_d n} r\left ( \frac{e^2 d}{2} \right )^{r}
\\ \vspace{-.3cm}\\
& \leq & 
\displaystyle \frac{C\log n}{n} \left (\frac{e^2d}{2} \right)^{(2/9)   \log_d n}.
\end{array}
\end{displaymath}
Let $\gamma=\frac{2  \log(e^2d/2)}{9\log d}$. The quantity in the r.h.s. of the last inequality, above,  
 is of order $\Theta(n^{\gamma-1}\log n)$.  
Taking large $d$ it holds that $\gamma < 0.25 $. Consequently,  we get that  $\Pr[D]\leq n^{-2/3}$.
The lemma follows.

\paragraph{Acknowledgement.} The author of this work would like to thank Amin Coja-Oghlan 
and Elchanan Mossel for the fruitful discussions we had. Also, I would like to thank the anonymous
reviewers for helping me  improved the content of the paper.


\begin{thebibliography}{00}	

\bibitem{PlantedModel}
D.~Achlioptas and A.~Coja-Oghlan, {\em Algorithmic Barriers from Phase Transitions. }
In proc. of FOCS 2008: 793-802.

\bibitem{coupling-lemma} D.~Aldous. 
{\em Random walks of finite groups and rapidly mixing Markov chains.} 
In: S\'eminaire de Probabilit\'es XVII 1981/82, Springer-Verlag, Berlin. pp. 243-297.


\bibitem{disperc}
J. van den Berg and C. Maes. {\em Disagreement percolation in the study of Markov fields.}
Annals of Probability {\bf 22}, (1994) pp. 794-763.

\bibitem{SP-heuristic} A.~Braunstein, M.~M\'ezard, R. Zecchina: 
{\em Survey propagation: an algorithm for satisfiability.}
Random Structures and Algorithms {\bf 27} (2005) 201-226


\bibitem{old-GnpSampling}M.~Dyer, A.~Flaxman, A.~M.~Frieze and
E.~Vigoda. {\em Random colouring sparse random graphs with fewer 
colours than the maximum degree}. Random Struct. and Algorithms
{\bf 29}, pp. 450-465,  2006.


\bibitem{COLPaperA}
M.~Dyer, A.~M.~Frieze, A.~Hayes, E.~Vigoda. {\em Randomly colouring constant
degree graphs.} In proc. of 45th FOCS, pp 582-589, 2004.


\bibitem{myMCMC}
C.~Efthymiou. {\em MCMC sampling colourings and independent sets of G(n, d/n) near uniqueness threshold. }
In Proc. of 25th ACM-SIAM, Symposium on Discrete Algorithms (SODA'14), pp 305-316, 2014.

\bibitem{mySampling}
C.~Efthymiou, {\em  A simple algorithm for random colouring $G(n, d/n)$ using $(2 + \epsilon)d$ colours.}
In Proc. of 23rd ACM-SIAM  Symposium on Discrete Algorithms  (SODA'12) pp 272-280, 2012.


\bibitem{mySamplingUp2Uniqueness}
C.~Efthymiou, {\em  Switching colouring of $G(n,d/n)$ for sampling  up to Gibbs Uniqueness Threshold.}
In proc  of 22nd European Symposium on Algorithms (ESA), pp 371-381,  2014.


\bibitem{RCLocalySparse}
A.~M.~Frieze, J.~ Vera.
{\em On randomly colouring locally sparse graphs.}
Discrete Mathematics \& Theoretical Computer Science {\bf 8} (1), pp 121-128 2006


\bibitem{RCLattice}
L.~A.~Goldberg, R.~A. Martin, M.~Paterson.
{\em Strong Spatial Mixing with Fewer Colors for Lattice Graphs.}
SIAM J. Comput. {\bf 35} (2), pp 486-517 2005


\bibitem{TreeUniq}
J.~Jonasson. {\em Uniqueness of Uniform Random Colorings of Regular Trees}
In Statistics \& Probability Letters {\bf 57}, pp 243--248, 2001.

\bibitem{janson} S.~Janson, T.~Luczak and A.~Ruci\'nski. 
{\em Random graphs.} 
Wiley and Sons, Inc. 2000.


\bibitem{jerrum} M.~Jerrum and A.~Sinclair. 
{\em The Markov chain Monte Carlo method:an approach to approximate counting and integration.}
In Approximation Algorithms for NP-hard problems (Dorit, Hochbaum ed.) PWS 1996.

\bibitem{MartinelliTrees}
F.~Martinelli, A.~Sinclair and D.~Weitz.
Fast mixing for independent sets, colorings and other models on trees. 
In proc. of 15th SODA, pp 456-465, 2004.


\bibitem{molloy-STOC12}
M.~Molloy. {\em The freezing threshold for k-colourings of a random graph.}
In proc. of the 44th ACM Symposium on Theory of Computing (STOC'12),  pp 921-930, 2012.

\bibitem{mossel-colouring-gnp} E.~Mossel and A.~Sly.
{\em Gibbs Rapidly Samples Colorings of $G_{n,d/n}$}. 
In  Probability Theory and Related Fields, {\bf 148}, No 1-2, 2010.


\bibitem{RCGeneralBound} E.~Vigoda.
{\em Improved bounds for sampling colorings. }
Journal of Mathematical Physics, {\bf 41} (3), pp 1555-1569, 2000. 
A preliminary version appears in FOCS 1999. 

\bibitem{GraphModels}
M.~J.~Wainright  and M.~Jordan. {\em Graphical Models, Exponential Families and Variational Inference.}
In Foundations and Trends in Machine Learning {\bf 1} (1-2), 2008.

\bibitem{Weitz}
D.~Weitz. {\em Counting independent sets up to the tree threshold}
In the proc.of  38th Annual {ACM} Symposium on Theory of Computing
(STOC)  2006, pp 140--149.




\bibitem{YinFPAUS}
Y.~Yin, C.~Zhang, {\em Sampling colorings almost uniformly in sparse random graphs}.
In http://arxiv.org/abs/1503.03351


\bibitem{ZdKrzk}
L. Zdeborov\'a, F. Krzakala. 
{\em Phase transition in the Coloring of Random Graphs}. 
 Phys. Rev. E {\bf 76}, 031131(2007). preprint: arxiv:0704.1269 .
\end{thebibliography}
\end{document}